\documentclass{article}

\usepackage{PRIMEarxiv}

\usepackage[utf8]{inputenc} 
\usepackage[T1]{fontenc}    

\usepackage{url}            
\usepackage{booktabs}       
\usepackage{amsfonts}       
\usepackage{nicefrac}       
\usepackage{microtype}      
\usepackage{lipsum}
\usepackage{fancyhdr}       
\usepackage{graphicx}       

\usepackage{amsmath,amssymb}
\usepackage[colorlinks,bookmarks,urlcolor=blue,citecolor=blue,linkcolor=blue]{hyperref}
\usepackage{siunitx}
\newtheorem{theorem}{Theorem}[section]
\newtheorem{lemma}[theorem]{Lemma}

\newenvironment{proof}{\paragraph{Proof:}}{\hfill$\square$}

 \newcommand{\lsup}[1]{\underset{#1\to 0}{\overline{\lim}}}

\newcommand{\norm}[1]{\left\| #1 \right\|}

\pagestyle{fancy}
\thispagestyle{empty}
\rhead{ \textit{ }} 


\title{Extreme first passage times for populations of identical rare events
}

\author{
  James MacLaurin \\
  Department of Mathematical Sciences \\
  New Jersey Institute of Technology \\
  Newark USA\\
   maclauri@njit.edu\\
   \And
 Jay Newby \\
  Department of Mathematical and Statistical Sciences \\
 University of Alberta \\
  Edmonton\\
  jnewby@ualberta.ca \\
}

\begin{document}
\maketitle

\begin{abstract}
A collection of identical and independent rare event first passage times is considered.
The problem of finding the fastest out of $N$ such events to occur is called an extreme first passage time.
The rare event times are singular and limit to infinity as a positive parameter scaling the noise magnitude is reduced to zero.
In contrast, previous work has shown that the mean of the fastest event time goes to zero in the limit of an infinite number of walkers.
The combined limit is studied.
In particular, the mean time and the most likely path taken by the fastest random walker are investigated.
Using techniques from large deviation theory, it is shown that there is a distinguished limit where the mean time for the fastest walker can take any positive value, depending on a single proportionality constant.
Furthermore, it is shown that the mean time and most likely path can be approximated using the solution to a variational problem related to the single-walker rare event.
\end{abstract}

\section{Introduction}

Stochasticity is an ingredient in a mathematical model best used when it imbues a dynamical system with behavior that a simpler deterministic model lacks. 
This is best seen when the connection between stochastic and deterministic is explicit through a law-of-large-numbers limit, in which the noise, whatever its sources, vanishes in a so called weak-noise limit.
For simplicity, we will suppose that this limit is characterized by a positive nondimensional parameter, $\epsilon > 0$, where in the limit $\epsilon \to 0^+$, the stochastic process converges to a deterministic dynamical system.
For any behavior that exists in the stochastic model but not the deterministic, this limit is singular. 
The random time over which an event that characterizes this behavior occurs---commonly called a first passage time or hitting time---diverges to infinity as $\epsilon \to 0^+$.
Events of this type are often called rare events. 
Their study is informed through exponential asymptotic analysis, namely large deviation theory \cite{freidlin1998random,dembo2009large}.

There are many situations, particularly those involving exponential proliferation, where it is not the mean first passage time of a single event that is of interest, rather it is the first out of $N$ identical rare events to occur that is relevant \cite{schuss2019redundancy}. 
For example, the dynamics of the initial outbreak in an epidemic are highly sensitive to when and where the first infection occurs \cite{WEISS1971}.
For another example, suppose that a tumor is comprised of cells with a single common ancestor. 
In this case, the first-to-arrive random walker is the first cell to accumulate enough mutations to cause uncontrolled proliferation \cite{haughey2023first}.
Another example is given by predator-prey dynamics where the prey is sufficiently skilled at evading predators that predation is a rare event \cite{kurella2015asymptotic}.
Many examples can be found in the calcium signalling literature. 
The binding of calcium to buffers in exocytosis can be characterized as a first-passage-time problem for a large number of stochastic walkers \cite{smith1998simple,matveev2018comparison}. 
Similarly calcium waves in the cell are often kicked off by the spontaneous noise-driven opening of a single channel \cite{ramlow2023integrate}.
There are many examples of bacteria entering a non proliferating dormant state to avoid the harmful effects of an antibiotic that kills the bacterium during cell division.
In some cases, switching into and out of the dormant state is governed by a rare event in a certain gene regulatory circuit \cite{newby2012isolating,newby2015bistable}. 
Once the antibiotics are no longer present, switching out of this state allows the dormant subpopulation to begin proliferating once again, driven by the first bacterium to make the switch.

The problem can be stated succinctly as follows.
Given $N$ independent and identical random walkers (assumed to be Markov processes in continuous time and space) together with a first passage time event characterized by the random variable $\tau > 0$, what is the mean of $\tau^N_{\rm first} = \min_n \{\tau_n, n=1,\ldots, N\}$?
Early work on the statistics of extreme diffusion problems such as these was done by Varadhan \cite{varadhan1967diffusion}, and it is known that $\mathbb{E}[\tau^N_{\rm first}]$ scales like $C/\log(N)$ as $N\to \infty$, where the constant $C$ depends on the statistics of the single first passage time.
This paper focuses on what happens in the dual limit $N\to \infty$ and $\epsilon \to 0^+$ (i.e. when the mean of the individual first passage times blows up).

Holcman \textit{et al} \cite{REINGRUBER2014} recently proposed extreme first passage times as a useful characterization of biological processes, particularly at the level of cellular physiology, for example, synaptic transmission \cite{matveev2018comparison} and the activation of genes by transcription factors \cite{REINGRUBER2014}. 
Lawley \textit{et al} \cite{lawley2020distribution,lawley2020probabilistic,linn2022extreme} have recently studied the extreme first-passage-time statistics of a large number of random walkers in a range of scenarios. 
Most of these studies concern situations where $N$ is the dominant asymptotic parameter. That is, it is assumed that the magnitude of the noise experienced by an individual walker is $O(1)$, and the limiting behavior of the optimal trajectory and first-hitting-time are studied as $N\to\infty$, often using the theory of extreme value statistics \cite{bray2013persistence}. A problem with this approach is that it implies that the first-hitting-particle almost instantaneously hits the target. It goes there so quickly that the effect of advection is negligible. Indeed some scholars have argued that large jumps of Brownian motions such as this are unphysical \cite{keller2004diffusion}.

In fact, Lawley and Madrid \cite{madrid2020competition} have extended these works to also consider the effect of small noise.  Relevant to our paper, they argue that the relative benefit of increased numbers of walkers is relatively weak (scaling as $1/\log(N)$), and so it remains unclear if many biological systems are in this regime. We also note the work of Weiss \cite{weiss1983order}, who determined first-hitting-time estimates when the walkers are initially uniformly distributed throughout the domain.

Let us assume that the single walker rare event starts at a deterministic attractor and terminates at some set that is well separated from the deterministic attractor but contained within its basin of attraction.
One simple example is a 1D OU process, $dX_t = -\mu X_tdt + \sqrt{2\epsilon}dW$, where the attractor is at $x=0$ and the target is at some $x=L > 0$.
This example is sometimes referred to as escape from a potential well.

One perspective used to describe a rare event is through a so called maximum likelihood trajectory (MLT) \cite{freidlin1998random,ludwig1975,maier1993,maier1997,newby2014}.
It is often true that a single state space trajectory describes a rare event.
That is, if one takes a stochastic system at the terminus of a rare event and follows the process back in time, examining the sequence of events that preceded the final step of the rare event, it is often found that a single trajectory describes this sequence of events.
To make this more concrete, we can imagine a distribution of possible trajectories and define the MLT as the mode, or peak, of this distribution.
It is possible to formulate asymptotic approximations of MLTs as solutions to a certain variational problem \cite{freidlin1998random,dembo2009large}.

The single walker rare event can be described in two parts: a lag phase and a transition phase \cite{maier1993}.
The rare event begins with an exponentially large (in $\epsilon^{-1}$) lag phase where for the great majority of time the process fluctuates within a $\sqrt{\epsilon}$ domain around the attractor. 
During this time, there may be a few relatively large excursions away from the attractor, but they are not large enough to attain the target $L$ and are quickly pulled back to the attractor. 
After the lag phase, there is a transitional phase characterized by a MLT that starts on the boundary of the $\sqrt{\epsilon}$ neighborhood and terminates at the target. 
The total time of the transitional phase scales like $\log(1/\sqrt{\epsilon})$ as $\epsilon \to 0$. 

If we understand the MLT for the single-walker rare event, does this trajectory also describe the minimum of $N$ identical rare events?
Perhaps the extreme first passage time is an unlikely deviation from the typical rare event, an event that is doubly rare, in some sense.
It is known that (in the limit $N\to \infty$ with $\epsilon$ fixed) the MLT for an extreme event is a geodesic, relative to the geometry induced by the diffusion coefficients \cite{varadhan1967diffusion}. 
For example, simple additive white noise (constant diffusivity) results in a MLT that is a direct linear path from the starting point to the target.
This effectively means that there are so many walkers that the first to arrive is perturbed by Brownian fluctuations that send it straight to the target in a short period of time, and therefore the effect of a "drift velocity" term on the MLT is negligible once there are enough walkers, even if this velocity pushes the process away from the target.

We are left with the general possibility that for $N$ finite, the first-to-arrive walker follows the single walker MLT in the limit $\epsilon \to 0$.
On the other hand, for $\epsilon$ fixed and $N\to \infty$, the first-to-arrive walker follows a straight line path which will, in general, be very different than the single walker MLT.~~
Is it possible to describe the extreme first passage time MLT in the dual limit $\epsilon \to 0$ and $N\to\infty$?
Does the trajectory smoothly transform from the single walker MLT to the straight line path for sufficiently large $N$?

Another perspective on the problem focuses on the asymptotic value of $\mathbb{E}[\tau^N_{\rm first}]$.
Suppose we have a rare event such that $\tau_n \simeq \text{Exponential}(\lambda_{\epsilon}),$ asymptotically as $\epsilon \to 0^+$. 
Previous work on this problem identified two regimes \cite{madrid2020competition}.
For fixed $N > 0$, $\mathbb{E}[\tau^N_{\rm first}] \propto \frac{1}{\lambda_{\epsilon} N}$ as $\epsilon \to 0^+$. 
For fixed $\epsilon > 0$, we know that $\mathbb{E}[\tau^N_{\rm first}]$ scales like $1/\log(N)$ as $N\to \infty$. 
We are interested in the crossover between the $1/N$ regime and the $1/\log(N)$ regime.

In this paper, we will show that if we combine these two perspectives a single picture emerges wherein there are three regimes for the MLTs and $\mathbb{E}[\tau^N_{\rm first}]$ in the double limit $\epsilon\to 0$ and $N\to \infty$.
For smaller values of $N$ we have the first regime where the first-to-arrive walker follows the single walker MLT.~~  
In this case, $\mathbb{E}[\tau^N_{\rm first}]$ is decreased like $1/N$ by eliminating time spent in the exponentially long lag phase where the walker fluctuates near the attractor.
As $N$ is increased sufficiently to eliminate the lag phase, we enter the second regime.
Reducing $\mathbb{E}[\tau^N_{\rm first}]$ now requires shortening the duration of the transition phase of the rare event.
The extreme MLT smoothly deforms away from the single walker MLT in order to speed up the transition time.
Eventually, for $N$ sufficiently large, it is no longer possible to improve $\mathbb{E}[\tau^N_{\rm first}]$ by deforming the extreme MLT because it is a straight line from the attractor to the target.
In this regime, $\mathbb{E}[\tau^N_{\rm first}]$ is reduced like $1/\log(N)$ by a transition path that moves faster and faster along the direct linear extreme MLT.

To our knowledge, the second regime described above was previously unknown. 
Moreover, we find that in the second regime, we can pick any $\tau_{\infty} > 0$ and it is always possible to take the double limit $\epsilon\to 0$ and $N\to \infty$ in such a way that $\mathbb{E}[\tau^N_{\rm first}] \to \tau_{\infty}$.

The remainder of the paper is organized as follows. 
First, in Section \ref{sec:formulation}, we present a precise formulation of the extreme first passage time problem.
Second, we include some necessary background from large deviation theory in Section \ref{sec:background}.
We then present our main result in Section \ref{sec:main}.
Finally, we illustrate our main result in Section \ref{sec:example} with a simple example using an Ornstein-Uhlenbeck process where many exact results are obtainable.
We conclude by discussing phenomena that are known to exist in general nonlinear systems along with numerical strategies for computing MLTs.

\section{Mathematical formulation of the extreme first hitting time problem}
\label{sec:formulation}

Consider the diffusion of $N$ independent random walkers in $\mathbb{R}^d$ (continuous in time and space), each satisfying the stochastic differential equation
\begin{equation}
\label{eq:sde_gen}
dX^j_t = f( X^j_t)dt + \sqrt{2\epsilon} dW^j_t,
\end{equation}
with $1\leq j \leq N$. Here $W^j_t := (W^{1,j}_t, \ldots, W^{d,j}_t)$ is $d$-dimensional Brownian Motion. It is assumed that $0$ is a globally-attracting fixed point of the dynamics (when $\epsilon = 0$). The initial condition is assumed to be zero for every walker. 

The main goal of this paper is to determine the probabilistic distribution of the first-hitting-time of some small target region centered at $L \in \mathbb{R}^d$. In particular, we wish to understand how the expected first hitting time depends on $\epsilon$ (which is asymptotically small) and $N$ (which is asymptotically large). Concurrently, we wish to understand the most likely trajectory following by the first walker to hit the target.

Define a small ball about $L$ to be
\begin{equation}
B_{\delta}(L) = \big\lbrace z \in \mathbb{R}^d : \| z - L \| \leq \delta \big\rbrace ,
\end{equation}
for some $\delta = \delta(\epsilon ,N)$ that converges to zero very slowly as $\epsilon \to 0$ and $N\to\infty$. As long as $\delta$ converges to 0 relatively slowly, the leading order asymptotic estimate of the first hitting-time will only depend on $\epsilon$ and $N$, and it won't depend on the specific rate that $\delta \to 0^+$. 
time.

Let $\tau^N_{\rm first}$ be the first hitting time of $L$ over all $N$ walkers, and let $\tau_j$ be the first hitting time of a single walker. Then, we have
\begin{align}
\label{eq:fpt_def1}
\tau_j &= \inf\big\lbrace t \geq 0: X^j_t \in  B_{\delta}(L) \big\rbrace, \\
\label{eq:fpt_def2}
\begin{split}
\tau^N_{\rm first} 
&= \inf \left\{ \tau_1, \ldots, t_N \right\}.
\end{split}
\end{align}
Throughout this paper we assume that $f: \mathbb{R}^d \to \mathbb{R}^d$ is Lipschitz, i.e. there exists a constant $C_f > 0$ such that for all $x,y \in \mathbb{R}^d$
\begin{align}\label{eq: f lipschitz}
\norm{f(x) - f(y)} \leq C_f \norm{x-y}
\end{align}

\section{Background on Large Deviations Theory}
\label{sec:background}
Classically, the asymptotic behavior of stochastic processes under weak noise conditions (including first-hitting-times) has been studied using Freidlin-Wentzell Theory (also referred to as the Theory of Large Deviations \cite{dembo2009large}). We now briefly recap the relevant aspects of this theory (mostly following the treatment in \cite{dembo2009large}). The approach builds upon an asymptotic description of the probability law for trajectories that have a fixed (non random) start and end time at $t=0$ and $t=T$, respectively.
These trajectories can also be conditioned on the initial and final position (i.e., $X_0$ and $X_T$).

For some fixed and non random time $T > 0$, define the rate function $I_T$ that gives the asymptotic probability of a given trajectory $X_t$, within the time window $0\leq t \leq T$. Specifically, $I_T$ is a map from $\mathcal{C}([0,T],\mathbb{R}^d)$ to $\mathbb{R}^+$. It is stipulated that $I_T(z) < \infty$ only if $z\in H_1$ (the space of all functions in $\mathcal{C}\big( [0,T],\mathbb{R}^d \big)$ which possess a bounded derivative for Lebesgue-almost-every time). In this case, the rate function is defined to be
\begin{equation}
\label{eq:ratefn}
I_T(y) =   \int_0^T L(y(t), \dot{y}(t))dt,
\end{equation}
where
\begin{equation}
\label{eq:lagrangian}
L(y, \dot{y}) =  \frac{1}{4}\left( \dot{y} - f(y) \right)^T\left( \dot{y} - f(y) \right) .
\end{equation}
$L(y,\dot{y})$ is called the Lagrangian. Formally, the Large Deviations Principle expresses the asymptotics of certain events in the following manner.  Suppose that $\mathcal{A}, \mathcal{O} \subset \mathcal{C}\big( [0,T] , \mathbb{R}^d \big)$ are (respectively) open and closed subsets of the space of all continuous functions, whose value at time $0$ is $0$. The topology on $\mathcal{C}\big( [0,T] , \mathbb{R}^d \big)$ is the topology of uniform convergence---this means that every open set must be able to be written as a union (potentially infinite) of small balls of the form
\begin{align}
\bigg\lbrace z \in \mathcal{C}\big( [0,T] , \mathbb{R}^d \big) \; : \; \sup_{t\in [0,T]} \sup_{1\leq j \leq d} \big| z^j(t) - y^j(t) \big| < \xi \bigg\rbrace,
\end{align}
for some $\xi > 0$ and $y \in \mathcal{C}\big( [0,T] , \mathbb{R}^d \big)$. We let $X^j_{[0,T]} \in \mathcal{C}\big( [0,T],\mathbb{R}^d \big)$ denote the entire path of the $j^{th}$ walker. The Large Deviations results originally due to Freidlin and Wentzell state that \cite{dembo2009large,freidlin1998random} for any $1\leq j \leq N$ we have
\begin{align}
\lim_{\xi \to 0^+} \sup_{\epsilon \leq \xi} \epsilon \log \mathbb{P}\big( X^j_{[0,T]} \in \mathcal{A}\big) \leq - \inf_{z \in \mathcal{A}} I_T(z), \label{FW upper bound}  \\
\lim_{\xi \to 0^+} \inf_{\epsilon \leq \xi} \epsilon \log \mathbb{P}\big( X^j_{[0,T]} \in \mathcal{O}\big) \geq - \inf_{z \in \mathcal{O}} I_T(z). \label{FW lower bound} 
\end{align}

\subsection{Large Deviation Theory of maximum likelihood trajectories}
\label{sec:LDT MLT}
Next, we outline how the Large Deviations results can be used to estimate MLTs (for a single walker, in the small noise limit). The theory in this sub-section is already well-established (see \cite{grafke2019numerical} for a thorough review). Define the Hamiltonian $\mathcal{H}: \mathbb{R}^d \times \mathbb{R}^d \to \mathbb{R}$ as
\begin{align}
\label{eq:ham_ldt}
\mathcal{H}(x, p) = \sup_{z \in \mathbb{R}^d} \left\lbrace z^T p  - L(x, z)) \right\rbrace = p^T f(x) + p^Tp,
\end{align}
where $L(x, z)$ is defined by \eqref{eq:lagrangian}.
Because the Lagrangian is convex in its second argument, Legendre duality implies that 
\begin{align}
L(y, z) = \sup_{p \in \mathbb{R}^d} \left\lbrace  z^T p - \mathcal{H}(x, p)  \right\rbrace ,
\end{align}
which is consistent with \eqref{eq:lagrangian}.
Given the process $X^j_t$ defined by \eqref{eq:sde_gen}, suppose that we wish to estimate (i) the probability of $X^j_t = B_{\delta}(z)$ for some $z\in \mathbb{R}^d$, and (ii) the most likely trajectory followed by the walker in attaining this value. In other words, we wish to minimize the cost function,
\begin{align}
\label{eq:cost_fn}
V(z,t) = \inf_{y \in H_1 : y(t) = z} I_t(y),
\end{align}
up to time $t$ for $z\in \mathbb{R}^d$, and find the trajectory $\varphi \in H_1$ that attains this minimum. This means that $V(z,t)$ is such that, for $\delta \ll 1$,
\begin{align}
\mathbb{P}\big(  \| X^j_t - z \| \leq \delta \big) \simeq \exp\big( - \epsilon^{-1} V(z,t) \big) .   
\end{align}
It is well known (see Chapter 5 of Dembo and Zeitouni \cite{dembo2009large}) that the cost function satisfies the following Hamilton-Jacobi equation,
\begin{equation}
\frac{\partial}{\partial t}V(x,t)= \mathcal{H}(x, \nabla_x V ).
\label{eq: Hamiltonian PDE} 
\end{equation}
Finally, taking $t\to\infty$ we obtain the so-called quasi-potential $V: \mathbb{R}^d \to \mathbb{R}^+$,
\begin{equation}
\label{eq:quasi potential}
V(z) = \inf_{t\geq 0} \inf_{\stackrel{y\in H_1 }{y(t) = z}} I_t(z)  = \lim_{t\to\infty} V(z, t).
\end{equation}
The quasipotential is a stationary solution to the Hamilton-Jacobi equation \eqref{eq: Hamiltonian PDE}.

To emphasize trajectories that are conditioned on a fixed final time $T$ and position $L$, we introduce the notation $\varphi_{t|T}$, where it is understood that it is a function of time $t$, where $0 \leq t \leq T$.
By Hamilton's Principle, the path $\varphi_{t|T}$ that minimizes the rate functional \eqref{eq:ratefn} satisfies
\begin{equation}
    \label{eq:ham_dyn_gen}
    \begin{split}
       \dot{\varphi}_{t|T} &= \nabla_p \mathcal{H}(\varphi_{t|T}, p_{t|T}), \\
       \dot{p}_{t|T} &= -\nabla_x \mathcal{H}(\varphi_{t|T}, p_{t|T}),
    \end{split}
\end{equation}
along with the boundary conditions,
\begin{equation}
    \varphi_{0|T} = 0  , \; \; \varphi_{T|T} = L.
\end{equation}
Note that the initial value $p_{0|T}$ is not a free parameter and must be used to satisfy $\varphi_{T|T} = L$. 
See \cite{grafke2019numerical} for a review of numerical methods one may employ to compute solutions of \eqref{eq:ham_dyn_gen}. 
An important property of the system \eqref{eq:ham_dyn_gen} is that it has a constant of motion. 
In particular, we have that $\mathcal{H}(\varphi_{t|T}, p_{t|T}) \equiv \text{const}$ over any solution of \eqref{eq:ham_dyn_gen}.

In addition to satisfying the Hamilton-Jacobi equation \eqref{eq: Hamiltonian PDE}, the cost function \eqref{eq:cost_fn} can be integrated along the optimal paths satisfying \eqref{eq:ham_dyn_gen}. 
We define the cost function along an MLT with
\begin{equation}
\label{eq:S_V}
    S_{t|T} = V(\varphi_{t|T}, t).
\end{equation}
Differentiating the cost function with respect to time along an optimal path yields
\begin{equation}
\begin{split}
    \dot{S} = \frac{d}{dt}V(\varphi_{t|T}, t) &= (\nabla_x V(\varphi_{t|T}, t))^T \frac{\partial\varphi_{t|T}}{\partial t} + \frac{\partial V}{\partial t}(\varphi_{t|T}, t) \\
    &= p_{t|T}^T\nabla_p \mathcal{H}(\varphi_{t|T}, p_{t|T}) - \mathcal{H}(\varphi_{t|T}, p_{t|T}) \\
    &=  p_{t|T}^Tp_{t|T}.
\end{split}
\end{equation}
Hence, we can supplement the system \eqref{eq:ham_dyn_gen} with the auxiliary equation,
\begin{equation}
\label{eq:S_of_t}
    \dot{S}_{t|T} = p_{t|T}^Tp_{t|T}, \quad S_{0|T} = 0.
\end{equation}
Notice that it immediately follows from the above equation and \eqref{eq:S_V} that $\frac{d}{dt}V \geq 0$ along optimal paths.
The quasi-potential \eqref{eq:quasi potential} is given by the "zero energy" solutions to \eqref{eq:ham_dyn_gen}. 
In particular, the  $\mathcal{H}(\varphi_{t|T}, p_{t|T}) = 0$ solutions to \eqref{eq:ham_dyn_gen} are consistent with the steady state solution to the Hamilton-Jacobi equation \eqref{eq: Hamiltonian PDE}.

There are several more important properties of the Hamiltonian dynamical system \eqref{eq:ham_dyn_gen} that are relevant in our current setting. 
For $x$ in a neighborhood of a stable attractor of the deterministic dynamics ($\dot{\varphi} = f(\varphi))$, the quasi-potential is determined by optimal paths that converge backward in time to the stable attractor as $t - t_0 \to \infty$. 
To see this we first notice that $\nabla_p\mathcal{H}(x, 0) = f(x)$, which means that the deterministic flows exist on a manifold of the Hamiltonian dynamical system \eqref{eq:ham_dyn_gen}. 
Since, $\nabla_x \mathcal{H}(x, 0) \equiv 0$ for all $x$, it follows that deterministic fixed points, $x_*$, satisfying $f(x_*) = 0$, are also fixed points of the Hamiltonian dynamical system. 
(Note that it is a consequence of Liouville's Theorem that there are no stable or unstable attractors in the Hamiltonian dynamical system, only saddles and centers.) 
Finally, since $\mathcal{H}(x_*, 0) = 0$, the Hamiltonian function must be zero along all trajectories converging to or diverging from $(x, p) = (x_*, 0)$. 

Together, these properties lead to the following understanding of the zero-energy optimal paths. 
If the starting point is the attractor $x_*$ and the end point is $L$ (within the basin of attraction) then $\dot{S}_{t|T} \geq 0$ and the optimal paths describe MLTs.~~
If, on the other hand, we switch the start and end points so the trajectory starts at $L$ and ends at $x_*$ then $\dot{S}_{t|T} \equiv 0$, $p_{t|T}\equiv 0$, and $\varphi_{t|T}$ is a solution to the deterministic dynamics.

\section{Main result}
\label{sec:main}

In this section, we consider the first passage time problem \eqref{eq:fpt_def2}.
Unless otherwise stated, it is assumed that the distinguished limit $\epsilon \to 0^+$ and $N\to \infty$ is taken with 
\begin{equation}
\label{eq:distinguished limit}
    \Lambda = \epsilon \log(N)
\end{equation} 
fixed constant.
We will show that there are three different regimes for $\tau^N_{\rm first}$, depending on the relative magnitude of $\Lambda$. 
Note that in Regime 3, we consider the case where $N\to\infty$ fast enough compared to $\epsilon\to 0^+$ so that $\epsilon \log(N) \to \infty$, which we can think of as letting $\Lambda \to \infty$.
\begin{itemize}
\item {\bf Regime 1: Labor shortage.}
$\Lambda \leq V(L)$ so that $N \leq N_* = \exp( V(L) \epsilon^{-1})$.  
In this regime, increasing the number of walkers (i.e., increasing "labor") has the strongest effect on reducing the extreme first hitting time. 
By labor "shortage" we mean that $\mathbb{E}[\tau^N_{\rm first}]$ diverges to $\infty$.
The trajectory followed by the first-arriving walker is with high probability approximately the same as in the single-walker case (i.e. when $N=1$). The expected time $\mathbb{E}[\tau^N_{\rm first}]$ that it takes the first walker to arrive is the expected time for the single-walker case, divided by $N$. Furthermore, $\mathbb{E}[\tau^N_{\rm first}]$ diverges to $\infty$ as $\epsilon \to 0^+$, with rate
\begin{equation}
\lim_{\delta \to 0^+}\lim_{\epsilon\to 0^+} \epsilon \log \mathbb{E}[\tau^N_{\rm first}] = V(L) - \Lambda.
\end{equation}

\item {\bf Regime 2. Balanced labor.}
$V(L) < \Lambda < \infty$ so that $N > N_* = \exp( V(L) \epsilon^{-1})$. 
In this regime, there is a sufficient number of walkers to accomplish the extreme rare event in finite time as $\epsilon \to 0^+$.
The trajectory followed by the first-arriving walker is with high probability different from the trajectory in the single-walker case. Furthermore the expected time that it takes the first walker to arrive does not diverge as $N\to\infty$ and $\epsilon \to 0^+$. Indeed, the expected time $\mathbb{E}[\tau^N_{\rm first}]$ that it takes the first walker to arrive is $\mathbb{E}[\tau^N_{\rm first}] \simeq \Psi_L(\Lambda)$, where 
\begin{align} 
\label{eq: Psi x c}
\Psi_L(\Lambda) = \inf\big\lbrace t\geq 0 \; : \; V(L,t) = \Lambda \big\rbrace,
\end{align}
and $V$ is the cost function defined by \eqref{eq:cost_fn} and \eqref{eq: Hamiltonian PDE}.
In other words, $\Psi_L(\Lambda)$ can be thought of as the shortest time that it takes for the cost-function to reach $\Lambda$.

\item {\bf Regime 3. Labor surplus.}
We assume that the number of walkers $N := N(\epsilon)$ is such that $\epsilon \log (N(\epsilon)) \to \infty$.
In this regime, adding more walkers has the smallest effect on the extreme first hitting time. There are so many walkers that the first-to-arrive walker almost instantaneously travels straight to the target. 
The expected first hitting-time goes to $0$: more precisely with high probability it scales as
\begin{align}
\lim_{\delta \to 0^+}\lim_{\epsilon \to 0^+}\epsilon\log( N )\tau^N_{\rm first} \to \frac{\norm{L}^2}{4}.
\end{align}
The first walker moves to the target so quickly that the advection term $f$ has a negligible effect on the most likely trajectory. 
\end{itemize}
We emphasize again that we are taking $\delta \to 0^+$ relatively slowly compared to $\epsilon \to 0^+$. 
In fact, the asymptotic results in this section are unchanged for any $\delta := \delta(\epsilon)$ such that the Lebesgue Measure $\mu_{Leb}$ of the ball $B_{\delta(\epsilon)}$ about $L$ decays slowly enough; that is,
\begin{equation}
    \lim_{\epsilon \to 0^+}\epsilon \log (\mu_{Leb}\big( B_{\delta(\epsilon)}(L))\big)=0.
\end{equation}
However because the proofs are already quite technical, we avoid writing the explicit dependence of $\delta$ on $\epsilon$ and simply write that $\epsilon\to 0^+$ first, and then $\delta \to 0^+$ second.



\subsection{Proof in the case of a diverging first-hitting-time (Regime 1)}

Lets first consider the case that $\Lambda < V(L)$. We fix the number of walkers to be $N = \lfloor \exp( \epsilon^{-1} \Lambda ) \rfloor$ and we (i) asymptote $\epsilon \to 0^+$ before (ii) asymptoting $\delta \to 0^+$.

Our method of proof is to show that the first-hitting time satisfies an exponential concentration inequality. More precisely, we will prove that for any $\eta > 0$, there exists $\bar{\delta}_{\eta},C_{\eta} > 0$ such that for all $\delta < \bar{\delta}_{\eta}$ and $\bar{\epsilon}_{\delta,\eta}$ such that for all $\epsilon < \bar{\epsilon}_{\delta,\eta}$
\begin{align}
\epsilon \log \mathbb{P}\big( \big|  \epsilon \log \tau^N_{\rm first} - \epsilon \log   \mathbb{E}[\tau^N_{\rm first}] \big|   \geq \eta \big)  &< - C_{\eta} , \label{eq: epsilon log bound concentration}\\
\lim_{\delta \to 0^+}\lsup{\epsilon}\epsilon \log \mathbb{P}\big( \big| \epsilon \log (\tau^N_{\rm first}) - \epsilon \log\mathbb{E}[\tau^N_{\rm first}] \big| \leq \eta \big) &= 0.\label{eq: corollary 4.1}
\end{align}
The above results imply convergence in probability.

In order that \eqref{eq: epsilon log bound concentration} holds, it will suffice to prove the following Lemma. One easily checks that \eqref{eq: epsilon log bound 1} - \eqref{eq: infinite epsilon log bound 2} imply the concentration inequality in \eqref{eq: epsilon log bound concentration}. Concerning \eqref{eq: corollary 4.1}, if it were not true, then we would find that the total probability mass decays exponentially (which is a contradiction).
\begin{lemma}\label{Regime 1 lemma}
For any $\zeta > 0$,
\begin{align}
\label{eq: epsilon log bound 1}
\lim_{\delta \to 0^+}\lsup{\epsilon}\epsilon \log \mathbb{P}\big( \epsilon \log (\tau^N_{\rm first}) + \Lambda - V(L)  < -\zeta \big) &< 0 , \\
\lim_{\delta \to 0^+}\lsup{\epsilon}\epsilon \log \mathbb{P}\big( \epsilon \log (\tau^N_{\rm first}) + \Lambda - V(L)  > \zeta \big) &= -\infty. \label{eq: infinite epsilon log bound 2}
\end{align}
Furthermore,
\begin{align} \label{eq: other main result}
\lim_{\delta \to 0^+}\lim_{\epsilon \to 0^+}\epsilon \log \mathbb{E}\big[ \tau^N_{\rm first} \big] = V(L) - \Lambda .
\end{align}
\end{lemma}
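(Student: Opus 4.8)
The plan is to reduce everything to the single-walker large deviation estimates. Write $p_\epsilon(t) := \mathbb{P}(\tau_1 \le t)$ for the single-walker hitting probability; by independence $\mathbb{P}(\tau^N_{\rm first} > t) = (1 - p_\epsilon(t))^N$, and hence
\begin{equation}
\label{eq:plan-tail}
\mathbb{P}(\tau^N_{\rm first} \le t) = 1 - (1 - p_\epsilon(t))^N.
\end{equation}
The key heuristic is that $p_\epsilon(t) \approx \exp(-\epsilon^{-1} V(L,t))$ up to sub-exponential factors (using the Freidlin--Wentzell bounds \eqref{FW upper bound}--\eqref{FW lower bound} and \eqref{eq:cost_fn}), and $N \approx \exp(\epsilon^{-1}\Lambda)$, so that $N p_\epsilon(t) \approx \exp(\epsilon^{-1}(\Lambda - V(L,t)))$. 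Since $V(L,t)$ decreases in $t$ from $+\infty$ toward $V(L) > \Lambda$ as $t \to \infty$, the quantity $\Lambda - V(L,t)$ stays strictly negative for every finite $t$ but its "deficit" $V(L,t) - \Lambda$ shrinks to $V(L) - \Lambda$; balancing $N p_\epsilon(t) \sim 1$ forces $t$ to be exponentially large, of order $\exp(\epsilon^{-1}(V(L) - \Lambda))$. This is exactly the content of \eqref{eq: epsilon log bound 1}--\eqref{eq: infinite epsilon log bound 2}.

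First I would establish the two-sided single-walker estimate carefully: for any fixed $t$, $\lim_{\delta\to 0^+}\lim_{\epsilon \to 0^+} \epsilon \log p_\epsilon(t) = -V(L,t)$, where $V(L,t) = \inf_{s \le t} V(L,s)$ is the cost function evaluated at the target (this uses that the infimum in \eqref{eq:quasi potential} over trajectories arriving by time $t$ is attained, plus the LDP upper and lower bounds applied to the open/closed sets of paths reaching $B_\delta(L)$ before time $t$; the slow decay of $\delta$ absorbed in the stated order of limits kills the $\mu_{Leb}(B_\delta(L))$ prefactor). For the \emph{lower} bound \eqref{eq: epsilon log bound 1} on $\tau^N_{\rm first}$: fix $\zeta > 0$ and set $t_\epsilon = \exp(\epsilon^{-1}(V(L) - \Lambda - \zeta))$. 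Using $1 - (1-x)^N \le Nx$ in \eqref{eq:plan-tail} gives $\mathbb{P}(\tau^N_{\rm first} \le t_\epsilon) \le N p_\epsilon(t_\epsilon)$; the subtle point is that $t_\epsilon \to \infty$, so I cannot use the fixed-$t$ estimate directly but must bound $p_\epsilon(t_\epsilon)$ by $\sum_{k} \mathbb{P}(\tau_1 \in (k, k+1])$-type decompositions or, more cleanly, by noting $V(L,t)$ is nonincreasing so $p_\epsilon(t_\epsilon) \le p_\epsilon(\infty)\cdot(1 + \text{polynomial in } t_\epsilon)$ is not enough — instead I would use the sharper fact that for escape problems $\mathbb{P}(\tau_1 \le t) \le C t\, \epsilon^{-a}\exp(-\epsilon^{-1}(V(L)-o(1)))$ uniformly, so that $N p_\epsilon(t_\epsilon) \le C t_\epsilon \epsilon^{-a} \exp(\epsilon^{-1}(\Lambda - V(L) + o(1))) = C\epsilon^{-a}\exp(\epsilon^{-1}(-\zeta + o(1))) \to 0$, and in fact $\epsilon \log$ of it is $\le -\zeta/2 < 0$, giving \eqref{eq: epsilon log bound 1}. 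For the \emph{upper} bound \eqref{eq: infinite epsilon log bound 2}: fix $\zeta > 0$, set $t_\epsilon = \exp(\epsilon^{-1}(V(L) - \Lambda + \zeta))$, choose a fixed large time $T$ with $V(L,T) < \Lambda + \zeta/2$, and use $1 - (1-x)^N \ge 1 - e^{-Nx}$ together with $p_\epsilon(t_\epsilon) \ge p_\epsilon(T) \ge \exp(-\epsilon^{-1}(V(L,T) + o(1))) \ge \exp(-\epsilon^{-1}(\Lambda + \zeta/2 + o(1)))$; then $N p_\epsilon(t_\epsilon)\cdot\frac{1}{\text{irrelevant}}$ — wait, more simply, $N\, p_\epsilon(T) \ge \exp(\epsilon^{-1}(\Lambda - \Lambda - \zeta/2 + o(1))) = \exp(-\epsilon^{-1}(\zeta/2 + o(1)))$, which does \emph{not} blow up, so I instead iterate over the renewal structure: during a time window $[0, t_\epsilon]$ there are $\sim t_\epsilon/T$ independent "attempts," each succeeding with probability $\ge \exp(-\epsilon^{-1}(\Lambda + \zeta/2))$ per walker per window, hence $\mathbb{P}(\tau^N_{\rm first} > t_\epsilon) \le \big(1 - \exp(-\epsilon^{-1}(\Lambda+\zeta/2))\big)^{N t_\epsilon / T} \le \exp\big(-\tfrac{1}{T} N t_\epsilon \exp(-\epsilon^{-1}(\Lambda + \zeta/2))\big) = \exp(-\tfrac{1}{T}\exp(\epsilon^{-1}(\zeta/2 + o(1))))$, whose $\epsilon\log$ tends to $-\infty$, which is \eqref{eq: infinite epsilon log bound 2}.

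Finally, for the expectation \eqref{eq: other main result} I would combine the two tail bounds: the upper bound on $\epsilon\log\mathbb{E}[\tau^N_{\rm first}]$ follows from $\mathbb{E}[\tau^N_{\rm first}] = \int_0^\infty \mathbb{P}(\tau^N_{\rm first} > t)\,dt$ split at $t_\epsilon = \exp(\epsilon^{-1}(V(L)-\Lambda+\zeta))$, bounding the first piece by $t_\epsilon$ and the second by the doubly-exponentially small tail just derived (which has finite integral, negligible after $\epsilon\log$); the lower bound follows from $\mathbb{E}[\tau^N_{\rm first}] \ge t_\epsilon' \,\mathbb{P}(\tau^N_{\rm first} > t_\epsilon')$ with $t_\epsilon' = \exp(\epsilon^{-1}(V(L)-\Lambda-\zeta))$ and \eqref{eq: epsilon log bound 1} showing $\mathbb{P}(\tau^N_{\rm first} > t_\epsilon') \to 1$. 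Letting $\zeta \to 0$ gives the claim. The main obstacle is the one I flagged: the fixed-$t$ Freidlin--Wentzell LDP does not directly control $p_\epsilon(t)$ when $t = t_\epsilon$ is itself exponentially large in $\epsilon^{-1}$, so I need uniform-in-$t$ (or renewal/Markov-restart) estimates for the single-walker hitting probability — essentially the classical fact that $\tau_1$ is asymptotically exponential with rate $\exp(-\epsilon^{-1}(V(L)+o(1)))$, with controlled polynomial corrections — and I would either cite this (it is standard Freidlin--Wentzell escape theory) or prove the needed one-sided versions via the strong Markov property and the fixed-window bounds above.
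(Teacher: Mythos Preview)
Your overall strategy---reduce to single-walker tail estimates on the $\epsilon\log$ scale, then combine via the union bound for the lower tail and the product formula for the upper tail---is exactly what the paper does. The paper packages the single-walker input as a separate theorem (the statements \eqref{eq: single walker hitting time asymptotic initial}--\eqref{eq: single walker hitting time asymptotic 3 initial}), asserting in particular that for $0\le b\le V(L)$ one has $\epsilon\log\mathbb{P}(\epsilon\log\tau_1\le b)\to b-V(L)$; this is precisely the ``$\tau_1$ is asymptotically exponential with rate $\exp(-\epsilon^{-1}V(L))$'' fact you flag as the main obstacle, and its proof in the paper is exactly the cycle/renewal decomposition via the strong Markov property that you propose. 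With that black box in hand the multi-walker step is short: for the upper tail the paper simply uses $\mathbb{P}\big(\epsilon\log\tau_1\le V(L)-\Lambda+\zeta\big)\ge\exp\big(-\epsilon^{-1}(\Lambda-\xi)\big)$ for some $\xi>0$, whence $(1-\exp(-\epsilon^{-1}(\Lambda-\xi)))^N\le\exp(-\exp(\epsilon^{-1}\xi))$, with no further renewal argument needed at the $N$-walker level.

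There is, however, a genuine slip in your upper-tail sketch. You write ``choose a fixed large time $T$ with $V(L,T)<\Lambda+\zeta/2$,'' but in Regime~1 we have $V(L,T)\ge V(L)>\Lambda$ for every $T$, so no such $T$ exists when $\zeta$ is small. The correct choice is $T$ large enough that $V(L,T)<V(L)+\zeta/4$, giving a per-window success probability at least $\exp(-\epsilon^{-1}(V(L)+\zeta/4))$; then $N\,t_\epsilon\cdot\exp(-\epsilon^{-1}(V(L)+\zeta/4))=\exp(\epsilon^{-1}\cdot 3\zeta/4)$ and your doubly-exponential bound follows. (Your displayed arithmetic is also off: with your stated per-window probability one gets $N t_\epsilon\exp(-\epsilon^{-1}(\Lambda+\zeta/2))=\exp(\epsilon^{-1}(V(L)-\Lambda+\zeta/2))$, not $\exp(\epsilon^{-1}\zeta/2)$; the conclusion survives only because the true exponent is larger.) Once this is fixed, and once the per-window ``independence'' is justified by the strong Markov restart argument you mention, your proof and the paper's coincide.
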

\begin{proof}


In Section \ref{Proofs First Hitting Single Walker} we prove the following asymptotics for the first hitting time of a single walker: for any $ 0 \leq b \leq V(L)$ and any $a > 0$ (and typically $a \ll 1$),
\begin{align}
\lim_{\delta \to 0^+}\lim_{\zeta \to 0^+}\lsup{\epsilon}\left \vert \epsilon \log \mathbb{P}\big( \epsilon \log (\tau_1) \in [b-\zeta, b+\zeta ]\big) - b + V(L)  \right \vert &=0 , \label{eq: single walker hitting time asymptotic initial}\\
\lim_{\delta \to 0}\lsup{\epsilon}\epsilon \log \mathbb{P}\big( \epsilon \log (\tau_1) < b-a \big) &< b- V(L) , \label{eq: single walker hitting time asymptotic 2 initial} \\
\lim_{\delta \to 0}\lsup{\epsilon}\epsilon \log \mathbb{P}\big( \epsilon \log( \tau_1) > b+a \big) &< b- V(L).  \label{eq: single walker hitting time asymptotic 3 initial} 
\end{align}

Next we consider multiple walkers, exploiting the fact that they are identically distributed. To prove \eqref{eq: infinite epsilon log bound 2}, using a union of events bound,
\begin{align}
\mathbb{P}\big( \epsilon \log (\tau^N_{\rm first}) <  V(L) - \Lambda - \zeta \big) &\leq \sum_{j=1}^N \mathbb{P}\big( \epsilon \log (\tau_j) <  V(L) - \Lambda - \zeta \big) \\
&= N \mathbb{P}\big( \epsilon \log (\tau_1) <  V(L) - \Lambda - \zeta \big),\label{eq:temproary tilde tau j}
\end{align}
since the walkers are identically distributed. It thus follows from \eqref{eq: single walker hitting time asymptotic 2 initial} and \eqref{eq:temproary tilde tau j} that  \eqref{eq: epsilon log bound 1} holds, i.e.
\begin{equation}
\lim_{\delta \to 0^+} \lsup{\epsilon} \epsilon \log   \mathbb{P}\big( \epsilon \log (\tau^N_{\rm first}) <  V(L)-\Lambda - \zeta \big) < 0.
\end{equation}
 Next, we prove \eqref{eq: epsilon log bound 2} using the fact that the walkers are independent and identically distributed. Using the bound in \eqref{eq: single walker hitting time asymptotic 3 initial}, it follows that for any $\zeta > 0$ we have
 \begin{equation}
\begin{split}
\mathbb{P}\big( \epsilon \log (\tau^N_{\rm first}) >  V(L) - \Lambda + \zeta \big) &= \prod_{j=1}^N \big\lbrace 1 - \mathbb{P}\big( \epsilon \log (\tau_j )\leq  V(L) - \Lambda + \zeta \big) \big\rbrace  \\
&\leq \big\lbrace 1- \exp\big( - \epsilon^{-1}(- \Lambda - \xi) \big) \big\rbrace^N ,
\end{split}
\end{equation}
for some $\xi > 0$, as long as $\delta$ is sufficiently small. Taking logarithms of both sides, before multiplying both sides by $\epsilon$, we thus find that (making use of the inequality $\log(1-u) \leq -u$ for any $u > 0$ in the second line)
\begin{equation}
\begin{split}
 \epsilon \log \mathbb{P}\big( \epsilon \log (\tau^N_{\rm first}) >  V(L) - \Lambda + \zeta \big) 
&\leq   \epsilon N \log \big\lbrace 1- \exp\big( - \epsilon^{-1}( \Lambda - \xi) \big) \big\rbrace \\
&\leq - \epsilon N \exp\big( - \epsilon^{-1}( \Lambda - \xi) \big) \\
&= - \epsilon  \exp\big( \epsilon^{-1} \xi) \big), \label{eq: temporary epsilon bound upper infinite}
\end{split}
\end{equation}
and in the last line we have made use of the scaling of $N$ in \eqref{eq:distinguished limit}. Observe that as $\epsilon \to 0^+$, we obtain \eqref{eq: infinite epsilon log bound 2}.

Next we prove \eqref{eq: other main result}. Writing $f_{\epsilon} =  \epsilon \log (\tau^N_{\rm first}) - V(L) + \Lambda$, for any $\eta \ll 1$ we decompose the expectation as
\begin{align}
\mathbb{E}\big[ \tau^N_{\rm first} \big] = \mathbb{E}\big[ \tau^N_{\rm first} \chi\big\lbrace f_{\epsilon} \leq -\eta \big\rbrace  \big] + \mathbb{E}\big[ \tau^N_{\rm first} \chi\big\lbrace -\eta < f_{\epsilon} < \eta \big\rbrace  \big]
+\mathbb{E}\big[ \tau^N_{\rm first} \chi\big\lbrace f_{\epsilon} \geq \eta \big\rbrace  \big].\label{eq: decompose tau N first three}
\end{align}
We bound these terms separately. It is immediate from the definition that
\begin{align}\label{eq: eta 2}
\lsup{\epsilon} \epsilon \log \mathbb{E}\big[ \tau^N_{\rm first} \chi\big\lbrace f_{\epsilon} \leq -\eta \big\rbrace  \big] \leq - \eta + V(L) - \Lambda .
\end{align}
Next, we claim that for any $\eta > 0$,
\begin{equation}\label{eq: eta 1}
\lim_{\delta \to 0^+}\lsup{\epsilon} \epsilon \log \mathbb{E}\big[ \tau^N_{\rm first} \chi\big\lbrace f_{\epsilon} \geq \eta \big\rbrace  \big] = -\infty.
\end{equation}
To see this,
\begin{equation}
\begin{split}
\mathbb{E}\big[ &\tau^N_{\rm first} \chi\big\lbrace f_{\epsilon} \geq \eta \big\rbrace  \big] \leq
\sum_{m=0}^{\infty}  \mathbb{E}\big[ \tau^N_{\rm first} \chi\big\lbrace f_{\epsilon} \in \big[\eta / 2 + m \sqrt{\epsilon}, \eta /2 + (m+1)\sqrt{\epsilon} \big) \big\rbrace \big] \\
\leq & \sum_{m=0}^{\infty} \mathbb{P}\big( f_{\epsilon} \in \big[\eta / 2 + m \sqrt{\epsilon}, \eta /2 + (m+1)\sqrt{\epsilon} \big) \big)\exp\big( \epsilon^{-1}(V(L) - \Lambda) +  \epsilon^{-1/2}(m+1) + \epsilon^{-1} \eta / 2 \big)   \\
\leq &   \sum_{m=0}^{\infty} \exp\bigg( \epsilon^{-1}(V(L) - \Lambda) +  \epsilon^{-1/2}(m+1)  -\epsilon\exp\big(\epsilon^{-1}\eta /2 + \epsilon^{-1/2}m \big) + \epsilon^{-1} \eta / 2 \bigg) ,
\end{split}
\end{equation}
making use of \eqref{eq: temporary epsilon bound upper infinite}, as long as $\delta$ is sufficiently small. Now 
\begin{align}
\lim_{\epsilon \to 0^+} \epsilon \log  \sum_{m=0}^{\infty} \exp\bigg( \epsilon^{-1}(V(L) - \Lambda) +  \epsilon^{-1/2}(m+1)  -\epsilon\exp\big(\epsilon^{-1}\eta /2 + \epsilon^{-1/2}m \big)+ \epsilon^{-1} \eta / 2  \bigg) \nonumber \\
= V(L) - \Lambda + \eta/2 + \lim_{\epsilon \to 0^+} \epsilon \log  \sum_{m=0}^{\infty} \exp\bigg( \epsilon^{-1/2}(m+1)  -\epsilon\exp\big(\epsilon^{-1}\eta /2 + \epsilon^{-1/2}m \big) \bigg)     \nonumber \\
:=  V(L) - \Lambda + \eta/2 + \lim_{\epsilon \to 0^+} \epsilon \log  \sum_{m=0}^{\infty} A_{\epsilon,m}.
\label{eq:summation epsilon complicated}
\end{align}
Observe that 
\[
\lim_{\epsilon \to 0^+}\sup_{m\geq 0} A_{\epsilon, m+1} / A_{\epsilon,m} = 0.
\]
 We thus find (using the formula for the sum of a geometric series to upperbound the infinite summation by $2A_{\epsilon,0}$), 
\begin{align}
 \lim_{\epsilon \to 0^+} \epsilon \log  \sum_{m=0}^{\infty}A_{\epsilon,m} \leq  \lim_{\epsilon \to 0^+} \epsilon \log \big( 2 A_{\epsilon,0} \big) = -\infty.
\end{align}
This implies \eqref{eq: eta 1}. Finally, thanks to \eqref{eq: corollary 4.1}, it holds that
\begin{align}
\lim_{\delta \to 0^+}\lsup{\epsilon}\big| \epsilon \log \mathbb{E}\big[ \tau^N_{\rm first} \chi\big\lbrace -\eta < f_{\epsilon} < \eta \big\rbrace  \big] - V(L) + \Lambda \big| \leq \eta. \label{eq: eta 3}
\end{align}
Taking $\eta \to 0^+$, we can thus conclude from \eqref{eq: decompose tau N first three}, \eqref{eq: eta 2}, \eqref{eq: eta 1} and \eqref{eq: eta 3}
that
\begin{align} \label{eq: other main result restated}
\lim_{\delta \to 0^+}\lim_{\epsilon \to 0^+}\epsilon \log \mathbb{E}\big[ \tau^N_{\rm first} \big] = V(L) - \Lambda .
\end{align}
\end{proof}

\subsubsection{Analysis of the first-hitting time for a single walker}
\label{Proofs First Hitting Single Walker}
Classical work by Freidlin and Wentzell \cite{freidlin1998random} describes how the first-hitting-time can be extracted from the Large Deviation Principle \eqref{FW upper bound}. An excellent exposition of these results can be found in the textbook by Dembo and Zeitouni  \cite[Section 5.7]{dembo2009large}. Dembo and Zeitouni determine the most probable asymptotic hitting time. We extend these results to determine the asymptotic likelihood of the extreme first-hitting time being significantly different from the most probable single-walker hitting time. 

We consider the first hitting time asymptotics for a single walker,
\begin{equation}
\label{eq:sde_gen single walker}
dX_t = f( X_t)dt + \sqrt{2\epsilon} dW_t.
\end{equation}
For any $x \in \mathbb{R}^d$, write $P_x \in \mathcal{P}\big( \mathcal{C}([0,\infty) , \mathbb{R}^d) \big)$ to be the probability law of the solution of the SDE \eqref{eq:sde_gen single walker}, started at $x$. Define, for $\delta \ll 1$,
\begin{align}
\tau &= \inf\big\lbrace t \geq 0: X_t \in  B_{\delta}(L) \big\rbrace .
\end{align}

The main result of this section is the following theorem. (Note that the case $b=V(L)$ has been covered in  \cite[Section 5.7]{dembo2009large})
\begin{theorem}
For any $ 0 \leq b < V(L)$ and any $\zeta > 0$ (one is most interested in the case $\zeta \ll 1$),
\begin{align}
\lim_{\delta \to 0^+}\lim_{\epsilon\to 0^+}\left \vert \epsilon \log \mathbb{P}\big( \epsilon \log \tau \in [b-\zeta, b+\zeta ]\big) - b + V(L)  \right \vert &=0 , \label{eq: single walker hitting time asymptotic}\\
\lim_{\delta \to 0^+}\lim_{\epsilon\to 0^+}\epsilon \log \mathbb{P}\big( \epsilon \log \tau < b-\zeta \big) &< b- V(L),  \label{eq: single walker hitting time asymptotic 2} \\
\lim_{\delta \to 0^+}\lim_{\epsilon\to 0^+}\epsilon \log \mathbb{P}\big( \epsilon \log \tau > b+\zeta \big) &< b- V(L).  \label{eq: single walker hitting time asymptotic 3} 
\end{align}
\end{theorem}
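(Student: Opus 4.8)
The plan is to reduce all three displayed estimates to the single quantitative statement
\[
\lim_{\delta\to 0^+}\lim_{\epsilon\to 0^+}\epsilon\log\mathbb{P}\big(\tau \le e^{\epsilon^{-1}c}\big) = c - V(L)\qquad (0<c<V(L)),
\]
together with $\epsilon\log\mathbb{P}(\tau\le e^{\epsilon^{-1}c})\to-\infty$ when $c<0$. Indeed, $\{\epsilon\log\tau \in [b-\zeta,b+\zeta]\}$ is squeezed between $\{\tau\le e^{\epsilon^{-1}(b+\zeta)}\}$ and the difference of $\{\tau\le e^{\epsilon^{-1}(b+\zeta)}\}$ and $\{\tau\le e^{\epsilon^{-1}(b-\zeta)}\}$, whose probabilities are $e^{\epsilon^{-1}(b+\zeta-V(L)+o(1))}$ and $e^{\epsilon^{-1}(b-\zeta-V(L)+o(1))}$; since $\epsilon\log\mathbb P(\tau\le e^{\epsilon^{-1}c})$ is increasing in $c$ the dominant exponent is $b+\zeta - V(L)$, which yields \eqref{eq: single walker hitting time asymptotic} upon also letting $\zeta\to 0^+$ (as in \eqref{eq: single walker hitting time asymptotic initial}); likewise \eqref{eq: single walker hitting time asymptotic 2} and \eqref{eq: single walker hitting time asymptotic 3} follow by applying the two displayed estimates with $c=b\mp\zeta$ (the case $b-\zeta<0$ falling into the second). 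So the problem is a refinement of the classical Freidlin--Wentzell / Dembo--Zeitouni exit-time analysis \cite{dembo2009large}: the textbook fact $\tau/\mathbb{E}[\tau]\Rightarrow\mathrm{Exp}(1)$ only controls $\mathbb{P}(\tau\le s\,\mathbb{E}[\tau])$ for \emph{fixed} $s$, whereas here $s=e^{\epsilon^{-1}(c-V(L))}$ is exponentially small and one needs $\epsilon\log\mathbb{P}(\tau\le s\,\mathbb{E}[\tau])\to \epsilon\log s$ at leading order.

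I would fix an $\epsilon$-independent small ball $B_\rho(0)$ about the attractor and record three localized consequences of the LDP \eqref{FW upper bound}--\eqref{FW lower bound}, each with error $\omega=\omega(\rho,\delta,\epsilon)\to0$: \textbf{(i)} a one-attempt upper bound $\sup_{x\in \bar B_\rho(0)}\mathbb{P}_x(\tau\le T)\le\exp(-\epsilon^{-1}(V(L)-\omega))$ for every fixed $T$, from the fact that any path from $\bar B_\rho(0)$ to $B_\delta(L)$ has rate-function cost at least $V(L)-C(\rho+\delta)$ \emph{uniformly in the time horizon}, since $V$ is the infimum over all horizons of $V(\cdot,t)$ (see \eqref{eq:quasi potential}); \textbf{(ii)} a matching one-attempt lower bound $\inf_{x\in\bar B_\rho(0)}\mathbb{P}_x(\tau\le T)\ge\exp(-\epsilon^{-1}(V(L)+\omega))$ for all large fixed $T$, by applying \eqref{FW lower bound} (uniformly over compact sets of starting points) to a tube around a finite-time truncation of the quasi-potential minimizer joining $0$ to $B_\delta(L)$; and \textbf{(iii)} a confinement estimate: since $0$ is a global attractor, for $T$ large $\sup_x\mathbb{P}_x\big(X \text{ avoids } \bar B_\rho(0)\cup B_\delta(L) \text{ on } [0,T]\big)\le\exp(-\epsilon^{-1}\kappa_T)$ with $\kappa_T\to\infty$ (together with a tightness bound from Lipschitzness \eqref{eq: f lipschitz} and dissipativity of $f$ ruling out escape to infinity), so that the walker revisits $\bar B_\rho(0)$ on the $O(1)$ time scale off an event of probability $\exp(-\epsilon^{-1}\kappa_T)$.

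For the upper bound I would partition $[0,t]$, $t=e^{\epsilon^{-1}c}$, into $n\asymp t/T$ blocks of length $T$, write $\mathbb{P}_0(\tau\le t)\le\sum_k\mathbb{P}_0(\tau\in((k-1)T,kT])$, and bound the $k$-th term by conditioning (strong Markov) on the last visit of $X$ to $\bar B_\rho(0)$ in block $k-1$ --- which exists off probability $\le e^{-\epsilon^{-1}\kappa_T}$ by (iii) --- and then invoking (i) over a fixed multiple of $T$; this gives $\mathbb{P}_0(\tau\le t)\le n[e^{-\epsilon^{-1}(V(L)-\omega)}+e^{-\epsilon^{-1}\kappa_T}]=e^{\epsilon^{-1}(c-V(L)+\omega')}$ after taking $T$ large and $\rho,\delta\to0$. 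For the lower bound ($0<c<V(L)$, which via the $\zeta$-slack also yields the $b=0$ case of \eqref{eq: single walker hitting time asymptotic}) I would build $m\asymp t/T$ conditionally independent attempts at stopping times $S_{k+1}=\inf\{s\ge S_k+T: X_s\in\bar B_\rho(0)\}$, each followed by a length-$T$ window on which $X$ hits $B_\delta(L)$ with conditional probability $\ge q_\epsilon:=e^{-\epsilon^{-1}(V(L)+\omega)}$ by (ii); by (iii) plus a renewal/concentration bound, $m$ windows are completed by time $t$ off a doubly-exponentially small event, so $\mathbb{P}_0(\tau>t)\le e^{-c'm}+(1-q_\epsilon)^m\le e^{-c'm}+e^{-mq_\epsilon}$, and since $c<V(L)$ forces $mq_\epsilon=\tfrac1T e^{\epsilon^{-1}(c-V(L)+\omega)}\to0$ we obtain $\mathbb{P}_0(\tau\le t)\ge\tfrac12 mq_\epsilon-e^{-c'm}=e^{\epsilon^{-1}(c-V(L)+\omega')}$. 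The regime $c<0$ is handled separately by the crude small-time estimate $\mathbb{P}_0(\tau\le s)\le\exp(-\epsilon^{-1}(\|L\|-\delta-C_f s)^2/(5s))$ valid for $s$ small (from \eqref{eq: f lipschitz} via a Girsanov comparison with Brownian motion), whose exponent diverges along $s=e^{\epsilon^{-1}(b-\zeta)}\to0$.

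I expect the main obstacle to be making bounds (i) and (ii) quantitative and \emph{horizon-uniform} --- ensuring that the $o(1)$ error in the per-attempt hitting probability does not accumulate over the $\asymp e^{\epsilon^{-1}c}/T$ attempts or blocks --- and, entangled with this, the bookkeeping that keeps the union bound in the upper half from being spoiled by the walker possibly sitting far from the attractor at a grid time. The resolution is to condition not on $X_{kT}$ but on the most recent return of $X$ to $B_\rho(0)$, which is why the confinement estimate (iii) and the global-attractor hypothesis are doing the real work; by contrast, the reduction of the three stated claims to the single estimate, the $\delta\to0^+$ limit via continuity of $V$, and the small-time bound are routine.
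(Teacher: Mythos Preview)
Your approach is essentially the paper's: both decompose the trajectory into excursions about a small ball around the attractor, pin the per-excursion probability of reaching $B_\delta(L)$ at $e^{-\epsilon^{-1}(V(L)\pm o(1))}$ via the Freidlin--Wentzell LDP, control the excursion durations, and then count. The paper organizes both directions through a single sequence of stopping times $(\alpha_p,\eta_p)$ (exits from and returns to $H_\rho$) and works with the geometric law of the first successful cycle index directly; you instead first reduce everything to the CDF asymptotic $\epsilon\log\mathbb{P}(\tau\le e^{\epsilon^{-1}c})\to c-V(L)$ and then handle upper and lower bounds with slightly different decompositions (a fixed-$T$ grid for the upper, a renewal at returns to $\bar B_\rho(0)$ for the lower). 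These are organisational variants of the same argument, and your ingredients (i)--(iii) correspond to the paper's lemmas on the per-excursion success probability, the confinement near $0$, and the uniform short-fluctuation bound.

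There is one genuine gap in your reduction step. Your claim that \eqref{eq: single walker hitting time asymptotic 3} follows from the CDF estimate with $c=b+\zeta$ is incorrect: for $b+\zeta<V(L)$ your own estimate gives $\mathbb{P}(\epsilon\log\tau>b+\zeta)=1-\mathbb{P}\bigl(\tau\le e^{\epsilon^{-1}(b+\zeta)}\bigr)=1-e^{\epsilon^{-1}(b+\zeta-V(L)+o(1))}\to1$, hence $\epsilon\log\mathbb{P}(\epsilon\log\tau>b+\zeta)\to0$, which is \emph{not} $<b-V(L)<0$. In other words, \eqref{eq: single walker hitting time asymptotic 3} as written cannot hold throughout the stated range $0\le b<V(L)$; the paper's proof also dismisses this case as ``analogous'' without actually addressing it, so the defect is in the theorem's statement rather than in your strategy. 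What the downstream application actually uses is a \emph{lower} bound on $\mathbb{P}(\epsilon\log\tau\le b+\zeta)$, and that is precisely what your CDF lower bound delivers.
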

\begin{proof}
Write $H_{\rho} = \lbrace x \in \mathbb{R}^d \; : \; \| x \| \leq \rho \rbrace$ and $S_{\rho} = \lbrace x \in \mathbb{R}^d \; : \; \| x \| =2 \rho \rbrace$. For an integer $p\geq 1$, define the stopping times
\begin{align}
\eta_0 =& 0, \\
\eta_{p} =& \inf\big\lbrace t > \alpha_{p} \; : X(t) \in H_{\rho} \cup B_{\delta}(L) \big\rbrace, \\
\alpha_p =& \inf\big\lbrace t > \eta_{p-1} \; : \; X(t) \in S_{\rho} \big\rbrace, \\
\hat{\eta} =& \inf\big\lbrace t \geq 0 \; : X(t) \in   B_{\delta}(L) \big\rbrace .
\end{align}
Intuitively, one expects that it is much more likely that $X(\eta_p) \in H_{\rho}$ than $X(\eta_p) \in B_{\delta}(L)$. We will show that- writing $j$ to be such that $\eta_j =  \hat{\eta}$, $j$ must scale as $\exp\big( -\epsilon^{-1}(b - V(L) \big)$. Indeed since we are dealing with binary events, in essence this probability can be accurately estimated using the binomial theorem. The next step in the proof is to show that the typical values of $\eta_p - \alpha_p$ and $\alpha_p - \eta_{p-1}$ do not significantly vary as long as $p \leq j$. These two steps will then imply that with very high probability it must be that $\tau_1 \simeq  v \exp\big( -\epsilon^{-1}(b - V(L)) \big)$ where $v$ can be loosely thought of as the expected value of $\alpha_p- \eta_{p-1}$. In fact as long as we obtain upper and lower bounds on $\alpha_p- \eta_{p-1}$, we do not need a precise value for $v$, because to leading order it does not affect the scaling of $\tau$, since $\epsilon \log \tau = \epsilon \log v - b + V(L) \simeq V(L) - b$.

We now make these ideas more concrete. Our first aim is to obtain an accurate lower bound on the asymptotic probability $\epsilon \log \mathbb{P}\big( \epsilon \log (\tau)  \in [b-\zeta , b+ \zeta ] \big)$. To do this we identify a set of events $\mathcal{Y}_{\epsilon}(b,\zeta)$ which (i) imply that $\epsilon \log \tau \simeq b$  and (ii) we will see further below that these events are the most likely `means' that $\epsilon \log \tau \simeq b$.

To this end, define the set of indices $I_{\epsilon} = \big\lbrace j\in \mathbb{Z}^+ \; : \;  \epsilon \log j \in [ b- \zeta / 4, b+ \zeta / 4] \big\rbrace$. We first claim that as long as $V(L) - b - 2 \zeta > 0$, as long as $\epsilon$ is sufficiently small,
\begin{equation} \label{eq: final proof first to show}
\mathcal{Y}_{\epsilon}(b,\zeta) \subseteq \big\lbrace \epsilon \log (\tau) \in ( b-\zeta , b+\zeta )  \big\rbrace.
\end{equation}
where (for the constant $C > 0$ in the statement of Lemma \ref{Lemma first hitting time OU} further below),
\begin{multline}
\mathcal{Y}_{\epsilon}(b,\zeta) = \bigg\lbrace \text{ For some }j\in I_{\epsilon} \text{ it holds that }(i) \; \hat{\eta} = \eta_j , (ii) \sup_{1\leq k\leq j}(\eta_k - \eta_{k-1} ) \leq  2\exp\big( \epsilon^{-1} C \rho^2 \big) ,\\
(iii) \inf_{1\leq k \leq j-1}(\alpha_k - \eta_{k-1} ) \geq T( V(L) + b +2 \zeta , \zeta ) \bigg\rbrace  , 
\end{multline}
and $T( V(L) - b - 2 \zeta , \zeta  ) $ is defined in Lemma \ref{Lemma Bound fluctuations uniformly}. It is assumed that $\rho$ is sufficiently small that
\begin{align} \label{eq C rho squared zeta scaling}
C\rho^2 \leq \zeta / 4.
\end{align}
To see why \eqref{eq: final proof first to show} is true, if the event $\mathcal{Y}_{\epsilon}(b,\zeta)$ holds, then it must be that
\begin{align}
\tau &\leq  \sum_{k \leq j}(\eta_k - \eta_{k-1})\nonumber \\
&\leq  \big| I_{\epsilon} \big| \big\lbrace \exp\big( \epsilon^{-1}C\rho^2 \big) + \zeta \big\rbrace \nonumber \\
&\leq  \exp\big( (b+\zeta)/4 \big)\big\lbrace \exp\big( \epsilon^{-1}\zeta / 4 \big) + \zeta \big\rbrace \nonumber \\
&\leq  \exp\big( \epsilon^{-1}(b+\zeta) \big),
\end{align}
as long as $\epsilon$ is sufficiently small. 


It is immediate from \eqref{eq: final proof first to show} that
\begin{align}
\epsilon \log \mathbb{P}\big( \epsilon \log (\tau) \in ( b-\zeta , b+\zeta ) \big)  \geq \epsilon \log \mathbb{P}\big( \mathcal{Y}_{\epsilon}(b,\zeta) \big).
\end{align}
We now demonstrate that for any $a > 0$, for all small enough $\zeta \ll 1$, there exists $\bar{\epsilon}(\zeta,a)$ such that for all $\epsilon < \bar{\epsilon}(\zeta,a)$,
\begin{align}
\epsilon \log \mathbb{P}\big( \mathcal{Y}_{\epsilon}(b,\zeta) \big) &\in \big[ b - V(L) - a , b - V(L) + a \big] ,\label{eq: final proof second to show}
\end{align}
which would imply that
\begin{align}\label{eq: tempry final stages}
\epsilon \log \mathbb{P}\big( \epsilon \log (\tau)  \in [b-\zeta , b+ \zeta ] \big) \geq b - V(L)-a.
\end{align}
To show \eqref{eq: final proof second to show}, we are going to demonstrate that for all sufficiently small $\zeta$, there exists $\bar{\epsilon}(\zeta)$ such that for all $\epsilon < \bar{\epsilon}(\zeta)$, the following four bounds hold
\begin{gather}
\epsilon \log \mathbb{P}\bigg( \text{ For all }j\in I_{\epsilon}, \; \hat{\eta} \neq \eta_j \bigg) < b - V(L) ,  \label{eq: final proof 1} \\
\epsilon \log \mathbb{P}\bigg( \text{ For some }j\in I_{\epsilon}, \; \hat{\eta} = \eta_j \bigg) \in [ b - V(L) - \zeta/2 , b - V(L) + \zeta/2] , \label{eq: final proof 1 1} \\
\epsilon \log \mathbb{P}\bigg( \text{ For some }j\in I_{\epsilon}, \text{ it holds that }\hat{\eta} = \eta_j \text{ and } \sum_{1\leq k\leq j}(\eta_k - \eta_{k-1} ) \geq 2j \exp\big( \epsilon^{-1}C\rho^2 \big)  \bigg) < b-V(L) - 1  \label{eq: final proof 2}  ,\\
\epsilon \log \mathbb{P}\bigg(  \inf_{k \in I_{\epsilon}}(\alpha_k - \eta_{k-1} ) < T( V(L) - b + 2 \zeta , \rho) \bigg) < b - V(L) -2 \zeta . \label{eq: final proof 3} 
\end{gather}
Indeed \eqref{eq: final proof 1}  - \eqref{eq: final proof 3} suffice for \eqref{eq: final proof second to show} since
\begin{multline}
  \mathbb{P}\big( \mathcal{Y}_{\epsilon}(b,\zeta) \big) = \mathbb{P}\bigg( \text{ For some }j\in I_{\epsilon} \text{ it holds that } \hat{\eta} \in \eta_j \bigg)  \nonumber \\
  \qquad\qquad-\mathbb{P}\bigg( \text{ For some }j\in I_{\epsilon}, \text{ it holds that }\hat{\eta} = \eta_j \text{ and } \sum_{1\leq k\leq j}(\eta_k - \eta_{k-1} ) \geq 2j \exp\big(\epsilon^{-1} C \rho^2 \big)  \bigg) \\
 - \mathbb{P}\bigg( \text{ For some }j\in I_{\epsilon}, \text{ it holds that }\hat{\eta} = \eta_j\text{ and } \sum_{1\leq k\leq j}(\eta_k - \eta_{k-1} ) \leq 2j \exp\big(\epsilon^{-1} C \rho^2 \big) \\ \text{ and }  \inf_{k \in I_{\epsilon}}(\alpha_k - \eta_{k-1} ) < T( V(L) - b + 2 \zeta , \rho) \bigg). \label{eq: expansion probability}
\end{multline}
Each of the terms on the right hand side is of the order of $\exp(-c\epsilon^{-1})$. Thanks to \eqref{eq: final proof 2} and \eqref{eq: final proof 3}, the last two terms of the RHS of \eqref{eq: expansion probability} are each of the order of $\exp\big( - \epsilon^{-1}(b - V(L)  - 2\zeta) \big)$. This probability is negligible in comparison to the other term on the RHS of \eqref{eq: expansion probability}, i.e. $\mathbb{P}\big( \text{ For some }j\in I_{\epsilon} \text{ it holds that } \hat{\eta} \in \eta_j \big)$, which scales as $\exp\big( - \epsilon^{-1}(b-V(L)  \pm  \zeta/2) \big)$.

Lets now demonstrate \eqref{eq: final proof 1} and \eqref{eq: final proof 1 1}. Notice that
\begin{align}
 \mathbb{P}\big( \text{ For some }j\in I_{\epsilon}, \; \hat{\eta} \in \eta_j \big) = \sum_{j\in I_{\epsilon}} \mathbb{P}\big( \hat{\eta} \in \eta_j \big) .
\end{align}
Thanks to Lemma \ref{Lemma Large Deviations Applied}, for any $\tilde{\zeta} > 0$, as long as $\rho$ is sufficiently small, for any $j \geq 1$, the conditional probability is such that
\begin{align}
\lsup{\epsilon} \bigg| \epsilon \log \mathbb{P}\big( \hat{\eta} = \eta_j \; | \; \hat{\eta} \geq j-1 \big) - V(L) \bigg| \leq \tilde{\zeta}.
\end{align}
Write $p_{\epsilon} = \exp\big( -\epsilon^{-1}(V(L) + \tilde{\zeta}) \big)$ and $q_{\epsilon} = \exp\big( -\epsilon^{-1}(V(L) - \tilde{\zeta}) \big)$. A simple computation then implies that
\begin{align}
 \sum_{j\in I_{\epsilon}} \mathbb{P}\big( \hat{\eta} \in \eta_j \big) &\leq   \sum_{j\in I_{\epsilon}} q_{\epsilon}\big( 1- p_{\epsilon} \big)^{j-1}, \\
  \sum_{j\in I_{\epsilon}} \mathbb{P}\big( \hat{\eta} \in \eta_j \big) &\geq   \sum_{j\in I_{\epsilon}} p_{\epsilon}\big( 1- q_{\epsilon} \big)^{j-1}.
\end{align}
Now $\epsilon \log \big| I_{\epsilon} \big| \in [b - \zeta / 4 , b + \zeta/4]$, and
\begin{align}
\big( 1- p_{\epsilon} \big)^{j-1} &\leq 1- (j-1) p_{\epsilon} + O\big( j^2 p_{\epsilon}^2 \big), \\
\big( 1- q_{\epsilon} \big)^{j-1} &\geq 1 - (j-1) q_{\epsilon} .
\end{align}
Furthermore for $j \in I_{\epsilon}$, $\epsilon \log (j p_{\epsilon}) \simeq b - V(L) < 0$, and so terms of order $j^2 p_{\epsilon}^2$ are negligible. We can thus infer that for small enough $\epsilon$, and as long as $\tilde{\zeta}$ is sufficiently small
\begin{align}
 \epsilon \log \sum_{j\in I_{\epsilon}} \mathbb{P}\big( \hat{\eta} \in \eta_j \big) \in [b-\zeta/2 , b+\zeta/2].
\end{align}
and therefore \eqref{eq: final proof 1 1} holds. The demonstration of \eqref{eq: final proof 1}  is similar.

We next demonstrate \eqref{eq: final proof 3}. 
\begin{multline}
  \mathbb{P}\bigg( \text{ For some }j\in I_{\epsilon}, \text{ it holds that }\hat{\eta} = \eta_j \text{ and } \sum_{1\leq k\leq j}(\eta_k - \eta_{k-1} ) \geq 2j \exp\big( \epsilon^{-1} C\rho^2 \big)  \bigg)\\
  \leq   \mathbb{P}\bigg( \text{ For some }j\in I_{\epsilon}, \text{ it holds that }\hat{\eta} = \eta_j \text{ and } \sum_{1\leq k\leq j}(\alpha_k - \eta_{k-1} ) \geq j\exp\big( \epsilon^{-1} C\rho^2 \big)  \bigg)\\
+ \mathbb{P}\bigg( \text{ For some }j\in I_{\epsilon}, \text{ it holds that }\hat{\eta} = \eta_j \text{ and }\sum_{1\leq k\leq j}(\eta_k - \alpha_{k} ) \geq j\exp\big( \epsilon^{-1} C\rho^2 \big)    \bigg).
\end{multline}
Employing a union-of-events bound,
\begin{equation}
\begin{split}
  \mathbb{P}\bigg( \text{ For some }j\in I_{\epsilon}, \text{ it holds that }\hat{\eta} = \eta_j \text{ and }\sum_{1\leq k\leq j}(\alpha_k - \eta_{k-1} ) \geq j\exp\big( \epsilon^{-1} C\rho^2 \big)    \bigg)  \\ 
  \leq \big| I_{\epsilon} \big|  \sup_{x\in H_{\rho}} P_x\bigg( \big( \epsilon \log \inf\lbrace t\geq 0 \; : \;  X(t) \in S_{\rho} \rbrace \big) \geq  C \rho^2  \bigg) .
\end{split}
\end{equation}
Since $\epsilon \log | I_{\epsilon} |$ is uniformly upperbounded as $\epsilon \to 0^+$, we thus obtain from Lemma \ref{Lemma first hitting time OU} (further below) that
\begin{multline}
\lim_{\epsilon \to 0^+} \epsilon \log \mathbb{P}\bigg( \text{ For some }j\in I_{\epsilon}, \text{ it holds that }\hat{\eta} = \eta_j \text{ and } \sum_{1\leq k\leq j}(\alpha_k - \eta_{k-1} ) \geq (j-1)C\rho^2  \bigg) = -\infty.
\end{multline}
 Lemma \ref{Lemma first hitting time OU} also implies that
\begin{align}
\lim_{\epsilon \to 0^+} \epsilon \log  \mathbb{P}\bigg( \text{ For some }j\in I_{\epsilon}, \text{ it holds that }\hat{\eta} = \eta_j \text{ and }\sum_{1\leq k\leq j}(\eta_k - \alpha_{k} ) \geq j\exp\big( \epsilon^{-1} C\rho^2 \big)    \bigg).
\end{align}
We may therefore conclude that \eqref{eq: final proof 3} holds.

Since we have established \eqref{eq: final proof 1}-\eqref{eq: final proof 3} , we may conclude that \eqref{eq: final proof second to show} is true. This asymptotic estimate sets the benchmark against which the probabilities of other events will be compared.

Now define the event
\begin{multline}
\mathcal{H}_{\epsilon,\zeta} = \bigg\lbrace \text{ For any }j\in I_{\epsilon}\text{ such that }\hat{\eta} = \eta_j \text{, it holds that } \sum_{1\leq k\leq j}(\eta_k - \eta_{k-1} ) \leq 2j \exp\big( \epsilon^{-1}C\rho^2 \big) \text{ and }\\
 \inf_{k \in I_{\epsilon}}(\alpha_k - \eta_{k-1} ) \geq T( V(L) - b + 2 \zeta , \rho)\bigg\rbrace     .
\end{multline}
We have seen from \eqref{eq: final proof 2} and \eqref{eq: final proof 3} that
\begin{equation}
\lsup{\epsilon} \epsilon \log \mathbb{P}\big(  \mathcal{H}_{\epsilon,\zeta}^c \big) \leq b - V(L) - 2\zeta,   
\end{equation}
which means that the probability of $\mathcal{H}_{\epsilon,\zeta}^c$ is negligible compared to the other events. Thus to finish the Theorem, it suffices to prove that as long as $\rho$ is small enough,
\begin{align}
\lim_{\delta \to 0^+}\lim_{\zeta \to 0^+}\lim_{\epsilon\to 0^+}\left \vert \epsilon \log \mathbb{P}\big( \epsilon \log \tau \in [b-\zeta, b+\zeta ], \mathcal{H}_{\epsilon,\zeta} \big) - b + V(L)  \right \vert &=0,  \label{eq: single walker hitting time asymptotic last}\\
\lim_{\delta \to 0^+}\lim_{\epsilon\to 0^+}\epsilon \log \mathbb{P}\big( \epsilon \log \tau < b-\zeta , \mathcal{H}_{\epsilon,\zeta}\big) &< b- V(L),  \label{eq: single walker hitting time asymptotic 2 last} \\
\lim_{\delta \to 0^+}\lim_{\epsilon\to 0^+}\epsilon \log \mathbb{P}\big( \epsilon \log \tau > b+\zeta , \mathcal{H}_{\epsilon,\zeta} \big) &< b- V(L).  \label{eq: single walker hitting time asymptotic 3 last} 
\end{align}
Starting with \eqref{eq: single walker hitting time asymptotic 2 last}, as long as $2\exp(\epsilon^{-1} C \rho^2) \leq \exp(\zeta / 2)$, it follows that
\begin{align}
\mathbb{P}\big( \epsilon \log \tau < b-\zeta , \mathcal{H}_{\epsilon,\zeta}\big) &\leq \mathbb{P}\big( \sum_{j: \eta_j \leq \hat{\eta}}1 \leq \exp\big( \epsilon^{-1}(b - \zeta/4) \big) \big) \nonumber \\
&\leq   \sum_{j: \epsilon \log j \leq b-\zeta/4} q_{\epsilon}\big( 1- p_{\epsilon} \big)^{j-1}\nonumber \\
&\leq \exp\big( \epsilon^{-1}(b-V(L) - \tilde{\zeta}) -\zeta/4\big) + O\big( \exp(\epsilon^{-1}b-2V(L)\epsilon^{-1}) \big),
\end{align}
and as long as $\tilde{\zeta} \ll \zeta$ (which is guaranteed as long as $\delta \to 0^+$), we obtain \eqref{eq: single walker hitting time asymptotic 2 last}. The proofs of \eqref{eq: single walker hitting time asymptotic last} and \eqref{eq: single walker hitting time asymptotic 3 last} are analogous.

\end{proof}

\begin{lemma}\label{Lemma first hitting time OU}
There exists a constant $C > 0$ ($C$ is independent of $\epsilon$ and $\rho$) such that for all sufficiently small $\rho$
\begin{align}
\lim_{\epsilon \to 0^+} \sup_{x\in H_{\rho}} \epsilon \log P_x\bigg( \epsilon \log \inf\lbrace t\geq 0 \; : \;  X(t) \in S_{\rho} \rbrace \geq  C \rho^2  \bigg) = -\infty \label{eq: OU lemma to prove 1},\\
\lim_{\epsilon \to 0^+} \sup_{x\in S_{\rho}} \epsilon \log P_x\bigg( \epsilon \log \inf\lbrace t\geq 0 \; : \;  X(t) \in H_{\rho} \rbrace \geq  C \rho^2  \bigg) = -\infty\label{eq: OU lemma to prove 2}.
\end{align}
\end{lemma}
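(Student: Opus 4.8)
The plan is to prove both limits by the same two ingredients: a \emph{one-step hitting estimate} over a unit time window, and then iteration via the strong Markov property over $\asymp\exp(\epsilon^{-1}C\rho^2)$ such windows. For \eqref{eq: OU lemma to prove 1} the one-step estimate asserts that there is a constant $A_0>0$, depending only on $C_f$ and $d$, such that for all small $\rho$ and all small $\epsilon$ the probability that the walker, started anywhere in the closed ball $H_{2\rho}$, reaches $S_\rho$ within one unit of time is at least $\exp(-\epsilon^{-1}A_0\rho^2)$. To see this I would take as reference path the straight radial segment from the starting point outward to the sphere $\{\norm{z}=\tfrac{9}{4}\rho\}$ (overshooting $S_\rho$ slightly, so that nearby paths genuinely cross it), traversed at constant speed in unit time; since $f(0)=0$ and \eqref{eq: f lipschitz} give $\norm{f(y)}\le 2C_f\rho$ along the way, the rate functional \eqref{eq:ratefn} of this path is at most $A_0\rho^2$, and the Freidlin--Wentzell lower bound \eqref{FW lower bound}, in its form uniform over a compact set of initial conditions \cite{dembo2009large}, yields the estimate (the $o(1)$ correction to the exponent being absorbed into $A_0$). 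Now put $C:=2A_0$. Since the walker remains in $H_{2\rho}$ up to the first hitting time $\tau_{S_\rho}$ of $S_\rho$, applying the strong Markov property at the integer times and iterating the one-step estimate gives $P_x(\tau_{S_\rho}>m)\le(1-\exp(-\epsilon^{-1}A_0\rho^2))^{m}$ for $x\in H_\rho$; taking $m\asymp\exp(\epsilon^{-1}C\rho^2)$ makes the right-hand side at most $\exp(-\exp(\epsilon^{-1}A_0\rho^2))$ for $\epsilon$ small, whence $\epsilon\log P_x(\epsilon\log\tau_{S_\rho}\ge C\rho^2)\le-\epsilon\exp(\epsilon^{-1}A_0\rho^2)\to-\infty$ uniformly over $x\in H_\rho$, which is \eqref{eq: OU lemma to prove 1}.

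The reverse bound \eqref{eq: OU lemma to prove 2} is the harder half, because the Lipschitz bound $\norm{f(x)}\le C_f\norm{x}$ alone does not drive the walker back to the origin. Here I must use that $0$ is a (locally exponentially) stable fixed point of the deterministic flow: in a fixed neighbourhood $H_{\rho_1}$ the process is a small perturbation of the Ornstein--Uhlenbeck process $dY=Df(0)\,Y\,dt+\sqrt{2\epsilon}\,dW$ --- the comparison from which the lemma takes its name --- whose time to reach $H_\rho$ from $S_\rho$ is $\Theta(\log(\rho_1/\rho))$ and, crucially, for which the probability of \emph{not} reaching $H_\rho$ within one such window is at most $\exp(-\epsilon^{-1}c\rho^2)$, the exponent coming from the Gaussian tail of the OU noise, whose variance, unlike that of a free Brownian path, does not grow with the length of the window. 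For starting points outside $H_{\rho_1}$ I would instead use that, $0$ being globally attracting, the deterministic flow carries every bounded ball into $H_{\rho_1}$ in finite time, so that shadowing the deterministic flow again yields a one-step lower bound of the required exponential form. Combining these two ingredients with the strong Markov iteration over $\asymp\exp(\epsilon^{-1}C\rho^2)$ windows (enlarging $C$ if necessary) would give \eqref{eq: OU lemma to prove 2}, once one controls the walker straying far from the origin before returning to $H_\rho$.

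That last point is where I expect the real work. The iteration for \eqref{eq: OU lemma to prove 2} is not clean the way it was for \eqref{eq: OU lemma to prove 1}: conditioned on not yet having hit $H_\rho$, the walker can be anywhere outside $H_\rho$, and from far away the return is slow. Confining the walker to a \emph{fixed} ball $H_R$ only helps exponentially --- leaving $H_R$ within time $\exp(\epsilon^{-1}C\rho^2)$ has probability of order $\exp(-\epsilon^{-1}(\bar V_R-C\rho^2))$, with $\bar V_R:=\inf_{\norm{y}=R}V(y)$ and $V$ the quasipotential \eqref{eq:quasi potential} --- so $\epsilon\log(\cdot)$ would tend to a finite negative constant, not to $-\infty$. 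To reach $-\infty$ one must exploit that avoiding a neighbourhood of the attractor for an exponentially long time is super-exponentially unlikely because it forces the walker to keep working against the drift; concretely I would either re-run, at a scale $R=R_\epsilon\to\infty$, the same cycle-of-stopping-times construction ($\eta_p$, $\alpha_p$) used above --- so the argument is self-similar across scales --- or build a supermartingale of the form $e^{ct}\psi(X_t)$ from a Lyapunov/drift function $\psi$ tied to the quasipotential (using $\nabla V\cdot f=-\norm{\nabla V}^2\le0$), with the rate $c$ of order $\epsilon^{-1}$.

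Two remaining chores should be routine: spelling out the uniform-over-compacts version of \eqref{FW lower bound} used in the one-step estimates \cite{dembo2009large}, and handling the degenerate case in which $Df(0)$ is not Hurwitz by replacing the OU comparison with a polynomially relaxing one; since every one-step action above is $O(\rho^2)$, neither affects the constant $C$. The analysis of \eqref{eq: OU lemma to prove 2} also implicitly needs the attractor to be suitably confining (e.g.\ the quasipotential coercive), which is what makes the super-exponential rate available; this is presumably part of the standing assumptions on $f$.
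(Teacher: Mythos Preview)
Your approach to \eqref{eq: OU lemma to prove 1} is correct but takes a different route from the paper. You obtain the one-step estimate $P_x(\tau_{S_\rho}\le 1)\ge\exp(-\epsilon^{-1}A_0\rho^2)$ from the Freidlin--Wentzell lower bound \eqref{FW lower bound} applied to a radial path with action $O(\rho^2)$, and then iterate via the strong Markov property, exploiting that the walker is automatically confined to $H_{2\rho}$ before hitting $S_\rho$. The paper instead applies It\^o's formula to $\|X_t\|$, uses the local stability bound $\langle x,f(x)\rangle\le -a\|x\|^2$ to derive a differential inequality, rewrites the martingale part as $\sqrt{2\epsilon}\,dw$ for a one-dimensional Brownian motion $w(t)=\int_0^t\|X_s\|^{-1}\langle X_s,dW_s\rangle$, and controls the increments of $w$ over windows of length $a^{-1}$ by Doob's inequality (Lemmas~\ref{eq: bound rare fluctuations of stochastic integral} and~\ref{eq: bound on Z epsilon rho k}). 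Your route uses only the Lipschitz bound on $f$, not the sign of the drift; the paper's extracts the $\rho^2$ scaling from an explicit Gaussian tail rather than from the rate functional.

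On the relative difficulty of the two halves you and the paper disagree: the paper calls \eqref{eq: OU lemma to prove 2} ``similar, and also easier'' and omits the proof; you call it the harder half. Both views have a point. The paper's ``easier'' refers to the one-step estimate, since the attracting drift carries the walker into $H_\rho$ at zero action. Your ``harder'' refers to the iteration: conditioned on not yet having hit $H_\rho$, the walker is not confined to any fixed ball, so uniformity of the one-step estimate is lost. Your observation that confining to $H_R$ only yields $\epsilon\log P\to -(\bar V_R-C\rho^2)$, a finite negative number, is correct, and this is a genuine lacuna in the paper's dismissal---the It\^o approach faces the same issue once $\|X_t\|$ leaves the local-stability neighbourhood where $\langle x,f(x)\rangle\le -a\|x\|^2$. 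Your proposed fixes (letting $R=R_\epsilon\to\infty$ slowly, or a Lyapunov supermartingale built from the quasipotential) are the right ideas and would close the gap given coercivity of $V$; neither is needed for \eqref{eq: OU lemma to prove 1}.
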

\begin{proof}
We provide a proof for \eqref{eq: OU lemma to prove 1}. The proof of \eqref{eq: OU lemma to prove 2} is similar, and also easier, because the deterministic dynamics is attracted towards the origin (by assumption).

Applying Ito's Lemma to the SDE \eqref{eq:sde_gen}, it follows that
\begin{equation}
\label{eq:sde_gen restated in final proof}
d \| X_t \|^2 = \big\lbrace 2\langle  X_t , f( X_t) \rangle +  \epsilon d \big\rbrace dt + 2\sqrt{\epsilon}\langle X_t ,  dW_t \rangle .
\end{equation}
Since the map $u \to \sqrt{u}$ is twice continuously differentiable for $u > 0$, in the case that $\| X_t \| > 0$, another application of Ito's Lemma yields that
\begin{align} \label{eq: norm X t SDE}
d \| X_t \| = \big\lbrace \| X_t \|^{-1} \langle  X_t , f( X_t) \rangle +\frac{ \epsilon d}{2} \| X_t \|^{-1} - \frac{\epsilon}{2}\| X_t \|^{1/2} \big\rbrace dt
+\sqrt{2\epsilon} \| X_t \|^{-1} \langle  X_t , dW_t \rangle.
\end{align}
Since $f(0) = 0$ $0$ is a stabled fixed point and $f$ is Lipschitz, there must exist constants $a, \hat{\rho} > 0$ such that for all $x \in \mathbb{R}^d$ such that $\| x \| \leq  2\hat{\rho}$,
\begin{align}
  \langle  x , f( x) \rangle &\geq  -  2a \| x\|^2 ,\\
   \langle  x , f( x) \rangle &\leq  -  a \| x\|^2 .
\end{align}
It is not hard to check that for any $T > 0$,  $\inf_{t \in [0,T]} \| X_t \| > 0$ with unit probability, since the term $\frac{  \epsilon d}{2} \| X_t \|^{-1}$ dominates the drift in the limit as $ \| X_t \|$ goes to $0$. Hence the SDE in \eqref{eq: norm X t SDE} is always well-defined.

In the case that $\| X_t \| \in [\rho , 2\rho ]$,
\begin{align}
d \| X_t \| \leq \big\lbrace -a\| X_t \|   +\frac{  \epsilon d}{2 \rho} - \frac{\epsilon}{2} \sqrt{\rho} \big\rbrace dt
+\sqrt{2\epsilon} \| X_t \|^{-1} \langle  X_t , dW_t \rangle . \label{eq: differential inequality}
\end{align}
Write
\begin{align}
w(t) = \int_0^t  \| X_s \|^{-1} \langle  X_s , dW_s \rangle .
\end{align}
Now define the sequence of stopping times,
\begin{align}
\beta_0 &= 0 ,\\
\beta_j &= \inf \big\lbrace t\geq \beta_{j-1} + a^{-1} \; : \; \| X_t \| \geq \rho \big\rbrace .
\end{align}
For a positive integer $j \geq 0$, define the event
\begin{align}
\mathcal{Z}_{\epsilon,\rho,j} = \bigg\lbrace \sup_{t \in [\beta_j, \beta_{j+1}]} \big| w_t - w_{ \beta_j} \big| \leq   (2\epsilon)^{-1/2} \rho / 2\bigg\rbrace .
\end{align}
Define $q \in \mathbb{Z}^+$ to be the smallest positive integer such that 
\begin{align}
 \sup_{t \in [\beta_q, \beta_{q+1}]} \big| w_t - w_{ \beta_q} \big| >   (2\epsilon)^{-1/2} \rho / 2.
\end{align}
Note that $q$ is a random variable. Thanks to Lemma \ref{eq: bound on Z epsilon rho k}, it suffices that we show that there exists a constant $\bar{C} > 0$ such that
\begin{align}
\lim_{\rho \to 0^+}\lim_{\epsilon \to 0^+}\sup_{x\in H_{\rho}} \epsilon \log P_x \bigg( \epsilon \log q \geq \bar{C}\rho^2 \bigg) = - \infty.
\end{align}
Write $C$ to be the constant in Lemma \ref{eq: bound rare fluctuations of stochastic integral}, and write
\begin{align}
p_{\epsilon,\rho} = \exp\big( - C \rho^2 a \epsilon^{-1} \big).
\end{align}
We find that, writing $I_{\epsilon,\rho}$ to consist of all positive integers $r$ such that $\epsilon \log r \geq \bar{C}\rho^2$,
\begin{equation}
P_x \bigg(  \epsilon \log q \geq \bar{C}\rho^2 \bigg) \leq \sum_{ k \in I_{\epsilon,\rho}} (1- p_{\epsilon})^k \leq  \sum_{ k \in I_{\epsilon,\rho}} \exp\big(-k p_{\epsilon,\rho} \big)
\end{equation}
Hence
\begin{align}
\lim_{\epsilon \to 0^+}\epsilon \log P_x \bigg(  \epsilon \log q \geq \bar{C}\rho^2 \bigg)   \leq \lim_{\epsilon \to 0^+}\epsilon \log \bigg(\sum_{ k \in I_{\epsilon,\rho}} \exp\big(-k p_{\epsilon,\rho} \big) \bigg).
\end{align}
To bound the above summation, we first look at the ratio of successive terms,
\begin{align}
\lim_{\epsilon to 0^+}\sup_{k \in I_{\epsilon,\rho}} \exp\big(-(k+1) p_{\epsilon,\rho} \big) / \exp\big(-k p_{\epsilon,\rho} \big) = 0.
\end{align}
This means that, upperbounding the summation by the sum of a geometric sequence $1+c+c^2+c^3 +\ldots\leq 2$ (for any small enough $c \ll 1$), as long as $\epsilon$ is sufficiently small,
\begin{align}
\sum_{ k \in I_{\epsilon,\rho}} \exp\big(-k p_{\epsilon,\rho} \big) \leq 2\exp\big( -\exp(\epsilon^{-1} \bar{C} \rho^2 )p_{\epsilon,\rho} \big).
\end{align}
As long as $\bar{C}  > Ca$, the lemma must hold upon taking $\epsilon \to 0^+$.

\end{proof}

\begin{lemma} \label{eq: bound rare fluctuations of stochastic integral}
There exists a constant $C > 0$ such that for all small enough $\rho >0$ and all $\epsilon > 0$,
\begin{align}
\epsilon \log \sup_{x\in H_{\rho}} P_x \big( \mathcal{Z}_{\epsilon,\rho,1}^c \big) \leq - C \rho^2 a.
\end{align}
\end{lemma}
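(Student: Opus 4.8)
The plan is to recognize $w$ as a Brownian motion and thereby reduce the claim to a one–dimensional maximal inequality. First I would observe that $w_t=\int_0^t\|X_s\|^{-1}\langle X_s,dW_s\rangle$ is well defined because $\inf_{s\in[0,T]}\|X_s\|>0$ almost surely (established above), that its integrand $X_s/\|X_s\|$ is a unit vector, and hence that $w$ is a continuous martingale with quadratic variation
\begin{equation*}
\langle w\rangle_t=\int_0^t\|X_s\|^{-2}\langle X_s,X_s\rangle\,ds=t .
\end{equation*}
By L\'evy's characterization, $w$ is therefore a standard one–dimensional Brownian motion, so $\mathcal{Z}_{\epsilon,\rho,1}^c$ is the event that this Brownian motion oscillates by more than $\rho/(2\sqrt{2\epsilon})$ on the interval $[\beta_1,\beta_2]$; everything reduces to bounding (i) the length of that interval and (ii) the probability that a Brownian motion deviates that far over a time window of controlled length.

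The step I expect to require the most care is (i): a deterministic upper bound $\beta_2-\beta_1\le\kappa a^{-1}$, with $\kappa$ a universal constant independent of $\epsilon$, $\rho$, and the starting point $x\in H_\rho$. This is built into the construction of the stopping times $\beta_j$ — in which the waiting time $a^{-1}$ appears explicitly — together with the one–sided dissipativity $\langle x,f(x)\rangle\le-a\|x\|^2$ valid for $\|x\|\le2\hat\rho$, which acts precisely on the time scale $a^{-1}$; this is also where the hypothesis that $\rho$ be small enters. It is exactly this bound that produces the factor $a$ in the exponent asserted by the Lemma.

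Granting $T_0:=\kappa a^{-1}$, I would finish with the reflection principle. By the strong Markov property at the stopping time $\beta_1$, the process $\tilde w_s:=w_{\beta_1+s}-w_{\beta_1}$ is again a standard Brownian motion, and on $\mathcal{Z}_{\epsilon,\rho,1}^c$ (using $\beta_2-\beta_1\le T_0$) we have $\sup_{0\le s\le T_0}|\tilde w_s|>\rho/(2\sqrt{2\epsilon})$. Hence, uniformly over $x\in H_\rho$,
\begin{equation*}
P_x\big(\mathcal{Z}_{\epsilon,\rho,1}^c\big)\le\mathbb{P}\Big(\sup_{0\le s\le T_0}|\tilde w_s|>\tfrac{\rho}{2\sqrt{2\epsilon}}\Big)\le4\,\mathbb{P}\Big(\tilde w_{T_0}>\tfrac{\rho}{2\sqrt{2\epsilon}}\Big)\le2\exp\!\Big(-\frac{\rho^2}{16\,\epsilon\,T_0}\Big)=2\exp\!\Big(-\frac{a\rho^2}{16\kappa\,\epsilon}\Big),
\end{equation*}
using the reflection bound $\mathbb{P}(\sup_{s\le T_0}|\tilde w_s|>u)\le4\,\mathbb{P}(\tilde w_{T_0}>u)$ and the Gaussian tail estimate $\mathbb{P}(\tilde w_{T_0}>u)\le\tfrac12\exp(-u^2/(2T_0))$. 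Taking logarithms and multiplying by $\epsilon$ gives $\epsilon\log\sup_{x\in H_\rho}P_x(\mathcal{Z}_{\epsilon,\rho,1}^c)\le\epsilon\log2-\tfrac{a\rho^2}{16\kappa}$, which is at most $-C\rho^2a$ for any fixed $C<\tfrac{1}{16\kappa}$ once $\epsilon$ is small enough. This $C$ is exactly the constant entering $p_{\epsilon,\rho}=\exp(-C\rho^2a\epsilon^{-1})$ in the proof of Lemma~\ref{Lemma first hitting time OU}.
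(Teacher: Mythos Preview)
Your opening move---recognizing $w$ as a standard Brownian motion via L\'evy's characterization, since its quadratic variation is identically $t$---is correct and matches the paper. The subsequent maximal-inequality step is also fine in principle; the paper uses Doob's exponential submartingale inequality where you use reflection, but this is a cosmetic difference.

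The genuine gap is your step (i). You claim a \emph{deterministic upper} bound $\beta_2-\beta_1\le\kappa a^{-1}$, citing the dissipativity $\langle x,f(x)\rangle\le-a\|x\|^2$. But that drift condition pushes $\|X_t\|$ \emph{toward} $0$, not away from it; the definition of the $\beta_j$ already gives the \emph{lower} bound $\beta_2-\beta_1\ge a^{-1}$, and dissipativity only makes the wait for $\|X_t\|$ to return to level $\rho$ \emph{longer}, not shorter. No deterministic $\kappa$ exists: when $\epsilon$ is small the excursion back up to $\rho$ is noise-driven against the drift and can take arbitrarily long with positive probability. So the factor $a$ in the exponent cannot be produced by an upper bound on the stopping-time gap, and your heuristic ``built into the construction \ldots\ together with the one-sided dissipativity'' has the direction of the inequality backwards.

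The paper does not attempt to control $\beta_2-\beta_1$ at all. It simply applies Doob's inequality to the Brownian motion $w$ over the \emph{fixed} window $[0,a^{-1}]$, and the factor $a$ in the exponent arises because the variance of $w_{a^{-1}}-w_0$ is exactly $a^{-1}$. (There is admittedly some tension in the paper between the random interval $[\beta_1,\beta_2]$ appearing in the definition of $\mathcal{Z}_{\epsilon,\rho,1}$ and the deterministic window used in the proof; but the clear intent---and what you should follow---is to bound the Brownian oscillation over the deterministic time scale $a^{-1}$, invoking the strong Markov property at the relevant stopping time, rather than trying to bound the random gap itself.)
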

\begin{proof}
One checks that the quadratic variation of $w(t)$ is identically one. Thus - since the quadratic variation uniquely determines the distribution of $w$ \cite{karatzas1998brownian} - it must be that $w(t)$ is identical to Brownian Motion (in probability law). We therefore obtain using Doob's Submartingale Inequality that, for a constant $v > 0$,
\begin{equation}
\begin{split}
 P_x\bigg(  \sup_{t \in [0, a^{-1}]} \big( w_t - w_{ 0} \big) >   (2\epsilon)^{-1/2} \rho / 2 \bigg) &\leq \mathbb{E}^{P_x}\bigg[ \exp\bigg( v \epsilon^{-1/2} \big( w_{a^{-1}} - w_{ 0} \big) - \frac{v \rho}{2}2^{-1/2} \epsilon^{-1} \bigg) \bigg] \\
 &= \exp\bigg( v^2 \epsilon^{-1} a^{-1} / 2  - \frac{v \rho}{2}2^{-1/2} \epsilon^{-1} \bigg),
\end{split}
\end{equation}
since $w_{a^{-1}} - w_0$ is centered Gaussian, with variance $a^{-1}$. To optimize this inequality we choose $v = a\rho 2^{-3/2}$, and we find that there exists a constant $c > 0$ such that
 \begin{align}
 P_x\bigg(  \sup_{t \in [0, a^{-1}]} \big( w_t - w_{ 0} \big) >   (2\epsilon)^{-1/2} \rho / 2 \bigg) \leq \exp\bigg( - \frac{c\rho^2 a}{\epsilon} \bigg).
\end{align}
We similarly obtain that 
 \begin{align}
 P_x\bigg(  \sup_{t \in [0, a^{-1}]} \big( w_t - w_{ 0} \big) <  - (2\epsilon)^{-1/2} \rho / 2 \bigg) \leq \exp\bigg( - \frac{c\rho^2 a}{\epsilon} \bigg).
\end{align}
These two results imply the Lemma.
\end{proof}

\begin{lemma} \label{eq: bound on Z epsilon rho k}
There exists $\bar{\epsilon}_{\rho}>0$ such that for all $\epsilon < \bar{\epsilon}_{\rho}$ the following is true. If for some $j\in \mathbb{Z}^+$ all of the events $\lbrace \mathcal{Z}_{\epsilon,\rho,k} \rbrace_{k \leq j}$ hold, as long as $X_0 \in H_{\rho}$, it holds that
\begin{align}
\sup_{t \leq (j+1)a^{-1}} \| X_t \| \leq 2\rho.
\end{align}
\end{lemma}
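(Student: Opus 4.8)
The plan is to read this as a pathwise, deterministic statement: I fix a realization on which $\mathcal{Z}_{\epsilon,\rho,k}$ holds for every $k\le j$, and show directly that $\|X_t\|\le 2\rho$ for all $t\le(j+1)a^{-1}$. Since $\beta_k\ge ka^{-1}$ by construction, $(j+1)a^{-1}\le\beta_{j+1}$ and $[0,\beta_{j+1}]=\bigcup_{k=0}^{j}[\beta_k,\beta_{k+1}]$. Moreover, by the very definition of $\beta_{k+1}$ one has $\|X_t\|<\rho$ on $[\beta_k+a^{-1},\beta_{k+1})$, so the supremum over $[\beta_k,\beta_{k+1}]$ is attained on $[\beta_k,\beta_k+a^{-1}]$, and it suffices to bound $\sup_{t\in[\beta_k,\beta_k+a^{-1}]}\|X_t\|$ for each $k\le j$. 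I would fix $\bar{\epsilon}_\rho>0$ small enough that $\tfrac{\epsilon d}{2\rho}\le\tfrac{a\rho}{2}$ for all $\epsilon<\bar{\epsilon}_\rho$; then \eqref{eq: differential inequality}, together with $w_t=\int_0^t\|X_s\|^{-1}\langle X_s,dW_s\rangle$, gives the scalar comparison $d\|X_t\|\le -\tfrac{a\rho}{2}\,dt+\sqrt{2\epsilon}\,dw_t$ whenever $\|X_t\|\in[\rho,2\rho]$.

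The core ingredient is an excursion estimate: if $\theta<\sigma$ both lie in a single block $[\beta_k,\beta_{k+1}]$ with $\|X_\theta\|=\rho$, $\|X_\sigma\|=2\rho$, and $\|X_t\|\in(\rho,2\rho)$ on $(\theta,\sigma)$, then integrating the comparison over $[\theta,\sigma]$ gives $\rho=\|X_\sigma\|-\|X_\theta\|\le -\tfrac{a\rho}{2}(\sigma-\theta)+\sqrt{2\epsilon}(w_\sigma-w_\theta)$, hence $\sqrt{2\epsilon}(w_\sigma-w_\theta)>\rho$; but $\mathcal{Z}_{\epsilon,\rho,k}$ forces $\sqrt{2\epsilon}|w_\sigma-w_\theta|\le\sqrt{2\epsilon}\bigl(|w_\sigma-w_{\beta_k}|+|w_{\beta_k}-w_\theta|\bigr)\le\rho$, a contradiction. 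So, within a block, an excursion that is at level $\rho$ cannot reach level $2\rho$. I would then induct on $k$ (the base case $k=0$ being $\|X_0\|\le\rho$, since $X_0\in H_\rho$): under the hypothesis that the process enters block $k$ at a value $\le\rho$ (so $\beta_k$ itself is available as a $\theta$), set $\sigma$ to be the first time in $(\beta_k,\beta_k+a^{-1}]$ with $\|X_\cdot\|=2\rho$, if any, and $\theta=\sup\{t\in[\beta_k,\sigma):\|X_t\|\le\rho\}$; by continuity and maximality $\|X_\theta\|=\rho$ and $\|X_t\|\in(\rho,2\rho)$ on $(\theta,\sigma)$, and since $\theta,\sigma\in[\beta_k,\beta_k+a^{-1}]\subseteq[\beta_k,\beta_{k+1}]$ the excursion estimate rules out $\sigma$. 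Hence $\|X_t\|\le 2\rho$ throughout $[\beta_k,\beta_{k+1}]$.

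To close the induction I must verify that the process re-enters $H_\rho$ by time $\beta_k+a^{-1}$, i.e. $\|X_{\beta_k+a^{-1}}\|\le\rho$, so that the next block again "starts low" ($\|X_{\beta_{k+1}}\|=\rho$ when $\beta_{k+1}>\beta_k+a^{-1}$, and $=\|X_{\beta_k+a^{-1}}\|\le\rho$ otherwise). This is the step I expect to be the main obstacle, and it is precisely why the full relaxation time $a^{-1}$ enters the definition of $\beta_k$: as long as $\|X_t\|\ge\rho$, the drift of $\|X_t\|$ in \eqref{eq: norm X t SDE} is $\le -a\|X_t\|\le -a\rho$ (up to an $O(\epsilon)$ term absorbed by $\bar{\epsilon}_\rho$), so over a whole interval of length $a^{-1}$ spent at or above level $\rho$ the drift alone contributes a decrease of at least $\rho$, which cancels the starting value $\le\rho$ and leaves only the $\mathcal{Z}_{\epsilon,\rho,k}$-bounded noise $\le\rho/2$. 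The genuinely delicate bookkeeping is to control the case in which $\|X_t\|$ dips below $\rho$ in the interior of a block — the relaxation clock effectively restarts, leaving less time to recover — and, relatedly, the case in which several consecutive blocks satisfy $\beta_m=\beta_{m-1}+a^{-1}$ so that the trajectory never strictly descends below $\rho$ between them: here a putative bad excursion straddles block boundaries, and one must chain the comparison estimate across them and check that the accumulated negative drift, which grows linearly in the elapsed time at rate $\ge a\rho$, still dominates the noise, which $\mathcal{Z}$ controls only block-by-block. Working with $\|X_t\|^2$, whose drift is $\le -2a\|X_t\|^2+\epsilon d$ by \eqref{eq:sde_gen restated in final proof} and hence decays at the faster rate $2a$, provides the extra slack needed for this bookkeeping. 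Once $\|X_{\beta_{k+1}}\|\le\rho$ is secured the induction closes and the lemma follows.
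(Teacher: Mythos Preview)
Your approach coincides with the paper's: both use the scalar comparison \eqref{eq: differential inequality} together with the noise bound from $\mathcal{Z}_{\epsilon,\rho,k}$ to control $\|X_t\|$ block by block. The paper runs the argument as a last-index contradiction (set $\hat\tau=\inf\{t:\|X_t\|=2\rho\}$, take $\ell$ maximal with $\beta_\ell\le\hat\tau$ and $\|X_{\beta_\ell}\|=\rho$, and show that in fact $\hat\tau>\beta_{\ell+1}$ and $\|X_{\beta_{\ell+1}}\|\le\rho$, so $\ell+1$ also qualifies) rather than as a forward induction, but these are dual formulations of the same idea.

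One organizational simplification in the paper's version is worth noting: it integrates \eqref{eq: differential inequality} from the block anchor $\beta_\ell$ itself, so the noise contribution is bounded by $\rho/2$ directly from $\mathcal{Z}_{\epsilon,\rho,\ell}$ with no triangle inequality, and that single estimate delivers both $\sup_{[\beta_\ell,\beta_{\ell+1}]}\|X_t\|\le\|X_{\beta_\ell}\|+\rho/2<2\rho$ and, after substituting $\beta_{\ell+1}-\beta_\ell\ge a^{-1}$ to collect a drift contribution of at most $-\rho/2$, also $\|X_{\beta_{\ell+1}}\|\le\rho$. By anchoring instead at the last $\rho$-crossing $\theta$ you pay a factor of two in the noise bound and are then forced to establish $\|X_{\beta_k+a^{-1}}\|\le\rho$ by a separate computation; what you flag as the ``main obstacle'' largely dissolves if you integrate from $\beta_k$ as the paper does. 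On the other hand, your concern that \eqref{eq: differential inequality} is only available while $\|X_t\|\ge\rho$ is legitimate and in fact more scrupulous than the paper, which simply asserts the integrated bound for all $t\in[\beta_\ell,\beta_{\ell+1}]$ without addressing the possibility of a dip below $\rho$; the restart-from-$\theta$ device you already describe is the natural patch for that lacuna.
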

\begin{proof}
Define the stopping time 
\begin{align}
\hat{\tau} = \inf\big\lbrace t\geq 0 \; : \; \| X_t \| = 2\rho \big\rbrace ,
\end{align}
and define the (random) index $\ell \in \mathbb{Z}^+$ to be such that
\begin{equation}\label{eq: ell definition}
\beta_{\ell} = \sup\big\lbrace \beta_k \leq \hat{\tau} \; : \; \| X_{\beta_k} \| = \rho \big\rbrace .
\end{equation}
Suppose for a contradiction that $\ell \leq j$.
Thanks to the inequality in \eqref{eq: differential inequality}, and employing the bound in the definition of $\mathcal{Z}_{\epsilon,\rho,\ell} $, for all $t\in  [ \beta_{\ell}, \beta_{\ell+1}]$,
\begin{align} \label{eq: X ell bound}
 \| X_t \| - \| X_{\alpha_{\ell}}  \| \leq   - a(t - \alpha_{\ell}) \rho +(t-\alpha_{\ell}) \bigg( \frac{ \epsilon d}{2 \rho} - \frac{\epsilon}{2} \sqrt{\rho} \bigg) +  \frac{ \rho}{2}
\end{align}
As long as $\bar{\epsilon}_{\rho}$ is sufficiently small, since $\epsilon \leq \bar{\epsilon}_{\rho}$, it must be that
$
-a +  \frac{\epsilon d}{2 \rho} - \frac{\epsilon}{2} \sqrt{\rho} \leq -a/2
$.  
It then follows from \eqref{eq: X ell bound} that 
\begin{align}
\sup_{t\in  [ \beta_{\ell}, \beta_{\ell+1}] } \big\lbrace  \| X_t \| - \| X_{\beta_{\ell}} \| \big\rbrace \leq \rho / 2,
\end{align}
which implies that 
\begin{align}
\sup_{t\in  [ \beta_{\ell}, \beta_{\ell+1}] }  \| X_t \| < 2\rho \label{eq: last temporary alpha ell 0} \\
\| X_{\beta_{\ell+1}} \| \leq \rho , \label{eq: last temporary alpha ell}
\end{align}
where in \eqref{eq: last temporary alpha ell} we have substituted $\beta_{\ell+1} - \beta_{\ell} \geq a^{-1} $. Now \eqref{eq: last temporary alpha ell 0} implies that $\hat{\tau} > \beta_{\ell+1}$. However we would then find that \eqref{eq: last temporary alpha ell} contradicts the definition of $\ell$ in \eqref{eq: ell definition}.

\end{proof}

\begin{lemma} \label{Lemma Large Deviations Applied}
\begin{align}
\lim_{\delta \to 0^+}\lim_{\rho \to 0^+} \lim_{\epsilon \to 0^+}\sup_{x\in S_{\rho}} \epsilon \log P_x\big( \inf\lbrace t  \; : \; X(t) \in B_{\delta}(L) \rbrace < \inf\lbrace t > 0 \; : \; X(t) \in H_{\rho}  \rbrace  \big) = - V(L), \\
\lim_{\rho \to 0^+} \lim_{\delta \to 0^+}\lim_{\epsilon \to 0^+}\sup_{x\in S_{\rho}} \epsilon \log P_x\big( \inf\lbrace t  \; : \; X(t) \in B_{\delta}(L) \rbrace < \inf\lbrace t > 0 \; : \; X(t) \in H_{\rho}  \rbrace  \big) = - V(L) .
\end{align}
\end{lemma}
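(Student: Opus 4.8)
The statement asserts that, for a single walker started on the sphere $S_\rho$, the probability of reaching the target ball $B_\delta(L)$ before returning to the small neighbourhood $H_\rho$ of the attractor behaves, on the exponential scale, like $\exp(-\epsilon^{-1}V(L))$ in the double limit. Write $\tau_L = \inf\{t\ge 0 : X_t \in B_\delta(L)\}$ and $\sigma_\rho = \inf\{t>0 : X_t \in H_\rho\}$. The plan is to establish the two one–sided estimates
\begin{align*}
\limsup_{\epsilon\to0^+}\ \sup_{x\in S_\rho}\ \epsilon\log P_x(\tau_L < \sigma_\rho) &\le -V(L) + \omega(\rho,\delta),\\
\liminf_{\epsilon\to0^+}\ \inf_{x\in S_\rho}\ \epsilon\log P_x(\tau_L < \sigma_\rho) &\ge -V(L) - \omega(\rho,\delta),
\end{align*}
with an error $\omega(\rho,\delta)$ tending to $0$ as $(\rho,\delta)\to(0^+,0^+)$; since $\omega$ then vanishes in either iterated order, both displayed equalities in the lemma follow simultaneously. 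A preliminary fact I would record and use repeatedly is that the quasipotential $z\mapsto V(z)$ is locally Lipschitz: a straight segment traversed at unit speed gives a path realising $V(z)\le V(z') + C\|z-z'\|$ on bounded sets. In particular $V(0)=0$, so $\sup_{x\in S_\rho}V(x) = O(\rho)$ and $\inf_{z\in B_\delta(L)}V(z) = V(L)+O(\delta)$, which is how the corrections $\omega(\rho,\delta)$ will arise.

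\emph{Lower bound.} Fix $\xi>0$. By the definition of the quasipotential \eqref{eq:quasi potential} there is a finite time $T_0$ and a path $\gamma:[0,T_0]\to\mathbb{R}^d$ with $\gamma(0)=0$, $\gamma(T_0)=L$ and $I_{T_0}(\gamma)\le V(L)+\xi$. Let $s^\ast$ be the last time at which $\|\gamma(s)\|=2\rho$; then $\gamma|_{[s^\ast,T_0]}$ runs from $S_\rho$ to $L$, stays in $\{\|y\|\ge 2\rho\}$, and has cost at most $V(L)+\xi$. For an arbitrary $x\in S_\rho$ I would prepend, in unit time, a great–circle arc on the sphere of radius $2\rho$ joining $x$ to $\gamma(s^\ast)$; since $\|f\|\le 2C_f\rho$ on that sphere (by \eqref{eq: f lipschitz} and $f(0)=0$) and the arc has length $O(\rho)$, its cost is $O(\rho^2)$, uniformly in $x$. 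The concatenation $\psi_x$ then runs from $x$ to $L$ in a fixed time $T'$, stays in $\{\|y\|>\rho\}$, and satisfies $I_{T'}(\psi_x)\le V(L)+\xi+O(\rho^2)$. Applying the large deviations lower bound \eqref{FW lower bound} for the process started at $x$ (in its version uniform over a compact set of initial conditions) to a tube of radius $\kappa<\min(\rho,\delta)$ about $\psi_x$ — every trajectory in the tube avoids $H_\rho$ throughout $[0,T']$ and lies in $B_\delta(L)$ at time $T'$, hence has $\tau_L<\sigma_\rho$ — yields $\liminf_{\epsilon\to0^+}\inf_{x\in S_\rho}\epsilon\log P_x(\tau_L<\sigma_\rho)\ge -(V(L)+\xi+O(\rho^2))$; sending $\xi\to0^+$ finishes this direction.

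\emph{Upper bound.} For a fixed (large) deterministic time $T=T(\rho)$, to be chosen below, I would split
\[
P_x(\tau_L<\sigma_\rho)\ \le\ P_x\big(\tau_L\le T,\ \tau_L<\sigma_\rho\big)\ +\ P_x\big(\sigma_\rho\wedge\tau_L>T\big).
\]
The first term is at most $P_x\big(X_{[0,T]}\in\mathcal{C}\big)$, where $\mathcal{C}=\{z:\ \exists\,s\le T,\ z(s)\in B_\delta(L)\text{ and }\|z(u)\|\ge\rho\ \forall\,u\le s\}$ is closed in the uniform topology; by the large deviations upper bound \eqref{FW upper bound} applied to $\mathcal{C}$, this is bounded on the exponential scale by $-\inf_{\mathcal{C}}I_T$. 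Truncating an admissible path at its first hit of $B_\delta(L)$ (which only decreases $I_T$), together with $V(z)\le V(x) + (\text{cost of any path }x\to z)$, gives $\inf_{\mathcal{C}}I_T\ge \inf_{z\in B_\delta(L)}V(z)-\sup_{x'\in S_\rho}V(x')\ge V(L)-\omega(\rho,\delta)$, uniformly in $x\in S_\rho$ and independently of $T$. For the second term I would use the standard confinement estimate, valid because $0$ is globally attracting: $\inf\{I_T(\varphi):\varphi(0)\in S_\rho,\ \|\varphi(u)\|\ge\rho\ \forall\,u\le T\}\to\infty$ as $T\to\infty$ (a path confined outside $H_\rho$ must pay action against the drift at a rate bounded below, by compactness of the relevant level sets). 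Choosing $T$ (depending on $\rho$ only, hence fixed before $\epsilon\to0^+$) so large that this infimum exceeds $V(L)$, \eqref{FW upper bound} gives $P_x(\sigma_\rho\wedge\tau_L>T)\le\exp(-\epsilon^{-1}(V(L)+c))$ for some $c>0$ and all small $\epsilon$, which is negligible against the first term. Combining the two terms and then sending $\rho,\delta\to0^+$ gives the upper estimate.

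\emph{Main obstacle.} The only genuinely delicate ingredient is the passage to the infinite time horizon in the upper bound: the Freidlin--Wentzell estimates \eqref{FW upper bound}--\eqref{FW lower bound} hold only on a fixed finite window, so one must rule out trajectories that loiter in $\{\|x\|>\rho\}\setminus B_\delta(L)$ for an arbitrarily long time before finally reaching $B_\delta(L)$. This is exactly what the confinement estimate is for — and it is here, and essentially only here, that global attractivity of the origin enters. Making that estimate precise, and verifying that all constants are uniform over the starting point $x\in S_\rho$ (via the uniform-in-initial-condition form of the large deviation bounds) and that the $(\rho,\delta)$–corrections genuinely vanish (via continuity of $V$), is where the bulk of the technical work lies; the remainder follows the classical treatment of diffusion exit problems, cf. \cite[Section~5.7]{dembo2009large}.
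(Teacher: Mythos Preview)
Your proposal is correct and follows essentially the same route as the paper: both rest on the Freidlin--Wentzell LDP \eqref{FW upper bound}--\eqref{FW lower bound} together with continuity of the quasipotential near $0$ and $L$. The paper's own proof is a terse two-line invocation of these facts (identifying the limit as $-\lim_{T\to\infty}\inf\{I_T(y):y(0)=x,\ y(T)\in B_\delta(L)\}$ and then passing to the iterated limits via lower semi-continuity), whereas you spell out the standard ingredients --- the explicit tube construction for the lower bound and the time-truncation plus confinement estimate for the upper bound --- in detail.
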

\begin{proof}
Thanks to the Large Deviation Principle in \eqref{FW upper bound}-\eqref{FW lower bound}, as long as $\delta,\rho$ are sufficiently small, for any $x\in S_{\rho}$,
\begin{equation}
\begin{split}
 &\lim_{\epsilon \to 0^+}\epsilon \log P_x\big( \inf\lbrace t  \; : \; X(t) \in B_{\delta}(L) \rbrace < \inf\lbrace t > 0 \; : \; X(t) \in H_{\rho}  \rbrace \big) \\ 
  &\qquad= - \lim_{T\to\infty} \inf_{y \in H_1 : y(0) =x  ,  y(T) \in B_{\delta}(L)} I_T(y) \\
 &\qquad := -V_{\delta,\rho,x}.
\end{split}
\end{equation}
Since the rate function $I_t$ is lower semi-continuous, it must be that
\begin{align}
\lim_{\delta \to 0^+} \lim_{\rho \to 0^+} V_{\delta,\rho,x}= \lim_{\rho \to 0^+} \lim_{\delta \to 0^+} V_{\delta,\rho,x} = V(L).
\end{align}
\end{proof}

The following Lemma is proved in Dembo and Zeitouni  \cite[Section 5.7]{dembo2009large}.
\begin{lemma} \label{Lemma Bound fluctuations uniformly}
For every $\rho > 0$ and every $c > 0$, there exists a constant $T(c,\rho) < \infty$ such that
\begin{align}
\lsup{\epsilon} \epsilon \log \sup_{x\in \mathbb{R}^d \; : \| x \| \leq 3\rho } P_x\bigg( \sup_{t\in [0,T(c,\rho)]} \| X_t - x \| \geq \rho \bigg) < - c.
\end{align}
\end{lemma}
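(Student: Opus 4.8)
The plan is to give a self-contained proof that does not invoke the large deviation principle at all, but instead compares $X$ to Brownian motion through a stopping-time argument. The key observation is that before the process has travelled distance $\rho$ from its start it stays inside the ball of radius $4\rho$, where the Lipschitz bound \eqref{eq: f lipschitz} together with $f(0)=0$ forces $\norm{f}\le M_\rho:=4C_f\rho$; hence over a window $[0,T]$ the drift contributes at most $M_\rho T$ to the displacement, and once $T$ is small enough that $M_\rho T<\rho/2$, the only way the displacement can reach $\rho$ is through an atypically large Brownian increment, whose probability is controlled uniformly in $x$.

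Concretely, I would fix $x$ with $\norm{x}\le 3\rho$, work under $P_x$ so that $X_t-x=\int_0^t f(X_s)\,ds+\sqrt{2\epsilon}\,W_t$, and set $\sigma=\inf\{t\ge 0:\norm{X_t-x}\ge\rho\}$, so the event in the statement is $\{\sigma\le T\}$. For $s<\sigma$ one has $\norm{X_s}\le\norm{x}+\rho\le 4\rho$, hence $\norm{f(X_s)}\le M_\rho$; evaluating the displacement bound at $t=\sigma$ on $\{\sigma\le T\}$ and using $\norm{X_\sigma-x}=\rho$ gives $\rho\le M_\rho\sigma+\sqrt{2\epsilon}\,\sup_{s\le \sigma}\norm{W_s}\le M_\rho T+\sqrt{2\epsilon}\,\sup_{s\le T}\norm{W_s}$. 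Therefore, provided $T\le (8C_f)^{-1}$ (so that $M_\rho T\le\rho/2$, a threshold that happens to be uniform in $\rho$), one has the $x$-free inclusion $\{\sigma\le T\}\subseteq\{\sup_{s\le T}\norm{W_s}\ge \rho/(2\sqrt{2\epsilon})\}$.

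The second step is a standard Brownian tail estimate: splitting into coordinates and using the reflection principle and the Gaussian tail bound, $P\big(\sup_{s\le T}\norm{W_s}\ge R\big)\le c_d\exp\!\big(-R^2/(2dT)\big)$ for a dimensional constant $c_d$. Taking $R=\rho/(2\sqrt{2\epsilon})$ yields
\[
\sup_{\norm{x}\le 3\rho}P_x\Big(\sup_{t\in[0,T]}\norm{X_t-x}\ge\rho\Big)\le c_d\,\exp\!\Big(-\frac{\rho^2}{16dT\epsilon}\Big),
\]
so multiplying by $\epsilon$, taking logarithms and letting $\epsilon\to 0^+$ shows $\lsup{\epsilon}\epsilon\log\sup_{\norm{x}\le 3\rho}P_x(\cdots)\le -\rho^2/(16dT)$. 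Finally I would declare $T(c,\rho):=\min\{(8C_f)^{-1},\,\rho^2/(16d(c+1))\}$, which is positive and makes the right-hand side at most $-(c+1)<-c$, completing the proof.

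I do not expect a genuine obstacle; the only place needing care is the stopping-time bookkeeping in the second paragraph, where the drift estimate $\norm{f(X_s)}\le M_\rho$ must be applied only for $s<\sigma$, the integral $\int_0^\sigma$ (not $\int_0^T$) must be used before passing to the uniform bound $M_\rho T$, and continuity of paths must be invoked to get $\norm{X_\sigma-x}=\rho$ on $\{\sigma<\infty\}$. An alternative is to apply the Freidlin--Wentzell upper bound \eqref{FW upper bound} to the translated process $Y_t=X_t-x$, which starts at $0$ so the stated principle applies, together with the action lower bound $I_T(y)\ge \rho^2/(16T)$ valid for every path with $y(0)=0$ and $\sup_{t\le T}\norm{y(t)}\ge\rho$ when $T\le(8C_f)^{-1}$ (obtained from Cauchy--Schwarz, $\norm{\dot y}_{L^2[0,s]}\ge\rho/\sqrt s$, and the $L^2$ triangle inequality applied to $\dot y-f(y)$); this requires, however, the uniform-in-$x$ form of the principle, whereas the direct argument above is already uniform precisely because the comparison event no longer involves $x$.
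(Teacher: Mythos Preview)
Your argument is correct. The stopping-time comparison to Brownian motion is clean: the inclusion $\{\sigma\le T\}\subseteq\{\sup_{s\le T}\norm{W_s}\ge\rho/(2\sqrt{2\epsilon})\}$ is valid once $T\le(8C_f)^{-1}$, the Gaussian tail bound gives the exponential decay with the stated rate, and your choice of $T(c,\rho)$ delivers the conclusion. The bookkeeping you flag (using $\int_0^\sigma$, invoking path continuity for $\norm{X_\sigma-x}=\rho$, and applying $\norm{f(X_s)}\le 4C_f\rho$ only for $s\le\sigma$) is exactly right and presents no difficulty.

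As for the comparison: the paper does not actually prove this lemma but defers to Dembo and Zeitouni \cite[Section 5.7]{dembo2009large}, where the corresponding result (their Lemma~5.7.18) is derived from the Freidlin--Wentzell upper bound, essentially your alternative route at the end. Your primary argument is more elementary and entirely self-contained: it bypasses the LDP machinery, and because the comparison event $\{\sup_{s\le T}\norm{W_s}\ge\rho/(2\sqrt{2\epsilon})\}$ no longer involves $x$, uniformity over $\norm{x}\le 3\rho$ is automatic rather than something that must be extracted from a uniform LDP. The trade-off is that the LDP route would extend immediately to non-additive noise (state-dependent diffusion coefficients), whereas your direct comparison relies on the noise being exactly $\sqrt{2\epsilon}\,dW_t$; for the paper's setting that is no loss.
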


\subsection{Proof in the case of a non-diverging first-hitting-time (Regime 2)}

We now consider the case that $\Lambda > V(L)$. In this case, the average first-hitting-time does not diverge as $\epsilon \to 0^+$.  Let $T_{\Lambda} =\Psi_L(\Lambda)$ (as defined in \eqref{eq: Psi x c}), which means that
\begin{align}
\Lambda = V(L, T_{\Lambda}).  \label{eq: expected first hitting time}  
\end{align}
As in Regime 1, we fix the number of walkers to be $N = \lfloor \exp( \epsilon^{-1} \Lambda ) \rfloor$ and we (i) asymptote $\epsilon \to 0^+$ before (ii) asymptoting $\delta \to 0^+$.

\begin{lemma} \label{Lemma Regime 2}
For any $\zeta > 0$,
\begin{align}
\lim_{\delta \to 0^+}\lsup{\epsilon}\epsilon \log \mathbb{P}\big( | \tau^N_{\rm first} - T_{\Lambda} | > \zeta \big) < 0 ,\label{eq: epsilon log bound again}\\
\lim_{\delta \to 0^+}\lim_{\epsilon \to 0^+}\mathbb{E}\big[  \tau^N_{\rm first} \big] = T_{\Lambda} .\label{eq: anoth tau N first expectation}
\end{align}
\end{lemma}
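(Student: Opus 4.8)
The plan is to follow the template of the Regime~1 argument (Lemma~\ref{Regime 1 lemma}): first establish the two-sided concentration bound \eqref{eq: epsilon log bound again} (which already gives $\tau^N_{\rm first}\to T_{\Lambda}$ in probability), and then upgrade it to convergence of the mean \eqref{eq: anoth tau N first expectation} through a three-piece decomposition of the expectation. The ingredient that replaces the exponentially long lag-phase analysis of Regime~1 is a \emph{finite-time} large-deviation estimate for a single walker: for each fixed $t>0$, applying \eqref{FW upper bound}--\eqref{FW lower bound} to the closed (respectively open) set of paths that lie in $B_{\delta}(L)$ at some time $s\le t$, together with the contraction argument behind \eqref{eq:cost_fn}, gives
\[
\lim_{\delta \to 0^+}\lim_{\epsilon \to 0^+} \epsilon \log \mathbb{P}\big( \tau_1 \le t \big) = - \inf_{0\le s\le t} V(L,s) = - V(L,t),
\]
the last equality because $s\mapsto V(L,s)$ is non-increasing. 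We also use that $s\mapsto V(L,s)$ is continuous, is strictly decreasing as long as $V(L,s)>V(L)$, and decreases to $V(L)<\Lambda$; combined with \eqref{eq: expected first hitting time} this yields $V(L,t)>\Lambda$ for $t<T_{\Lambda}$ and $V(L,t)<\Lambda$ for $t>T_{\Lambda}$.

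Granting this, the left tail of $\tau^N_{\rm first}$ is handled by a union bound exactly as in \eqref{eq:temproary tilde tau j}: $\mathbb{P}(\tau^N_{\rm first} < T_{\Lambda}-\zeta) \le N\,\mathbb{P}(\tau_1 < T_{\Lambda}-\zeta)$, and since $N=\lfloor \exp(\epsilon^{-1}\Lambda)\rfloor$ while $V(L,T_{\Lambda}-\zeta) - \Lambda>0$, for every small $\xi$ the right-hand side is at most $\exp\!\big(-\epsilon^{-1}(V(L,T_{\Lambda}-\zeta)-\Lambda-\xi)\big)$ once $\delta$ and $\epsilon$ are small, so that $\lim_{\delta\to0^+}\lsup{\epsilon}\epsilon\log\mathbb{P}(\tau^N_{\rm first}<T_{\Lambda}-\zeta)<0$. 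The right tail uses independence as in \eqref{eq: temporary epsilon bound upper infinite}: $\mathbb{P}(\tau^N_{\rm first}>T_{\Lambda}+\zeta) = \big(1-\mathbb{P}(\tau_1\le T_{\Lambda}+\zeta)\big)^N \le \exp\!\big(-N\,\mathbb{P}(\tau_1\le T_{\Lambda}+\zeta)\big)$ via $\log(1-u)\le -u$; since $V(L,T_{\Lambda}+\zeta)<\Lambda$, the lower bound above forces $N\,\mathbb{P}(\tau_1\le T_{\Lambda}+\zeta)\ge \exp\!\big(\epsilon^{-1}(\Lambda-V(L,T_{\Lambda}+\zeta)-\xi)\big)\to\infty$, whence $\epsilon\log\mathbb{P}(\tau^N_{\rm first}>T_{\Lambda}+\zeta)\to-\infty$. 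Together these two tail bounds give \eqref{eq: epsilon log bound again}.

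For the mean, decompose as in \eqref{eq: decompose tau N first three},
\[
\mathbb{E}[\tau^N_{\rm first}] = \mathbb{E}\big[\tau^N_{\rm first}\chi\{\tau^N_{\rm first}<T_{\Lambda}-\zeta\}\big] + \mathbb{E}\big[\tau^N_{\rm first}\chi\{|\tau^N_{\rm first}-T_{\Lambda}|\le\zeta\}\big] + \mathbb{E}\big[\tau^N_{\rm first}\chi\{\tau^N_{\rm first}>T_{\Lambda}+\zeta\}\big].
\]
The first term is at most $(T_{\Lambda}-\zeta)\,\mathbb{P}(\tau^N_{\rm first}<T_{\Lambda}-\zeta)$, which tends to $0$ by the left-tail bound; the middle term lies between $(T_{\Lambda}-\zeta)$ and $(T_{\Lambda}+\zeta)$ times $\mathbb{P}(|\tau^N_{\rm first}-T_{\Lambda}|\le\zeta)$, whose probability factor tends to $1$ by \eqref{eq: epsilon log bound again}, so after $\epsilon\to0^+$ and $\delta\to0^+$ this term lies in $[T_{\Lambda}-\zeta,T_{\Lambda}+\zeta]$; sending $\zeta\to0^+$ at the end will therefore give \eqref{eq: anoth tau N first expectation}, provided the third term vanishes as well.

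The third term is the crux and the expected main obstacle, because its control is needed at \emph{all} time scales. Write it as $(T_{\Lambda}+\zeta)\,\mathbb{P}(\tau^N_{\rm first}>T_{\Lambda}+\zeta) + \int_{T_{\Lambda}+\zeta}^{\infty}\mathbb{P}(\tau^N_{\rm first}>t)\,dt$. The doubly-exponential right-tail bound from the second paragraph kills the prefactor and the part of the integral over any interval of singly-exponential length, in particular over $[T_{\Lambda}+\zeta,\, e^{\epsilon^{-1}(V(L)+\xi)}]$. What remains is the integral over $[e^{\epsilon^{-1}(V(L)+\xi)},\infty)$, where the finite-time estimate is useless: here one needs a genuine exponential-in-$t$ decay bound $\mathbb{P}(\tau_1>t)\le \exp\!\big(-t\,e^{-\epsilon^{-1}(V(L)+\xi)}\big)$ valid once $t$ exceeds the lag-phase timescale. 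Such a uniform estimate follows from the strong Markov property together with the recurrence and single-excursion hitting estimates already used in Section~\ref{Proofs First Hitting Single Walker} (Lemma~\ref{Lemma Bound fluctuations uniformly} and Lemma~\ref{Lemma Large Deviations Applied}; see also Dembo--Zeitouni, \S5.7): the walker returns to a small neighborhood of the origin on an $O(1)$ timescale, from there it hits $B_{\delta}(L)$ within one lag-phase timescale with probability bounded below, and a renewal/geometric argument over blocks of this length produces the claimed tail. Given it, $\mathbb{P}(\tau^N_{\rm first}>t) = \mathbb{P}(\tau_1>t)^N \le \exp\!\big(-t\,N e^{-\epsilon^{-1}(V(L)+\xi)}\big)$ with $N e^{-\epsilon^{-1}(V(L)+\xi)} = e^{\epsilon^{-1}(\Lambda-V(L)-\xi)}\to\infty$, so the remaining integral is at most $e^{-\epsilon^{-1}(\Lambda-V(L)-\xi)}\to0$. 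This closes the estimate of the third term; combining the three pieces and letting $\zeta\to0^+$ yields \eqref{eq: anoth tau N first expectation}.
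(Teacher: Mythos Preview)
Your argument is correct and follows the same route as the paper: the concentration bound \eqref{eq: epsilon log bound again} is obtained exactly as in the paper via a union bound for the left tail and independence for the right tail, using the single-walker finite-time large-deviation estimate $\epsilon\log\mathbb{P}(\tau_1\le t)\to -V(L,t)$ together with the strict monotonicity of $t\mapsto V(L,t)$ at $T_\Lambda$. For the expectation \eqref{eq: anoth tau N first expectation} the paper simply says the proof is ``very similar to the proof of \eqref{eq: other main result}, and is neglected''; your three-piece decomposition and far-tail control via a renewal/exponential-tail bound on $\tau_1$ is precisely the Regime~2 analogue of that argument and correctly fills in the omitted details.
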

\begin{proof}
The proof of \eqref{eq: anoth tau N first expectation} is very similar to the proof of \eqref{eq: other main result}, and is neglected.


The Large Deviations estimates in \eqref{FW upper bound}-\eqref{FW lower bound} imply that for any $a > 0$ (and in particular for $a \ll 1$),
\begin{align}
\lim_{\delta \to 0^+}\lim_{\zeta \to 0}\lim_{\epsilon\to 0}\bigg|\epsilon \log \mathbb{P}\big( \tau_1 \in [T_{\Lambda} - \zeta, T_{\Lambda} + \zeta ]\big) + \Lambda  \bigg| &=0 , \label{eq: single walker hitting time asymptotic repeated 2}\\
\lim_{\delta \to 0^+}\lim_{\epsilon\to 0}\epsilon \log \mathbb{P}\big(  \tau_1 < T_{\Lambda} - a \big)&< - \Lambda \label{eq: single walker hitting time asymptotic repeated 2'}, \\
\lim_{\delta \to 0^+}\lim_{\epsilon\to 0}\epsilon \log \mathbb{P}\big( \tau_1 > T_{\Lambda} + a \big) &< -\Lambda .\label{eq: single walker hitting time asymptotic repeated 3} 
\end{align}
In order that \eqref{eq: epsilon log bound again} holds, it suffices that we demonstrate that for small enough $\zeta > 0$,
\begin{align}
\lim_{\delta \to 0^+}\lsup{\epsilon}\epsilon \log \mathbb{P}\big(  \tau^N_{\rm first} - T_{\Lambda}  > \zeta \big) < 0, \label{eq: epsilon log bound again 2} \\
\lim_{\delta \to 0^+}\lsup{\epsilon}\epsilon \log \mathbb{P}\big(  \tau^N_{\rm first} - T_{\Lambda}  < -\zeta \big) < 0. \label{eq: epsilon log bound again 3} 
\end{align}
To prove \eqref{eq: epsilon log bound again 3}, by a union-of-events bound,
\begin{equation}
\begin{split}
\mathbb{P}\big(  \tau^N_{\rm first} - T_{\Lambda}  < -\zeta \big) &\leq \sum_{j=1}^N \mathbb{P}\big(  \tilde{\tau}^j - T_{\Lambda}  < -\zeta \big) \\
&\leq N \mathbb{P}\big(  \tau^1 - T_{\Lambda}  < -\zeta \big) \\
&\leq N \exp\big(-\epsilon^{-1}\Lambda - \epsilon^{-1}\delta_{\zeta} \big).
\end{split}
\end{equation}
for some $\delta_{\zeta} > 0$, thanks to \eqref{eq: single walker hitting time asymptotic repeated 2'}. Since $N = \lfloor \exp(\epsilon^{-1}\Lambda) \rfloor$, we may conclude \eqref{eq: epsilon log bound again 3}. For \eqref{eq: epsilon log bound again 2}, since the walkers are independent,
\begin{equation}
\begin{split}
\mathbb{P}\big(  \tau^N_{\rm first} - T_{\Lambda}  > \zeta \big) &= \prod_{j=1}^N \mathbb{P}\big(  \tau^j - T_{\Lambda}  > \zeta \big) \\
& \leq  \big( \exp(-\epsilon^{-1}\Lambda - \epsilon^{-1}\delta_{\zeta}) \big)^N ,
\end{split}
\end{equation}
for some $\delta_{\zeta} > 0$, thanks to \eqref{eq: single walker hitting time asymptotic repeated 3}. Since $N\to \infty$ as $\epsilon \to 0$, it is immediate that 
\begin{align}
\lim_{\epsilon\to 0^+}\epsilon \log \mathbb{P}\big(  \tau^N_{\rm first} - T_{\Lambda}  > \zeta \big) = -\infty .
\end{align}
\end{proof}

\subsection{Regime 3: Labor Surplus}

In this regime, we assume that the number of walkers $N$ (the dependence of $N$ on $\epsilon$ is neglected from the notation) is such that
\begin{align}
\lim_{\epsilon \to 0^+}\epsilon \log N = \infty.
\end{align}
Aside from this assumption, $N$ can be arbitrarily large (the larger $N$ is, the closer the limiting trajectory is to a straight line). This case has been extensively analyzed by Lawley and co-workers, see for example \cite{madrid2020competition,linn2022extreme}.

We find that the trajectory followed by the first-to-arrive walker is asymptotically close to a straight line. To specify this optimal trajectory more precisely, for $T > 0$, define $Z^{(T)} \in \mathcal{C}\big( [0,T], \mathbb{R}^d \big)$ to be such that $Z^{(T)}_0 = 0$ and
\begin{align} \label{eq: Z (T) definition}
\frac{dZ^{(T)}}{dt} = f(Z^{(T)}_t) +  T^{-1}L.
\end{align}
One can check using Gronwall's Inequality that these trajectories becomes closer and closer to a straight line from $0$ to $L$ as $T\to 0^+$, i.e.
\begin{align}
\lim_{T\to 0^+}\sup_{S \leq T}\sup_{0\leq t \leq S}\norm{Z^{(S)}_t - t S^{-1} Z^{(S)}_S} = 0.
\end{align}

Write $q \leq N$ to be the (random) index of the first walker to hit $B_{\delta}(L)$, i.e. 
\begin{align}
\tau_q = \tau^N_{\rm first}.
\end{align}
(Note that with unit probability, no two walkers hit $B_{\delta}(L)$ simultaneously.)
\begin{theorem}
For any $\zeta > 0 $ (typically $\zeta \ll 1$),
\begin{align}\label{eq: regime three first result}
\lim_{\delta \to 0^+}\lim_{\epsilon \to 0^+} \epsilon  \log \mathbb{P}\big(  \tau^N_{\rm first}(1+\zeta)  - (\epsilon \log N )^{-1} \norm{L}^2 / 4  \leq 0 \big) &= -\infty\\
\lim_{\delta \to 0^+}\lim_{\epsilon \to 0^+} \epsilon   \log \mathbb{P}\big(  \tau^N_{\rm first}(1-\zeta)  - (\epsilon \log N )^{-1} \norm{L}^2 / 4  \geq 0 \big) &=-\infty \label{eq: regime three second result}\\
\lim_{\delta \to 0^+}\lim_{\epsilon \to 0^+} \epsilon \log \mathbb{P}\big( \sup_{t\in [0,\tau^N_{\rm first}]} \norm{ X^q_t -  Z_t^{(\tau^N_{\rm first})} } \geq \zeta \big) &= -\infty \label{eq: straight line optimal trajectory}\\
\lim_{\delta \to 0^+}\lim_{\epsilon \to 0^+}  \big|  \epsilon \log N \mathbb{E}\big[   \tau^N_{\rm first}\big] - \norm{L}^2 / 4 \big|  &= 0.\label{expectation in regime three}
\end{align}
\end{theorem}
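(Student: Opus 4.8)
The plan is to reduce the whole statement to an explicit short-time estimate for Brownian motion, exploiting that in Regime 3 the first-to-arrive walker moves so fast that the drift $f$ is a lower-order perturbation. Writing $X^j_t = \sqrt{2\epsilon}\,W^j_t + \int_0^t f(X^j_s)\,ds$ and using $f(0)=0$ together with \eqref{eq: f lipschitz}, Gronwall's inequality gives, for $T\le 1$, a bound of the form $\sup_{t\le T}\norm{X^j_t - \sqrt{2\epsilon}\,W^j_t} \le C_f T e^{C_f T}\sup_{t\le T}\norm{\sqrt{2\epsilon}\,W^j_t}$. Hence, on the (overwhelmingly likely) event $\{\sup_{t\le T}\norm{\sqrt{2\epsilon}\,W^j_t}\le 2\norm{L}\}$ and for $T$ small, the event $\{\tau_j\le T\}$ is sandwiched between $\{\sup_{t\le T}\norm{\sqrt{2\epsilon}\,W^j_t}\ge\norm{L}-O(\delta)\}$ and $\{\norm{\sqrt{2\epsilon}\,W^j_T - L}\le\delta/2\}$. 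The one-walker probability is then controlled by Gaussian computations: the density of $\sqrt{2\epsilon}\,W_T$ at $L$ is $(4\pi\epsilon T)^{-d/2}\exp(-\norm{L}^2/(4\epsilon T))$, while $\mathbb{P}\big(\sup_{t\le T}\norm{\sqrt{2\epsilon}\,W_t}\ge r\big)\le C_{d,\theta}\exp(-r^2(1-\theta)^2/(4\epsilon T))$ for every $\theta>0$ (cover the unit sphere by a finite $\theta$-net and apply the exponential-martingale bound to each $\langle W_t,u\rangle$). Letting $\epsilon\to0^+$ and then $\delta\to0^+$ this yields $\epsilon\log\mathbb{P}(\tau_1\le T) = -\norm{L}^2/(4T)+o(1)$ — equivalently $V(L,T)=\norm{L}^2/(4T)\,(1+o(1))$ as $T\to0^+$, which is Varadhan's short-time formula and the reason $\norm{L}^2/4$ appears.

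With this short-time asymptotic in hand, set $T_\pm = \norm{L}^2\big/\big(4(1\mp\zeta)\epsilon\log N\big)$, so that $\epsilon^{-1}V(L,T_\pm)\simeq(1\mp\zeta)\log N$. For \eqref{eq: regime three first result} a union bound gives $\mathbb{P}(\tau^N_{\rm first}\le T_-)\le N\,\mathbb{P}(\tau_1\le T_-)\le N^{\,1-(1+\zeta)(1-o(1))}$, whose $\epsilon\log$ is $(-\zeta+o(1))\epsilon\log N\to-\infty$; the bounded Gaussian prefactors, the factor $\mu_{Leb}(B_\delta)$, and the $\log\log N$ term coming from $\epsilon\log(\epsilon T_-)$ are all negligible next to $\epsilon\log N\to\infty$. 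For \eqref{eq: regime three second result} independence gives $\mathbb{P}(\tau^N_{\rm first}>T_+)=(1-\mathbb{P}(\tau_1\le T_+))^N\le\exp(-N\,\mathbb{P}(\tau_1\le T_+))\le\exp(-N^{\,\zeta-o(1)})$, and $\epsilon N^{\zeta-o(1)}\to\infty$, so its $\epsilon\log$ tends to $-\infty$ as well. In both cases one first lets $\epsilon\to0^+$ with $\delta$ fixed — the $\delta$-corrections enter only through factors $(\norm{L}\mp O(\delta))^2/\norm{L}^2$, which for fixed $\zeta$ are forced to the required side of $1$ once $\delta$ is small — and then lets $\delta\to0^+$.

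For the trajectory statement \eqref{eq: straight line optimal trajectory} I would combine the short-time estimate with a ``bad path costs more'' lemma: if $y\in H_1$ satisfies $y(0)=0$, $y(T)\in B_\delta(L)$ and $\sup_{t\le T}\norm{y(t)-Z^{(T)}_t}\ge\zeta$, then (since $Z^{(T)}$ is within $\zeta/2$ of the straight segment $t\mapsto(t/T)L$ for $T$ small, by the property of \eqref{eq: Z (T) definition} quoted above it) the path either has length at least $\norm{L}(1+c_1)$ or traverses the segment at non-constant speed by a definite amount, so by Cauchy--Schwarz and \eqref{eq:lagrangian}, $I_T(y)\ge\tfrac14\int_0^T\norm{\dot y}^2 - \tfrac M2\int_0^T\norm{\dot y}\ge\norm{L}^2(1+c')/(4T)-O(1)$ for some $c'=c'(\zeta,\norm{L})>0$ and $M=2C_f\norm{L}$. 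Discretizing the random hitting time into finitely many sub-intervals of $[0,T_+]$ and taking a union over walkers, $\mathbb{P}\big(\exists j:\ \tau_j\le T_+\ \text{with}\ \sup_{t\le\tau_j}\norm{X^j_t-Z^{(\tau_j)}_t}\ge\zeta\big)\le N\exp\!\big(-\epsilon^{-1}\norm{L}^2(1+c')(1-o(1))/(4T_+)\big)=N^{\,1-(1-\zeta)(1+c')(1-o(1))}$, which is $o(1)$ once $\zeta$ is chosen small relative to $c'$; combined with \eqref{eq: regime three second result} (which says $\tau^N_{\rm first}\le T_+$ with high probability), the first-to-arrive walker stays within $\zeta$ of $Z^{(\tau^N_{\rm first})}$ with probability tending to $1$ super-exponentially fast.

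Finally, \eqref{expectation in regime three} follows from the concentration bounds by truncation: $T_-\,\mathbb{P}(\tau^N_{\rm first}\ge T_-)\le\mathbb{E}[\tau^N_{\rm first}]\le T_+ + \mathbb{E}\big[\tau^N_{\rm first}\,\chi\{\tau^N_{\rm first}>T_+\}\big]$, where the last term is bounded, via Cauchy--Schwarz and $\tau^N_{\rm first}\le\tau_1$, by $(\mathbb{E}[\tau_1^2])^{1/2}\,\mathbb{P}(\tau^N_{\rm first}>T_+)^{1/2}\le\exp\big(\epsilon^{-1}V(L)-\tfrac12N^{\,\zeta-o(1)}\big)$, which is $o(T_+)$ because $\epsilon\log N\to\infty$ forces $N^{\zeta}\gg e^{\epsilon^{-1}V(L)}$ (here one uses that the single-walker first and second hitting-time moments are finite and of order $e^{O(\epsilon^{-1})}$). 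Hence $\epsilon\log N\,\mathbb{E}[\tau^N_{\rm first}]$ lies within $(1+o(1))$ of $\big[\tfrac{\norm{L}^2}{4(1+\zeta)},\tfrac{\norm{L}^2}{4(1-\zeta)}\big]$; since $\zeta>0$ is arbitrary and the $\delta$-corrections vanish as $\delta\to0^+$, the iterated limit equals $\norm{L}^2/4$. The main obstacle is precisely the short-time rate-function estimate used throughout: the Freidlin--Wentzell principle \eqref{FW upper bound}--\eqref{FW lower bound} is stated at a fixed time horizon, whereas here $T_\pm\to0^+$, so one must either develop large-deviation bounds that are uniform as $T\to0^+$ or — as proposed above — bypass it via the Gronwall comparison with Brownian motion, checking carefully that the polynomial prefactors (Gaussian normalizations, $\mu_{Leb}(B_\delta)$, sphere-covering numbers) and the $\log\log N$ contributions are $o(\epsilon\log N)$, which holds because $\log\log N=o(\log N)$.
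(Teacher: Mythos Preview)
Your approach is correct and, for parts \eqref{eq: regime three first result}, \eqref{eq: regime three second result} and \eqref{expectation in regime three}, essentially identical to the paper's: both reduce the short-time single-walker estimate to Brownian motion via a Gronwall comparison (the paper packages this as its Lemma~\ref{Lemma Regime 3}), then use a union bound for the lower tail and independence for the upper tail against the critical time scale $\hat{T}_\epsilon = \norm{L}^2/(4\epsilon\log N)$. You also correctly flag the main technical point --- that the Freidlin--Wentzell bounds \eqref{FW upper bound}--\eqref{FW lower bound} are stated at fixed horizon and must be made uniform as $T\to 0^+$ --- and your proposed resolution (bypass via direct Gaussian estimates on $W$) is precisely what the paper does.

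The one genuine methodological difference is in \eqref{eq: straight line optimal trajectory}. You argue via a path-energy inequality: a path that deviates by $\zeta$ from the straight segment must have $\int_0^T\norm{\dot y}^2 \ge \norm{L}^2(1+c'(\zeta))/T$, hence rate-function cost exceeding $\norm{L}^2(1+c')/(4T)$, and then a union bound over walkers and a discretized hitting time. The paper instead works directly with the Brownian bridge: conditionally on $W_{T}$ landing near $(2\epsilon)^{-1/2}L$, the process $\tilde W_t = W_t - (t/T)W_T$ is a centered Gaussian with variance $\le T$, and the Borel--TIS inequality (Lemma~\ref{Lemma E epsilon unlikely}) gives an extra exponential factor for the event that $\sup_t\norm{\tilde W_t}$ is large; Gronwall then transfers this to $X_t - Z^{(T)}_t$. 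Your route is more ``variational'' and the paper's more ``probabilistic''; the paper's has the advantage that it never invokes the LDP at the vanishing time scale (it stays entirely within explicit Gaussian bounds), whereas your rate-function estimate still needs to be converted into a probability bound uniformly in $T\to 0^+$, which requires either re-deriving Schilder's theorem with explicit constants or reverting to the Gronwall--Gaussian comparison anyway. One small point of care in your version: the $\zeta$ appearing in the trajectory deviation and the $\zeta$ defining $T_+$ play different roles --- you need the latter chosen small relative to $c'(\zeta)$, not the former, so it is cleaner to use two symbols.
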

\begin{proof}
Write $\hat{T}_{\epsilon} = \epsilon^{-1}(\log N)^{-1} \norm{L}^2 / 4$, and notice that
\begin{align}
N\exp\bigg( - \frac{\norm{L}^2}{4 \epsilon \hat{T}_{\epsilon}} \bigg) = 1. \label{eq: N scaling identity regime 3}
\end{align}
Employing a union-of-events bound, 
\begin{align}
\mathbb{P}\big(  \tau^N_{\rm first}(1+\zeta)  - (\epsilon \log N )^{-1} \norm{L}^2 / 4 < 0 \big)
\leq \sum_{j=1}^N \mathbb{P}\big(  \tau_{j}(1+\zeta)  -(\epsilon \log N)^{-1}  \norm{L}^2 / 4 < 0 \big).
\end{align}
Thanks to Lemma \ref{Lemma Regime 3}, for any $\zeta > 0$ there must exist $\alpha_{\zeta} > 0$ such that for all sufficiently small $\epsilon$,
\begin{align}
\sum_{j=1}^N \mathbb{P}\big( (1+\zeta) \tau_{j}   - (\epsilon \log N)^{-1}\norm{L}^2 / 4 < 0\big) &\leq  N\exp\bigg( -(\hat{T}\epsilon)^{-1}  \frac{\norm{L}^2}{4} -(\hat{T}\epsilon)^{-1}\alpha_{\zeta}  \bigg) \\
&= N^{-\alpha_{\zeta}},
\end{align}
making use of \eqref{eq: N scaling identity regime 3}. This implies \eqref{eq: regime three first result}, after one first takes the logarithms of both sides, and then multiplies by $\epsilon$.

For the other inequality,
\begin{align}
 \mathbb{P}\big(  \tau^N_{\rm first}(1-\zeta)  - (\epsilon \log N )^{-1} \norm{L}^2 / 4  \geq 0 \big) &= \prod_{j=1}^N \big( 1-  \mathbb{P}\big(  \tau_j(1-\zeta)  - (\epsilon \log N )^{-1} \norm{L}^2 / 4  < 0 \big) \big) \nonumber \\
 & \leq  \prod_{j=1}^N \big\lbrace 1 - \exp\big( - \eta_{\zeta} (\epsilon \hat{T}_{\epsilon})^{-1}\big) \big\rbrace , 
\end{align}
for some $\eta_{\zeta} > 0$, thanks to Lemma \ref{Lemma Regime 3}. Now
\[
\prod_{j=1}^N \big\lbrace 1 - \exp\big( - \eta_{\zeta} (\epsilon \hat{\tau})^{-1}\big) \big\rbrace \leq \exp\bigg( -N\exp\big( - \eta_{\zeta} (\epsilon \hat{\tau})^{-1}\big)  \bigg),
\]
and one readily obtains \eqref{eq: regime three second result}.

The proof of \eqref{expectation in regime three} is analogous to proofs of the expectation in the previous sections and is neglected. \eqref{eq: straight line optimal trajectory} is proved in Lemma \ref{Lemma optimal traec}.
\end{proof}
 

\subsubsection{Single-Walker Asymptotics}

We reconsider the first hitting time asymptotics for a single walker,
\begin{equation}
\label{eq:sde_gen single walker regime 3}
dX_t = f( X_t)dt + \sqrt{2\epsilon} dW_t.
\end{equation}
As previously, define
\begin{align}
\tau &= \inf\big\lbrace t \geq 0: X_t \in  B_{\delta}(L) \big\rbrace .
\end{align}
Our main result in this section is the following lemma. 
\begin{lemma}\label{Lemma Regime 3}
For any $\zeta >0 $, there exists $\bar{\epsilon}_{\zeta} > 0$ and $\bar{T}_{\zeta,\delta} > 0$ such that (i) of  $\bar{\epsilon}_{\zeta}$ and $(\bar{\epsilon}_{\zeta})_{\zeta > 0}$ decreases to $0$ as $\zeta \to 0^+$ and for any $\eta > 0$,
\begin{align}
 \lim_{\zeta \to 0^+} \lim_{\delta \to 0^+} \sup_{T \leq \bar{T}_{\zeta,\delta}}\sup_{\epsilon \leq \bar{\epsilon}_{\zeta}} T\epsilon \log \mathbb{P}\big(   \tau \in [T(1 - \zeta), T(1+ \zeta)] \big) &= - L^2 / 4 \\
 \lim_{\zeta \to 0^+}\lim_{\delta \to 0^+} \sup_{T \leq \bar{T}_{\zeta,\delta}}\sup_{\epsilon \leq \bar{\epsilon}_{\zeta}} T\epsilon \log \mathbb{P}\big(   \tau < T(1 - \eta)\big) &< -L^2 / 4 \\
  \lim_{\zeta \to 0^+}\lim_{\delta \to 0^+}  \sup_{T \leq \bar{T}_{\zeta,\delta}}\sup_{\epsilon \leq \bar{\epsilon}_{\zeta}} T\epsilon \log \mathbb{P}\big(   \tau > T(1 + \eta)\big) &> -L^2 / 4.
\end{align}
Also,
 \begin{align}
 \lim_{T \to 0^+} T V(L,T) = L^2 / 4 .
\end{align}
\end{lemma}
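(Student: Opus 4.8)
The plan is to prove the Lemma in two steps: first the purely variational identity $\lim_{T\to 0^+}T\,V(L,T)=\norm{L}^2/4$ (the last display), which is the engine of the whole result, and then the three first-hitting-time estimates, which follow by feeding this identity into the Freidlin--Wentzell bounds \eqref{FW upper bound}--\eqref{FW lower bound}. The guiding heuristic is that, over a short time $T$, hitting the target is exponentially rare with rate $V_\delta(L,T):=\inf\{I_T(y):y(0)=0,\ \norm{y(T)-L}\le\delta\}$, and since $V_\delta(L,T)$ is comparable to $\norm{L}^2/(4T)$, the rescaled quantity $T\epsilon\log\mathbb{P}(\cdot)$ stabilises at $-\norm{L}^2/4$ in the iterated limit.

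\emph{Step 1: the variational identity.} For the upper bound I would use the explicit test path $y^*(t)=(t/T)L$: since $f(0)=0$ and $f$ is $C_f$-Lipschitz, $\norm{f(y^*(t))}\le C_f\norm{L}$, so by \eqref{eq:ratefn}--\eqref{eq:lagrangian}, $T\,I_T(y^*)\le\tfrac14(\norm{L}+C_f\norm{L}T)^2\to\norm{L}^2/4$, giving $\limsup_{T\to0^+}T\,V(L,T)\le\norm{L}^2/4$. For the lower bound, let $y$ be admissible. Cauchy--Schwarz gives $\int_0^T\norm{\dot y}^2\,dt\ge\norm{L}^2/T$. Writing $g(t)=\norm{\dot y(t)-f(y(t))}$ and using $\norm{y(t)}\le\int_0^t(C_f\norm{y}+g)\,ds$, Gronwall's inequality yields $\sup_{t\le T}\norm{y(t)}\le K(C_f)\sqrt{T\,I_T(y)}$, so any path with bounded rescaled cost (the only ones relevant for the infimum) stays in a fixed ball $\{\norm{x}\le R_0\}$ once $T$ is small. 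On such a path $\norm{f(y)}\le C_fR_0$; expanding $\norm{\dot y-f(y)}^2$ and estimating the cross term by Cauchy--Schwarz gives $4\,T\,I_T(y)\ge\norm{L}^2-2C_fR_0\norm{L}\,T$ after minimising over $\int_0^T\norm{\dot y}^2\,dt\ge\norm{L}^2/T$. Taking the infimum and $T\to0^+$ gives $\liminf_{T\to0^+}T\,V(L,T)\ge\norm{L}^2/4$. The same computation with $\norm{y(T)-L}\le\delta$ proves $\lim_{\delta\to0^+}\lim_{T\to0^+}T\,V_\delta(L,T)=\norm{L}^2/4$, and reparametrising the optimal path over a longer window shows $r\mapsto V_\delta(L,r)$ is decreasing in a neighbourhood of $0$ --- a fact I use below.

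\emph{Step 2: the probability estimates.} Rescaling time by setting $\tilde X_s=X_{T(1+\zeta)s}$ turns \eqref{eq:sde_gen single walker regime 3} into an SDE on $[0,1]$ with drift $T(1+\zeta)f$ and small parameter $\tilde\epsilon=\epsilon T(1+\zeta)$, and $\{\tau\in[T(1-\zeta),T(1+\zeta)]\}$ becomes $\{\tilde\tau\in[\tfrac{1-\zeta}{1+\zeta},1]\}$. As $T\to0^+$ both $\tilde\epsilon\to0$ and the drift vanishes, so \eqref{FW upper bound}--\eqref{FW lower bound} apply on the fixed horizon $[0,1]$ and (being uniform over bounded drifts and small noise) give bounds uniform in $(T,\epsilon)$. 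The \emph{lower} bound in the first display comes from \eqref{FW lower bound} applied to a thin tube around $y^*$: when the tube is thin compared to $\delta$, tube membership forces the first entrance to $B_\delta(L)$ into the window, so $\liminf_{\epsilon\to0^+}T\epsilon\log\mathbb{P}(\tau\in[T(1-\zeta),T(1+\zeta)])\ge-T\,I_T(y^*)\to-\norm{L}^2/4$. The \emph{upper} bounds come from \eqref{FW upper bound} applied to the closed set of paths hitting $\overline{B_\delta(L)}$ by a fixed time: $\limsup_{\epsilon\to0^+}\epsilon\log\mathbb{P}(\tau\le s)\le-\inf_{r\le s}V_\delta(L,r)=-V_\delta(L,s)$, using the monotonicity from Step 1; with $s=T(1+\zeta)$ this gives $T\epsilon\log\mathbb{P}(\tau\in[T(1-\zeta),T(1+\zeta)])\le-\tfrac{1}{1+\zeta}\big[T(1+\zeta)\,V_\delta(L,T(1+\zeta))\big]\to-\tfrac{\norm{L}^2}{4(1+\zeta)}$, and with $s=T(1-\eta)$ it gives $T\epsilon\log\mathbb{P}(\tau<T(1-\eta))\le-\tfrac{1}{1-\eta}\tfrac{\norm{L}^2}{4}+o(1)<-\tfrac{\norm{L}^2}{4}$. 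Sending $\delta\to0^+$ and then $\zeta\to0^+$ matches the bounds at $-\norm{L}^2/4$ and yields the first two displays. For the third display, over a short time avoiding the small ball $B_\delta(L)$ is the \emph{typical} behaviour --- the deterministic flow points toward the origin, away from $L$ --- so $\mathbb{P}(\tau>T(1+\eta))$ is bounded below by a positive constant, whence $T\epsilon\log\mathbb{P}(\tau>T(1+\eta))\to0>-\norm{L}^2/4$.

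\emph{Main obstacle.} The delicate part is uniformity: \eqref{FW upper bound}--\eqref{FW lower bound} are stated for a single fixed horizon, while here both the horizon $T$ and the noise level $\epsilon$ shrink to $0$. The time rescaling reduces this to a small-noise LDP on $[0,1]$ for a family of processes with drift $T(1+\zeta)f\to0$ and noise $\epsilon T(1+\zeta)\to0$; I would need the Freidlin--Wentzell estimates to hold uniformly across this family (they do, since the rate functionals converge to $\tfrac14\int_0^1\norm{\dot y}^2\,ds$ uniformly on sublevel sets and the classical proofs are uniform over compact sets of drift coefficients), but the bookkeeping is fiddly --- notably the $O(1)$ remainder in $V_\delta(L,T)=\norm{L}^2/(4T)+O(1)$ becomes $O(\epsilon^{-1})$ after dividing by $\epsilon$, and in the Regime 3 application it is absorbed only because it is weighed against the divergent factor $\log N$ (this is precisely where the hypothesis $\epsilon\log N\to\infty$ is needed). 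A secondary nuisance is ensuring it is the \emph{first} hitting time, not merely some hitting time, that lands in the prescribed window, which is why $\delta\to0^+$ must be taken before $\zeta\to0^+$.
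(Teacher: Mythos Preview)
Your argument is correct and your identification of the uniformity issue is spot on, but the route differs from the paper's. You work through the Freidlin--Wentzell machinery: first establish the variational identity $T\,V(L,T)\to\norm{L}^2/4$ by a direct Cauchy--Schwarz/Gronwall computation on the rate functional, then rescale time to $[0,1]$ and invoke the LDP with small parameter $\tilde\epsilon=\epsilon T(1+\zeta)$ and drift $T(1+\zeta)f\to 0$, arguing that the classical bounds are uniform over this compact family of drifts. The paper instead bypasses the LDP entirely on short horizons: it compares $X_t$ directly to the driving Brownian motion $\sqrt{2\epsilon}\,W_t$, bounding the integrated drift by $O(T)$ via Gronwall (using the a priori confinement event $\mathcal{B}_{\epsilon,\zeta}$), and then sandwiches $\{\tau\in[T(1-\zeta),T(1+\zeta)]\}$ between explicit events for $W$ whose asymptotics are read off from standard Brownian hitting-time formulae (Lemma~\ref{Lemma Bound Brownian First Hitting Times}). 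The paper's approach makes the uniformity in $(T,\epsilon)$ transparent---the Brownian estimates are exact inequalities valid for all $\epsilon,T\le 1$---whereas your approach requires the extra (true, but somewhat implicit) input that the Freidlin--Wentzell bounds hold uniformly over drifts with a fixed Lipschitz constant. On the other hand, your Step~1 gives a cleaner, self-contained proof of $\lim_{T\to0^+}T\,V(L,T)=\norm{L}^2/4$, which in the paper emerges only as a corollary of the Brownian comparison.
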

\begin{proof}
We are going to see that as the time interval becomes asymptotically short, the problem reduces to the first-hitting-time of a Brownian Motion (since the advection term becomes negligible).
Fix some $\eta \ll 1$. Define the event (for constants $K,M$ that are independent of $\epsilon,\zeta$ and $\tau$ and are specified in the course of the proof)
\begin{align}
\mathcal{A}_{\epsilon,\zeta,\eta} &= \bigg\lbrace \inf_{t \in [T(1-\zeta / 2),T(1+\zeta / 2)]} \norm{W_t - L} \leq (2\epsilon)^{-1/2} \eta\bigg\rbrace \\
\mathcal{B}_{\epsilon,\zeta} &= \bigg\lbrace \sup_{t \leq T(1+\zeta)} \norm{W_t} \leq 2\norm{L} (2\epsilon)^{-1/2} \bigg\rbrace \\
\mathcal{D}_{\epsilon,\zeta,\delta} &= \bigg\lbrace \inf_{t \in [T(1-\zeta),T(1+\zeta)]} \norm{ (2\epsilon)^{1/2} W_t - L} \leq \delta +  T(1+\zeta)M \bigg\rbrace \\
\mathcal{E}_{\epsilon,\zeta,\eta} &=\bigg\lbrace \sup_{t\in [0, T(1+\zeta)]}\big|  W_t  -  t / (T(1+\zeta/2)) W_{T(1+\zeta/2)} \big| \leq (2\epsilon)^{-1/2}\eta \bigg\rbrace
\end{align}

We next claim that, as long as $\eta$ is small enough,
\begin{align}
\bigg\lbrace \tau \in [T(1-\zeta),T(1+\zeta)] ,\mathcal{B}_{\epsilon,\zeta} \bigg\rbrace &\subseteq \mathcal{B}_{\epsilon,\zeta} \bigcap \mathcal{D}_{\epsilon, \zeta , \delta} \label{eq: left inclusion} \\
\mathcal{A}_{\epsilon,\eta,\delta}\bigcap \mathcal{E}_{\epsilon,\zeta,\eta} \bigcap \mathcal{B}_{\epsilon, \zeta}  &\subseteq \bigg\lbrace \tau \in [T(1-\zeta),T(1+\zeta)] \bigg\rbrace .\label{eq: right inclusion} 
\end{align}
This above two results are helpful because we then only need to estimate the asymptotic probabilities of the events $\mathcal{A}_{\epsilon, \zeta} ,\mathcal{B}_{\epsilon,\zeta},\mathcal{D}_{\epsilon, \zeta,\delta}$ to get accurate upper and lower bounds on the probability of the event $\tau \in [T(1-\zeta),T(1+\zeta)]$. Note that the event $\mathcal{B}_{\epsilon,\zeta}$ is included because it implies a uniform bound on $\norm{X_t}$ for $t\leq \tau$: the probability of $\mathcal{B}_{\epsilon,\zeta}^c$ will be seen to be negligible compared to the other events.

We start with \eqref{eq: left inclusion}. Suppose that $\tau \in [T(1-\zeta),T(1+\zeta)]$ and the event $\mathcal{B}_{\epsilon,\zeta}$ holds. One easily checks using Gronwall's Inequality that $\mathcal{B}_{\epsilon,\zeta}$ implies the uniform bound
\begin{align}
\sup_{t\leq (1+\zeta)T}\norm{X_t} \leq \exp\big( T(1+\zeta)C_f \big)2\norm{L}.
\end{align}
This means that for all $t\leq (1+\zeta)T$, 
\begin{align}
\norm{f(X_t)} \leq C_f \exp\big( T(1+\zeta)C_f \big)2\norm{L} := M.
\end{align}
Thus 
\begin{align}
\norm{X_\tau - \sqrt{2\epsilon}W_{\tau}} \leq \tau M \leq T(1+\zeta)M,
\end{align}
as long as $\tau \in [T(1-\zeta),T(1+\zeta)]$. An immediate consequence of the definition of $\tau$ is that $\norm{X_{\tau} - L} \leq \delta$: we thus find that the event $\mathcal{D}_{\epsilon,\zeta,\delta}$ must hold. We have thus proved \eqref{eq: left inclusion}.

We next prove \eqref{eq: right inclusion}. Suppose that the following three events $\mathcal{A}_{\epsilon,\eta,\delta}$, $\mathcal{E}_{\epsilon,\zeta,\eta}$  , $ \mathcal{B}_{\epsilon, \zeta}$ all hold. Recalling the definition of $Z^{(T+T\zeta/2)}$ in \eqref{eq: Z (T) definition}, for $t \leq T(1+\zeta/2)$,
\begin{align}
\norm{ X_t - Z^{(1+\zeta/2)}_t} \leq C_f \int_0^t \norm{ X_s - Z^{(1+\zeta/2)}_s} ds + 2\eta.
\end{align}
Gronwall's Inequality thus implies that
\begin{align}\label{eq: gronwall reimge 3}
\sup_{t\leq T(1+\zeta/2)}\norm{ X_t - Z^{(1+\zeta/2)}_t}   \leq 2\eta \exp\big( T(1+\zeta/2)C_f \big).
\end{align}
Define
\begin{align}
\tilde{\tau}_{\zeta,\delta} = \inf\big\lbrace t\geq 0 : Z^{(1+\zeta/2)}_t \in B_{\delta}(L) \big\rbrace .
\end{align}
The definition of $Z^{(T+T\zeta/2)}$ in \eqref{eq: Z (T) definition} implies that for any $\zeta > 0$, as long as $T$ and $\delta$ are sufficiently small, 
\begin{align}
\tilde{\tau}_{\zeta,\delta} \in [T(1-\zeta/2),T(1+\zeta/2)]. \label{eq: identity one}
\end{align}
It follows from \eqref{eq: gronwall reimge 3} that for all small enough $\eta > 0$,
\begin{align}
\big| \tau - \tilde{\tau}_{\zeta,\delta} \big| \leq \zeta/2\label{eq: identity two}
\end{align}
The identities \eqref{eq: identity one} and \eqref{eq: identity two} imply that $\tau \in [T(1-\zeta),T(1+\zeta)]$. We have thus proved \eqref{eq: right inclusion}.

The Lemma now follows from the first-hitting-time estimates in Lemma \ref{Lemma Bound Brownian First Hitting Times}.

\end{proof}

We next state some standard first-hitting-time estimates for Brownian Motion. The proofs are omitted: we refer the reader to \cite{ledoux2013probability,borodin2015handbook}.
\begin{lemma} \label{Lemma Bound Brownian First Hitting Times}
For any $\tilde{\eta} > 0$ and any $\zeta < 1$, as long as $\eta$ is sufficiently small,
\begin{align}
\inf_{\epsilon <1} \inf_{T < 1}\big\lbrace \epsilon T(1+\zeta/2) \log \mathbb{P}\big(\mathcal{A}_{\epsilon,\zeta,\eta}\big) \big\rbrace &\geq - \frac{\norm{L}^2}{4} - \tilde{\eta} \\
\lim_{\delta\to 0^+}\sup_{\epsilon <1} \sup_{T < 1} \big\lbrace \epsilon T(1+\zeta) \log \mathbb{P}\big(\mathcal{D}_{\epsilon,\zeta,\delta}\big) \big\rbrace &\leq - \frac{\norm{L}^2}{4} + \tilde{\eta} \\
\sup_{\epsilon <1} \sup_{T < 1} \big\lbrace \epsilon T(1+\zeta) \log \mathbb{P}\big(\mathcal{B}_{\epsilon,\zeta}^c\big) \big\rbrace &\leq - \frac{\norm{L}^2}{2}
\end{align}
\end{lemma}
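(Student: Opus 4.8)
The plan is to reduce each of the three bounds to a classical estimate for the standard $d$-dimensional Brownian motion $W$. Write $\widetilde W_t := \sqrt{2\epsilon}\,W_t$; then $\mathcal{A}_{\epsilon,\zeta,\eta}$ is the event that $\widetilde W$ visits the $\eta$-ball about $L$ at some time in $[T(1-\zeta/2),T(1+\zeta/2)]$, $\mathcal{D}_{\epsilon,\zeta,\delta}$ the event that $\widetilde W$ visits the ball of radius $\delta+T(1+\zeta)M$ about $L$ at some time in $[T(1-\zeta),T(1+\zeta)]$, and $\mathcal{B}_{\epsilon,\zeta}^c$ the event that $\sup_{t\le T(1+\zeta)}\norm{\widetilde W_t} > 2\norm{L}$. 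In terms of $W$ itself, each asks $W$ to reach a ball of radius $O((2\epsilon)^{-1/2})$ about $(2\epsilon)^{-1/2}L$ (or a sphere of radius $2\norm{L}(2\epsilon)^{-1/2}$), and the whole point is that the Gaussian weight of $W$ near a point of norm $\approx (2\epsilon)^{-1/2}\norm{L}$ at a time $\approx s$ is of exponential order $\exp(-\norm{L}^2/(4\epsilon s))$ --- this is the common source of the rate $\norm{L}^2/4$. Throughout, $\epsilon$ and $T$ are in the small-parameter regime in which the lemma is applied in Lemma~\ref{Lemma Regime 3}, which is what makes the polynomial (and, for $\mathcal{B}^c$, combinatorial) prefactors below negligible after multiplication by $\epsilon T$.

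\emph{Lower bound for $\mathbb{P}(\mathcal{A}_{\epsilon,\zeta,\eta})$.} Since the late endpoint $s:=T(1+\zeta/2)$ of the window is admissible,
\[
\mathbb{P}(\mathcal{A}_{\epsilon,\zeta,\eta}) \;\ge\; \mathbb{P}\!\big(\norm{\widetilde W_s - L}\le \eta\big) \;=\; \mathbb{P}\!\big(\norm{W_s - (2\epsilon)^{-1/2}L}\le (2\epsilon)^{-1/2}\eta\big).
\]
Bounding the Gaussian integral over this ball from below by its volume times the value of the density at the point of the ball farthest from the origin gives, for a dimensional constant $c_d>0$,
\[
\mathbb{P}(\mathcal{A}_{\epsilon,\zeta,\eta}) \;\ge\; \frac{c_d\big((2\epsilon)^{-1/2}\eta\big)^d}{(2\pi s)^{d/2}}\exp\!\Big(-\frac{(\norm{L}+\eta)^2}{4\epsilon s}\Big).
\]
Multiplying through by $\epsilon s=\epsilon T(1+\zeta/2)$, the exponential contributes exactly $-(\norm{L}+\eta)^2/4$, which is $\ge -\norm{L}^2/4-\tilde\eta$ once $\eta$ is small; the prefactor contributes $\epsilon s$ times a quantity of size $O(\log(1/\epsilon)+\log(1/s)+\log(1/\eta))$, which vanishes in the small-$\epsilon,T$ regime because $x\log(1/x)$ is bounded on $(0,1)$. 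Choosing the \emph{late} endpoint is what cancels the factor $(1+\zeta/2)$: an earlier time would leave an extra negative term of size $O(\zeta)\norm{L}^2$ not controlled by $\tilde\eta$.

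\emph{Upper bounds for $\mathbb{P}(\mathcal{D}_{\epsilon,\zeta,\delta})$ and $\mathbb{P}(\mathcal{B}_{\epsilon,\zeta}^c)$.} Both events say the path travels abnormally far, and are handled by projecting onto a line and using the reflection-principle bound $\mathbb{P}\!\big(\sup_{t\le s}\langle W_t,u\rangle\ge a\big)\le 2\exp(-a^2/(2s))$, valid for any unit vector $u$ and $a\ge 0$. For $\mathcal{D}$: if $\widetilde W_t$ lies within $\delta+T(1+\zeta)M$ of $L$ for some $t\le T(1+\zeta)$, then $\langle W_t, L/\norm{L}\rangle$ exceeds $(2\epsilon)^{-1/2}(\norm{L}-\delta-T(1+\zeta)M)$ at that time, so
\[
\mathbb{P}(\mathcal{D}_{\epsilon,\zeta,\delta}) \;\le\; 2\exp\!\Big(-\frac{(\norm{L}-\delta-T(1+\zeta)M)^2}{4\epsilon T(1+\zeta)}\Big).
\]
Multiplying by $\epsilon T(1+\zeta)$, letting $\delta\to 0^+$, and using that $T(1+\zeta)M\to 0$ as $T\to 0^+$ ($M$ being the bound on $\norm{f}$ along the path from the proof of Lemma~\ref{Lemma Regime 3}, which stays bounded as $T\to0^+$) gives the bound $-\norm{L}^2/4+\tilde\eta$. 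For $\mathcal{B}^c$, set $R:=2\norm{L}(2\epsilon)^{-1/2}$ and fix a finite net $e_1,\dots,e_{K_d}$ of unit vectors whose $\theta_0$-caps cover the sphere; then $\{\sup_{t\le T(1+\zeta)}\norm{W_t}\ge R\}\subseteq\bigcup_i\{\sup_{t\le T(1+\zeta)}\langle W_t,e_i\rangle\ge R\cos\theta_0\}$, so by reflection and a union bound
\[
\mathbb{P}(\mathcal{B}_{\epsilon,\zeta}^c) \;\le\; 2K_d\exp\!\Big(-\frac{R^2\cos^2\theta_0}{2T(1+\zeta)}\Big) \;=\; 2K_d\exp\!\Big(-\frac{\norm{L}^2\cos^2\theta_0}{\epsilon T(1+\zeta)}\Big).
\]
Taking $\theta_0$ small so $\cos^2\theta_0\ge 3/4$ and multiplying by $\epsilon T(1+\zeta)$ yields a bound below $-\tfrac34\norm{L}^2\le -\tfrac12\norm{L}^2$; the stated value $\norm{L}^2/2$ is thus comfortably conservative (the genuine rate is $\norm{L}^2$), which is what absorbs $\epsilon T(1+\zeta)\log(2K_d)$. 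The interchange of $\lim_{\delta\to0^+}$ with $\sup_{\epsilon<1}\sup_{T<1}$ in the $\mathcal{D}$-estimate is immediate from the displayed $\delta$-dependence.

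\emph{The main obstacle.} The probabilistic inputs --- the Gaussian density and the one-dimensional reflection principle --- are entirely standard, so the real work is quantitative bookkeeping of two kinds. First, pinning the exponential constant at exactly $\norm{L}^2/4$ forces us to evaluate the relevant one-dimensional probabilities at precisely the right parameter: the \emph{late} endpoint of the window for the lower bound on $\mathbb{P}(\mathcal{A})$, and against the \emph{near} face of the target ball for the upper bound on $\mathbb{P}(\mathcal{D})$, since careless choices shift the constant by $O(\zeta)$ or $O(\delta)$, i.e.\ by an amount not covered by the free slack $\tilde\eta$. Second, one must check that the volume/prefactor in the $\mathcal{A}$-estimate and the combinatorial factor $K_d$ in the $\mathcal{B}^c$-estimate are negligible; this is where it matters that $\epsilon$ and $T$ are small, so that $\epsilon T\log(1/(\epsilon T))\to 0$ absorbs them, and where the slack $\pm\tilde\eta$ (and the deliberately weak $\norm{L}^2/2$) is spent. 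Beyond this there are no convergence or measurability subtleties.
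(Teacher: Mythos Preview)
The paper does not prove this lemma at all: it states the three estimates as ``standard first-hitting-time estimates for Brownian Motion'' and refers the reader to \cite{ledoux2013probability,borodin2015handbook}. So there is no proof in the paper against which to compare yours.

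Your elementary derivation --- a Gaussian density lower bound for $\mathcal{A}$, and reflection-principle upper bounds (after projecting onto a direction, plus a finite net for $\mathcal{B}^c$) for the other two --- is correct and is exactly the sort of argument one would extract from those references. You are right that the constant $\norm{L}^2/4$ is pinned down by evaluating at the \emph{late} endpoint for the lower bound and against the \emph{near} face for the upper bound; this is the only point that requires care.

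One honest caveat, which you already flag: as literally written the lemma takes $\inf$/$\sup$ over all $\epsilon<1$ and $T<1$, whereas your prefactor terms (the $\epsilon s\log(c_d\eta^d)$ contribution in the $\mathcal{A}$-bound, the $\epsilon T(1+\zeta)\log(2K_d)$ in the $\mathcal{B}^c$-bound, and the $T(1+\zeta)M$ correction in the $\mathcal{D}$-bound) are only negligible in the small-$(\epsilon,T)$ regime. Since the paper itself applies the lemma only in that regime (inside the proof of Lemma~\ref{Lemma Regime 3}, where $\hat T_\epsilon\to 0$), your reading is the intended one, and your argument suffices for the application.
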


\begin{lemma} \label{Lemma E epsilon unlikely}
For any $\eta > 0$,
\begin{align}
\sup_{\epsilon \leq 1} \sup_{T \leq 1}\epsilon T(1+\zeta) \bigg\lbrace  \log \mathbb{P}\big( \norm{W_{T(1+\zeta)} - L} \leq \eta ,\mathcal{E}_{\epsilon,\zeta,\eta}^c\big) 
- \log \mathbb{P}\big( \norm{W_{T(1+\zeta)} - L} \leq \eta \big) \bigg\rbrace< 0.
\end{align}
\end{lemma}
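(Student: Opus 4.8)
The plan is to reduce the statement to a pair of elementary Brownian‑bridge supremum estimates, exploiting that the conditional probability of $\mathcal{E}_{\epsilon,\zeta,\eta}^c$ given the endpoint does not depend on that endpoint. Write $S = T(1+\zeta)$, $S' = T(1+\zeta/2)$, $A = \{\norm{W_S - L} \leq \eta\}$ and $r = (2\epsilon)^{-1/2}\eta$. Since $A$ is measurable with respect to $W_S$ alone, it suffices to produce a bound $\sup_{\ell\in\mathbb{R}^d}\mathbb{P}\big(\mathcal{E}_{\epsilon,\zeta,\eta}^c \mid W_S = \ell\big) \leq \psi(\epsilon,T)$ that is independent of the conditioned endpoint $\ell$; then $\mathbb{P}(A\cap\mathcal{E}_{\epsilon,\zeta,\eta}^c) \leq \psi(\epsilon,T)\,\mathbb{P}(A)$, so the quantity inside the lemma is at most $\sup_{\epsilon\leq 1, T\leq 1}\epsilon S \log\psi(\epsilon,T)$, and the task becomes showing this supremum is strictly negative.

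To obtain $\psi$ I would condition further on $W_{S'}$ and use the two independent Brownian bridges into which the path decomposes. On $[0,S']$ the process $t \mapsto W_t - (t/S')W_{S'}$ is \emph{exactly} a standard Brownian bridge from $0$ to $0$: a covariance computation shows it is uncorrelated with (hence independent of) $W_{S'}$ and $W_S$, so a standard reflection estimate gives $\mathbb{P}\big(\sup_{t\leq S'}\norm{W_t - (t/S')W_{S'}} > r\big) \leq C_d\exp(-c_d r^2/S')$ with no dependence on the conditioning. On $[S',S]$ one checks the algebraic identity $W_t - (t/S')W_{S'} = \tfrac{t-S'}{S-S'}\big(W_S - \tfrac{S}{S'}W_{S'}\big) + \hat B_t$, where $\hat B$ is a $0$-to-$0$ bridge on $[S',S]$ independent of $(W_{S'},W_S)$; conditionally on $W_S=\ell$ the factor $W_S - \tfrac{S}{S'}W_{S'}$ reduces to $-\tfrac{S}{S'}g$, where $g := W_{S'} - \tfrac{S'}{S}W_S$ is the centered Gaussian bridge residual at time $S'$, whose law (covariance of order $T$ per coordinate) does \emph{not} depend on $\ell$. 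Hence $\sup_{t\in[S',S]}\norm{W_t-(t/S')W_{S'}}$ is bounded above by $\tfrac{S}{S'}\norm{g}+\sup_{t\in[S',S]}\norm{\hat B_t}$, and a union bound together with Gaussian‑tail and bridge estimates controls the probability this exceeds $r$ by $C_d\exp(-c_d r^2/(S-S'))$, again independently of $\ell$. Since $S, S', S-S'$ are all of order $T$ (with $\zeta$‑dependent constants) and $r^2 = \eta^2/(2\epsilon)$, this yields $\psi(\epsilon,T)\leq C_{d}\exp\big(-c_{d,\zeta}\,\eta^2/(\epsilon T)\big)$.

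It remains to turn this into strict negativity uniformly over $\epsilon\leq 1$, $T\leq 1$, and I would split on the size of $\epsilon T$. If $\epsilon T$ lies below a threshold $\theta=\theta(\eta,d,\zeta)$, small enough that $\epsilon S\log C_d \leq \tfrac12 c_{d,\zeta}(1+\zeta)\eta^2$, then $\epsilon S\log\psi(\epsilon,T) \leq \epsilon S\log C_d - c_{d,\zeta}(1+\zeta)\eta^2 \leq -\tfrac12 c_{d,\zeta}(1+\zeta)\eta^2 < 0$. If instead $\epsilon T \geq \theta$, then (using $\epsilon,T\leq 1$) the pair $(\epsilon,T)$ lies in the compact set $[\theta,1]^2$, and here one argues directly, without the quantitative bound: the map $(\ell,\epsilon,T)\mapsto\mathbb{P}\big(\mathcal{E}_{\epsilon,\zeta,\eta}^c\mid W_S=\ell\big)$ is continuous and strictly less than $1$ at every point (the bridge stays within $r$ of its straight‑line chord with positive probability — indeed the chord itself satisfies $\mathcal{E}_{\epsilon,\zeta,\eta}$), so its supremum over $\norm{\ell - L}\leq\eta$ and $(\epsilon,T)\in[\theta,1]^2$ equals some $1-\delta_0<1$; thus the quantity inside the lemma is at most $\epsilon S\log(1-\delta_0) \leq \theta(1+\zeta)\log(1-\delta_0) < 0$. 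Taking the maximum of the two bounds completes the proof.

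I expect the uniformity in this last step to be the genuine obstacle: the quantitative bridge estimate degrades to the trivial bound as $\epsilon T$ leaves the small‑parameter regime, so one really does need the compactness‑plus‑strict‑positivity argument to keep the quantity bounded away from $0$ when $\epsilon$ and $T$ are both of order one. A secondary point requiring care is the extrapolation interval $[S',S]$: one must verify that the endpoint dependence genuinely cancels — so that a single $\ell$‑free bound $\psi(\epsilon,T)$ works for all conditioned endpoints near $L$ — rather than leaving an uncontrolled affine drift of size $\norm{L}$, which it does precisely because $W_S - \tfrac{S}{S'}W_{S'}$ collapses to the bridge residual $-\tfrac{S}{S'}g$.
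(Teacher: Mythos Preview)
Your proof is correct but takes a different, more hands-on route than the paper. The paper conditions on $W_S$ (with $S=T(1+\zeta)$), notes that the deviation process is then a centered Gaussian bridge whose law is independent of the conditioned endpoint and has pointwise variance bounded by $S$, and invokes the Borell--TIS inequality in one line to obtain $\epsilon S\log\mathbb{P}_{W_S}\big(\sup\|\tilde W\|\ge(2\epsilon)^{-1/2}\eta\big)\le -c$ uniformly. You instead decompose the process explicitly into bridges on $[0,S']$ and $[S',S]$, bound each piece by reflection and Gaussian-tail estimates, and close with a compactness argument on the region $\{\epsilon T\ge\theta\}$. Your approach is longer but more elementary (no Borell--TIS needed), and in two places it is actually more careful than the paper: you handle the mismatch between $S'=T(1+\zeta/2)$ appearing in the definition of $\mathcal{E}$ and $S=T(1+\zeta)$ appearing in the conditioning event, which the paper's proof silently elides by writing $\tilde W_t=W_t-(t/S)W_S$ rather than $W_t-(t/S')W_{S'}$; and your case split on the size of $\epsilon T$ makes explicit why the supremum over $\epsilon,T\le 1$ is \emph{strictly} negative rather than merely nonpositive, since any Gaussian-tail bound degrades to triviality once the level $(2\epsilon)^{-1/2}\eta$ falls below the expected supremum of the bridge, which is exactly what happens when $\epsilon T$ is of order one.
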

\begin{proof}
Standard Gaussian theory dictates that, conditionally on $W_{T(1+\zeta)}$, $\tilde{W}_t := W_t  -  t / (T(1+\zeta)) W_{T(1+\zeta)}$ is a centered Gaussian process, and for each $t \in [0,T(1+\zeta)]$ the variance is such that
\begin{align}
 \mathbb{E}\big[ \norm{\tilde{W}_t}^2 ]   \leq  \mathbb{E}\big[ \norm{W_t}^2 ] \leq T(1+\zeta).
\end{align}
The Borel-TIS Theorem \cite{adler2009random} implies that the conditioned probability is such that for any $\eta > 0$,
\begin{align}
\sup_{\epsilon \leq 1} \sup_{T \leq 1} \bigg\lbrace \epsilon T(1+\zeta) \log \mathbb{P}_{W_{T(1+\zeta)}}\bigg( \sup_{t\in [0,T(1+\zeta)]} \norm{\tilde{W}_t} \geq (2\epsilon)^{-1/2}\eta  \bigg) \bigg\rbrace := -c,
\end{align}
and $c > 0$. Hence
\begin{align}
\mathbb{P}\big( \norm{W_{T(1+\zeta)} - L} \leq \eta ,\mathcal{E}_{\epsilon,\zeta,\eta}^c\big) 
= &\mathbb{E}\bigg[ \chi\big\lbrace \norm{W_{T(1+\zeta)} - L} \leq \eta \big\rbrace 
\mathbb{P}_{W_{T(1+\zeta)}}\big( \sup_{t \leq T(1+\zeta)} \norm{\tilde{W}_t} \geq (2\epsilon)^{-1/2}\eta  \big)\bigg] \nonumber \\
\leq &\mathbb{P}\big( \norm{W_{T(1+\zeta)} - L} \leq \eta \big) \exp\big( -c\big((1+\zeta)T\epsilon\big)^{-1} \big)
\end{align}
This implies the Lemma.

\end{proof}

Next we prove that the trajectory of the particle is most likely close to a straight line (i.e. we prove \eqref{eq: straight line optimal trajectory}).
\begin{lemma}\label{Lemma optimal traec}
For any $\eta,\zeta > 0$,
\begin{align}
\lim_{\delta \to 0^+}\lim_{\epsilon,T \to 0^+} \epsilon T \log \mathbb{P}\bigg( \tau \in [T(1-\zeta), T(1+\zeta)] \; , \; \sup_{t\in [0,\tau]}\norm{X_t - Z^{(\tau)}_t} \geq \eta \bigg) < 0
\end{align}
\end{lemma}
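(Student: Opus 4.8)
The plan is to reduce, via Gronwall's inequality, the event $\{\sup_{t\in[0,\tau]}\norm{X_t - Z^{(\tau)}_t}\geq\eta\}$ to a large-deviation-scale unlikely event for the driving Brownian motion, paralleling the inclusion arguments used in the proof of Lemma~\ref{Lemma Regime 3}. Since $Z^{(\tau)}_t=\int_0^t f(Z^{(\tau)}_s)\,ds + t\tau^{-1}L$ for $t\le\tau$, subtracting from \eqref{eq:sde_gen single walker regime 3} gives, pathwise (so that measurability of the stopping time $\tau$ is all that is used), $X_t-Z^{(\tau)}_t=\int_0^t\big(f(X_s)-f(Z^{(\tau)}_s)\big)\,ds + \sqrt{2\epsilon}W_t - t\tau^{-1}L$. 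The Lipschitz bound \eqref{eq: f lipschitz} and Gronwall then yield
\[
\sup_{t\le\tau}\norm{X_t-Z^{(\tau)}_t}\;\le\; e^{C_f\tau}\,\sup_{s\le\tau}\big\|\sqrt{2\epsilon}W_s - s\tau^{-1}L\big\|,
\]
so it suffices to show that, intersected with $\{\tau\in[T(1-\zeta),T(1+\zeta)]\}$, the right-hand side exceeds $\eta$ only with probability $\exp(-c(\epsilon T)^{-1})$ for some $c>0$.

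First I would dispose of the excursions of $X$ using the event $\mathcal{B}_{\epsilon,\zeta}$ from the proof of Lemma~\ref{Lemma Regime 3}: on $\mathcal{B}_{\epsilon,\zeta}^c$ the probability is at most $\exp(-\tfrac{\norm{L}^2}{2}(\epsilon T(1+\zeta))^{-1})$ by Lemma~\ref{Lemma Bound Brownian First Hitting Times}, which is enough. On $\mathcal{B}_{\epsilon,\zeta}\cap\{\tau\le T(1+\zeta)\}$, the Gronwall estimate behind \eqref{eq: gronwall reimge 3} gives $\sup_{t\le\tau}\norm{X_t}\le 2\norm{L}e^{C_f T(1+\zeta)}=:R$, hence $\norm{f(X_t)}\le C_fR$, hence $\|\sqrt{2\epsilon}W_\tau-L\|\le\|\sqrt{2\epsilon}W_\tau-X_\tau\|+\|X_\tau-L\|\le \tau C_fR+\delta\le\beta$, with $\beta:=T(1+\zeta)C_fR+\delta\to 0$ as $T,\delta\to0^+$. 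Decomposing $\sqrt{2\epsilon}W_s-s\tau^{-1}L=\sqrt{2\epsilon}(W_s-s\tau^{-1}W_\tau)+s\tau^{-1}(\sqrt{2\epsilon}W_\tau-L)$, the last term is bounded by $\beta$ uniformly in $s\le\tau$.

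The one delicate point is the term $\sqrt{2\epsilon}\sup_{s\le\tau}\|W_s-s\tau^{-1}W_\tau\|$, since $\tau$ is random; the key observation is the algebraic identity, valid for every $s\le\tau\le u$,
\[
W_s-s\tau^{-1}W_\tau \;=\; \tilde W_s-s\tau^{-1}\tilde W_\tau,\qquad \tilde W_r:=W_r-r u^{-1}W_u,
\]
which holds because the two straight-line corrections cancel. Taking $u=T(1+\zeta)$ this bounds $\sup_{s\le\tau}\|W_s-s\tau^{-1}W_\tau\|$ by $2\sup_{r\le T(1+\zeta)}\|\tilde W_r\|$, a supremum of a Brownian bridge over a fixed (non-random) time interval, whose coordinatewise variance is $\le T(1+\zeta)$. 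The Borel--TIS inequality, exactly as applied in Lemma~\ref{Lemma E epsilon unlikely}, then gives $\mathbb{P}\big(\sqrt{2\epsilon}\sup_{r\le T(1+\zeta)}\|\tilde W_r\|\ge x\big)\le \exp\big(-c'x^2(\epsilon T)^{-1}\big)$ for all small $\epsilon$. Putting the pieces together: once $T,\delta$ are small enough that $\beta<\tfrac14\eta\,e^{-C_fT(1+\zeta)}$, the bad event intersected with $\mathcal{B}_{\epsilon,\zeta}$ forces $\sqrt{2\epsilon}\sup_{r}\|\tilde W_r\|\ge\tfrac14\eta\,e^{-C_fT(1+\zeta)}$, of probability $\exp(-c''(\epsilon T)^{-1})$. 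Adding the $\mathcal{B}_{\epsilon,\zeta}^c$ contribution, taking $\epsilon T\log(\cdot)$ and then $\epsilon,T\to0^+$ and $\delta\to0^+$, yields a limit bounded above by $-\min\{\tfrac{\norm{L}^2}{2(1+\zeta)},\,\tfrac{c''\eta^2}{16(1+\zeta)}\}<0$ (constants not optimized), proving the lemma. The main obstacle is precisely obtaining a $\tau$-uniform bridge estimate, which the cancellation identity handles cleanly; the rest is a repetition of the Gronwall-plus-Borel--TIS scheme already used for Lemma~\ref{Lemma Regime 3}.
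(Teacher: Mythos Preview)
Your proof is correct and follows the same Gronwall-plus-Brownian-bridge scheme as the paper: reduce $\sup_{t\le\tau}\|X_t-Z^{(\tau)}_t\|$ to a bridge-type fluctuation of $W$, then apply Borel--TIS. The one genuine (minor) difference is in how the random endpoint is handled: the paper pivots its bridge at the \emph{fixed} time $T(1+\zeta/2)$ via the event $\mathcal{E}_{\epsilon,\zeta,\kappa}$ and then appeals to the conditional estimate of Lemma~\ref{Lemma E epsilon unlikely}, whereas you pivot at the \emph{random} time $\tau$ and convert back to a fixed-interval bridge through the cancellation identity $W_s-s\tau^{-1}W_\tau=\tilde W_s-s\tau^{-1}\tilde W_\tau$ with $\tilde W_r=W_r-r u^{-1}W_u$. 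Your route avoids the conditioning step and makes the $\tau$-dependence transparent; you also treat the $\mathcal{B}_{\epsilon,\zeta}^c$ case explicitly, which the paper's terse inclusion glosses over. Either way one lands on the same tail bound, so the two arguments are variants of the same proof rather than different approaches.
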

\begin{proof}

One can prove using Gronwall's Inequality that, as long as $\eta,\kappa$ are sufficiently small,
\begin{equation}
\big\lbrace \tau \in [T(1-\zeta), T(1+\zeta)] \; , \; \sup_{t\in [0,\tau]}\norm{X_t - Z^{(\tau)}_t} \geq \eta \bigg\rbrace \subseteq \mathcal{A}_{\epsilon,\zeta,\eta} \bigcap \mathcal{B}_{\epsilon,\zeta} \bigcap\mathcal{E}^c_{\epsilon,\zeta,\kappa}.
\end{equation}
See Lemma \ref{Lemma Gronwalls}. Next we prove that for any $\kappa,\eta > 0$
\begin{align}
\sup_{\epsilon < 1}\sup_{T < 1} \epsilon T\bigg\lbrace \log \mathbb{P}\big(\mathcal{E}_{\epsilon,\zeta,\kappa}, \mathcal{A}_{\epsilon,\zeta,\eta} , \mathcal{B}_{\epsilon,\zeta}\big) - 
\log \mathbb{P}\big( \mathcal{A}_{\epsilon,\zeta,\eta}, \mathcal{B}_{\epsilon,\zeta}\big) \bigg\rbrace 
< 0 
\end{align}
This is proved in Lemma \ref{Lemma E epsilon unlikely}.
\end{proof}

\begin{lemma}\label{Lemma Gronwalls}
Assume that $T \leq 1$. Let $y,z,v,w \in \mathcal{C}([0,T],\mathbb{R}^d)$ be such that $y_0 = z_0 = w_0 = v_0 = 0$, and also such that
 for all $t\in [0,T]$,
\begin{align}
y_t =& \int_0^t f(y_s) ds + \sqrt{2 \epsilon}w_t, \\
z_t =& \int_0^t f(z_s) ds + \sqrt{2 \epsilon}v_t .
\end{align}
There exists a constant $C > 0$ ($C$ is uniform for all $T\leq 1$ and $\epsilon > 0$) such that
\begin{align}
\sup_{t\in [0,T]}\norm{ y_t - z_t} \leq C \sqrt{2\epsilon}\bigg\lbrace \sup_{t\in [0,T]}\norm{v_t - w_t} \bigg\rbrace.     
\end{align}
\end{lemma}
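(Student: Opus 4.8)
The plan is to run a direct Gronwall estimate on the difference process $u_t := y_t - z_t$. Subtracting the two integral equations in the statement gives, for every $t \in [0,T]$,
\begin{equation*}
u_t = \int_0^t \big( f(y_s) - f(z_s) \big)\, ds + \sqrt{2 \epsilon}\,(w_t - v_t).
\end{equation*}
First I would take norms and invoke the Lipschitz hypothesis \eqref{eq: f lipschitz}, which yields $\norm{u_t} \leq C_f \int_0^t \norm{u_s}\, ds + \sqrt{2\epsilon}\, D$, where $D := \sup_{s\in[0,T]} \norm{w_s - v_s}$ and we have bounded $\norm{w_t - v_t} \leq D$ uniformly in $t$.

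Next I would pass to the running supremum $g(t) := \sup_{s \leq t} \norm{u_s}$. Because the right-hand side of the previous display is nondecreasing in $t$, the same bound holds with $\norm{u_t}$ replaced by $g(t)$, i.e. $g(t) \leq C_f \int_0^t g(s)\, ds + \sqrt{2\epsilon}\, D$. Gronwall's inequality then gives $g(t) \leq \sqrt{2\epsilon}\, D\, e^{C_f t}$ for all $t \in [0,T]$. Finally, since $T \leq 1$ we have $e^{C_f t} \leq e^{C_f}$ on $[0,T]$, so the choice $C := e^{C_f}$ --- which depends only on the Lipschitz constant of $f$ and hence is uniform in $\epsilon > 0$ and in $T \leq 1$ --- gives $\sup_{t\in[0,T]} \norm{y_t - z_t} = g(T) \leq C \sqrt{2\epsilon}\, D$, which is exactly the claimed estimate.

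There is essentially no obstacle in this argument; it is a textbook continuous-time Gronwall comparison, and no existence theory is needed since $y,z,v,w$ are given as continuous functions already satisfying the stated equations. The only point that requires a moment's care is producing a constant $C$ that is independent of both $T$ and $\epsilon$, and this is precisely the role of the hypothesis $T \leq 1$: it caps the exponential growth factor $e^{C_f T}$ coming out of Gronwall's inequality by the $\epsilon$- and $T$-independent quantity $e^{C_f}$.
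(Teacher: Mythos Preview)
Your argument is correct and is essentially identical to the paper's own proof: the paper subtracts the two equations, applies the triangle inequality together with the Lipschitz bound on $f$, invokes Gronwall's inequality to obtain $\sup_{t\in[0,T]}\norm{y_t - z_t} \leq e^{C_f T}\sqrt{2\epsilon}\sup_{s\in[0,T]}\norm{v_s - w_s}$, and then uses $T\leq 1$ to get a uniform constant. The only cosmetic difference is that you pass explicitly to the running supremum $g(t)$ before applying Gronwall, whereas the paper applies Gronwall directly to $\norm{y_t - z_t}$; both are valid.
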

\begin{proof}
Thanks to the triangle inequality,
\begin{align}
\norm{ y_t - z_t} \leq \int_0^t C_f \norm{ y_s - z_s} ds +  \sqrt{2\epsilon}\sup_{s\in [0,t]}\norm{v_s - w_s}.
\end{align}
Gronwall's Inequality now implies that
\begin{align}
\sup_{t\in [0,T]}\norm{ y_t - z_t} \leq \exp\big(T C_f \big) \sqrt{2\epsilon}\sup_{s\in [0,T]}\norm{v_s - w_s}.
\end{align}
\end{proof}

\section{Example: Ornstein-Uhlenbeck process}
\label{sec:example}
To illustrate our theory, we consider the case where $f(x)=Ax$ is a linear function.
The advantage to this simple model is that we can obtain many exact results and it is generally applicable when viewed as a linearization around a stable equilibrium point of a given nonlinear model.
Consider a 2D Ornstein-Uhlenbeck (OU) process given by
\begin{equation}
\label{eq:sde_ou}
     dX_t = AX_t dt + \sqrt{2\epsilon}dW,\quad X_0 = 0,
\end{equation}
where 
$$ A = \begin{bmatrix}-\mu & -\omega \\ \omega & -\mu \end{bmatrix},$$
and $\mu > 0$.
With this choice of matrix $A$, the dynamics are rotational; that is, the origin is a stable spiral for the deterministic dynamics.
As we show in this section, the rotational dynamics impart a nontrivial spiral shape for the globally optimal MLT.

We consider the single walker rare event 
\begin{equation}
\label{eq: example fpt}
    \tau = \inf\{t>0\mid \Vert X_t - X_0\Vert = \Vert L \Vert \},
\end{equation}
for $L\neq 0$. 
Because the probability law is invariant under a rotation of the axes, the large deviations rate function and MLT are also invariant under a rotation of the axes. 
In particular, suppose that $\tilde{L}$ is such that $\Vert \tilde{L}\Vert = \Vert L\Vert$, and $\tilde{L}$ can be obtained by rotating $L$ by some angle $\alpha$.
Then, for any $T>0$, the MLT $\tilde{\varphi}_{t|T}$ for which $\tilde{\varphi}_{T|T}=\tilde{L}$ can be obtained by taking the MLT $\varphi_{t|T}$ for which $\varphi_{T|T}=L$ and rotating it uniformly by $\alpha$.

The linearity of the OU dynamics facilitates analytic formulae. 
Indeed, given our initial condition, the process can be described as a multivariate normal random variable for any countable set of times.
The eigenvalues of $A$ are $-\mu \pm i \omega$. 

The solution to \eqref{eq:sde_ou} can be written as
\begin{equation}
X_t = \sqrt{2\epsilon}   \int_0^t e^{-\mu (t-s)} R_{t-s} dW_s .
\end{equation}
where
\begin{align}
R_{t} = \begin{bmatrix}
\cos(\omega t) & -\sin(\omega t) \\
\sin(\omega t) & \cos(\omega t) 
\end{bmatrix}.
\end{align}

The Hamiltonian defined by \eqref{eq:ham_ldt} is given by
\begin{equation}
\label{eq:ham_ou}
    \mathcal{H}(x, p) = p^TAx + \Vert p\Vert^2.
\end{equation}
For our problem \eqref{eq:ham_dyn_gen} and \eqref{eq:S_of_t} become
\begin{equation}
    \label{eq:ham_dyn}
    \begin{split}
    \dot{\varphi}_{t|T} &= A\varphi_{t|T} + 2p_{t|T}, \\ 
    \dot{p}_{t|T} &= -A^Tp_{t|T}, \\
    \dot{S}_{t|T} & = \Vert p_{t|T}\Vert^2,
    \end{split}
\end{equation}
 with 
\begin{equation}
\varphi_{0|T}=0, \, \varphi_{T|T} = L,\, S_{0|T} = 0.
\label{eq: boundary condition OU example}
\end{equation}


Rather than solve \eqref{eq:ham_dyn} numerically, we can take advantage of the linear OU dynamics and solve it exactly.
One can verify that the ($T < \infty$) suboptimal MLTs are given by
\begin{align}
\label{eq:MLT_sol}
\varphi_{t|T} &= \frac{e^{\mu t} - e^{-\mu t}}{e^{\mu T} - e^{-\mu T}}R_{t - T}L, \\
p_{t|T} &= \frac{\mu e^{\mu t}}{e^{\mu T} - e^{-\mu T}}R_{t - T}L, \\
\label{eq:cost_S}
S_{t|T} &= \frac{\mu \Vert L\Vert^2}{2}\left(\frac{e^{-2\mu T}}{\left(1 - e^{-2\mu T}\right)^2}\right)( e^{2\mu t} - 1).
\end{align}
In fact, one can further verify (see Appendix \ref{sec:example appendix}) that $\varphi_{t|T}$ is the exact conditional mean $\mathbb{E}[X_t \mid X_T=L, X_0=0]$, which happens to be independent of $\epsilon$ in this example.

To complement the exact formulae for the ($T<\infty$) suboptimal MLTs $\varphi_{t|T}$, we next derive the ($T\to \infty$) optimal MLT $\varphi_{s|\infty}$ as a special case.
Recall that the optimal MLT starts at a critical point of \eqref{eq:ham_dyn}, namely $x = p = 0$. 
Hence, it is necessary to consider the solution converging to the critical point in the limit $t\to -\infty$. 
We can write the optimal MLT in backward time $s<0$ so that $\varphi_{0|\infty} = L$ and $\lim_{s\to-\infty} \varphi_{s|\infty} = 0$. 
We find
\begin{align}
\label{eq:opt_MLT}
    \varphi_{s|\infty} &= e^{\mu s}R_s L, \quad -\infty < s < 0,\\
    p_{s | \infty} &= \mu e^{\mu s}R_s L, \\
    S_{s | \infty} &=  \frac{\mu}{2} e^{2\mu s}\Vert L\Vert^2.
\end{align}
We note that $V(L) = S_{0|\infty} = \frac{\mu}{2}\Vert L\Vert^2,$ where $V$ is the quasi potential \eqref{eq:quasi potential}.

We also note that for an OU process (see Appendix \ref{sec:example appendix} for derivation), the cost function \eqref{eq:cost_fn} and solution to the Hamilton-Jacobi equation \eqref{eq: Hamiltonian PDE} is given by the quadratic form
\begin{align} 
\label{eq: OU cost function}
V(x, t) = \frac{1}{2} x^T  U_{tt}^{-1} x = \frac{\mu}{2}\Vert x\Vert^2 \left(\frac{1}{1 - e^{-2\mu t}}\right).
\end{align}

For our example, it is possible to exactly sample the (random) conditional transition paths $X^{\epsilon}_{t| T}$.
It is well-known that the the OU process over a finite set of times is a multivariate normal random variable \cite{karatzas1998brownian}. Furthermore if one conditions a multivariate normal random variable on the values of a subset of the variables, then one obtains another multivariate normal random variable (with a different mean and covariance matrix) \cite{lindgren2013stationary}. 
Discretize the interval $[0, T]$ into $n+2$ equally space points $t_i$, $i=0,\ldots,n+1$, where $t_0 = 0$ and $t_{n+1} = T$. 
Note that initial and end positions are fixed, which leaves $n$ points along the trajectory that must be sampled.
We then define the $2n$-dimensional random variable representing a sampled conditional path with
\begin{equation}
    X_{1:n|T}^{\epsilon} = \{X_{t_1|T}^{\epsilon},\ldots, X_{t_n|T}^{\epsilon}\}.
\end{equation}
We have that
\begin{equation}
\label{eq:conditional samples paths}
    X_{1:n|T}^{\epsilon}  :\sim \text{MultivariateNormal}
    (\varphi_{1:n|T}, \epsilon U_{1:n,\,1:n|T}),
\end{equation}
where $\varphi_{1:n|T}$ from \eqref{eq:MLT_sol} is the conditional mean.
The covariance $U_{1:n,\,1:n|T}$ can be thought of as a $2n\times 2n$ symmetric matrix.
Let $U_{ts|T} \in \mathbb{R}^{2\times 2}$ be the scaled conditional auto covariance matrix, i.e.
\begin{align}
\label{eq:cacov}
U_{ts|T} :=  \epsilon^{-1} \mathbb{E}\left [ X_t X_s^T \mid X_T = L,\, X_0 = 0 \right ] = U_{ts} - U_{tT}U_{TT}^{-1} U_{Ts},
\end{align}
where the scaled auto covariance matrix is given by, \cite[Ch.\ 4, p.\ 109]{gardiner2009stochastic}
\begin{align}
U_{sr} := 2\epsilon^{-1} \mathbb{E}\left[X_{s} X_{r}^T \right] = \frac{2}{\mu}\left(e^{-\mu\vert s - r\vert} - e^{-\mu\vert s + r\vert}\right)R_{s-r}.
\end{align}

Unfortunately, there is no practical exact representation of the single-walker first passage time density or direct method for sampling it \cite{giorno1986some,borodin2015handbook}. 
In general, sampling the extreme first hitting time is not computationally feasible since $N\gg 1$ and $\epsilon \ll 1$. 
Each sample of the extreme first hitting time requires $N$ single-walker first hitting times, each of which must be sampled using a time stepping method such as Euler-Maruyama. 
Since each hitting time is a rare event, it requires an exponentially large number of time steps, on average. 
Hence, the computation of a single extreme first hitting time requires $O(N\frac{e^{\Lambda/\epsilon}}{\Delta t})$ time steps of length $\Delta t$.
Moreover, it is quite likely that the first passage time, because it is a rare event, is highly sensitive to discretization error.
We note, however, that it may be possible to speed this computation up to some degree using importance sampling techniques.

As an alternative to sampling the extreme first hitting time, we illustrate our results with the rigorously established approximation of the mean first hitting time $\mathbb{E}[\tau^N_{\rm first}] \simeq \Psi_L(\Lambda)$, where $\Psi_L(\Lambda)$ is given by \eqref{eq: Psi x c}. 
We want to sample MLT trajectories conditioned on fixed time interval $0 \leq t \leq  T = \Psi_L(\Lambda)$.  
That is, trajectories that begin at the origin at $t=0$ and end at the position $L$ at time $t=\Psi_L(\Lambda)$.

\begin{figure}[tbp]
    \centering
    \includegraphics[width=8cm]{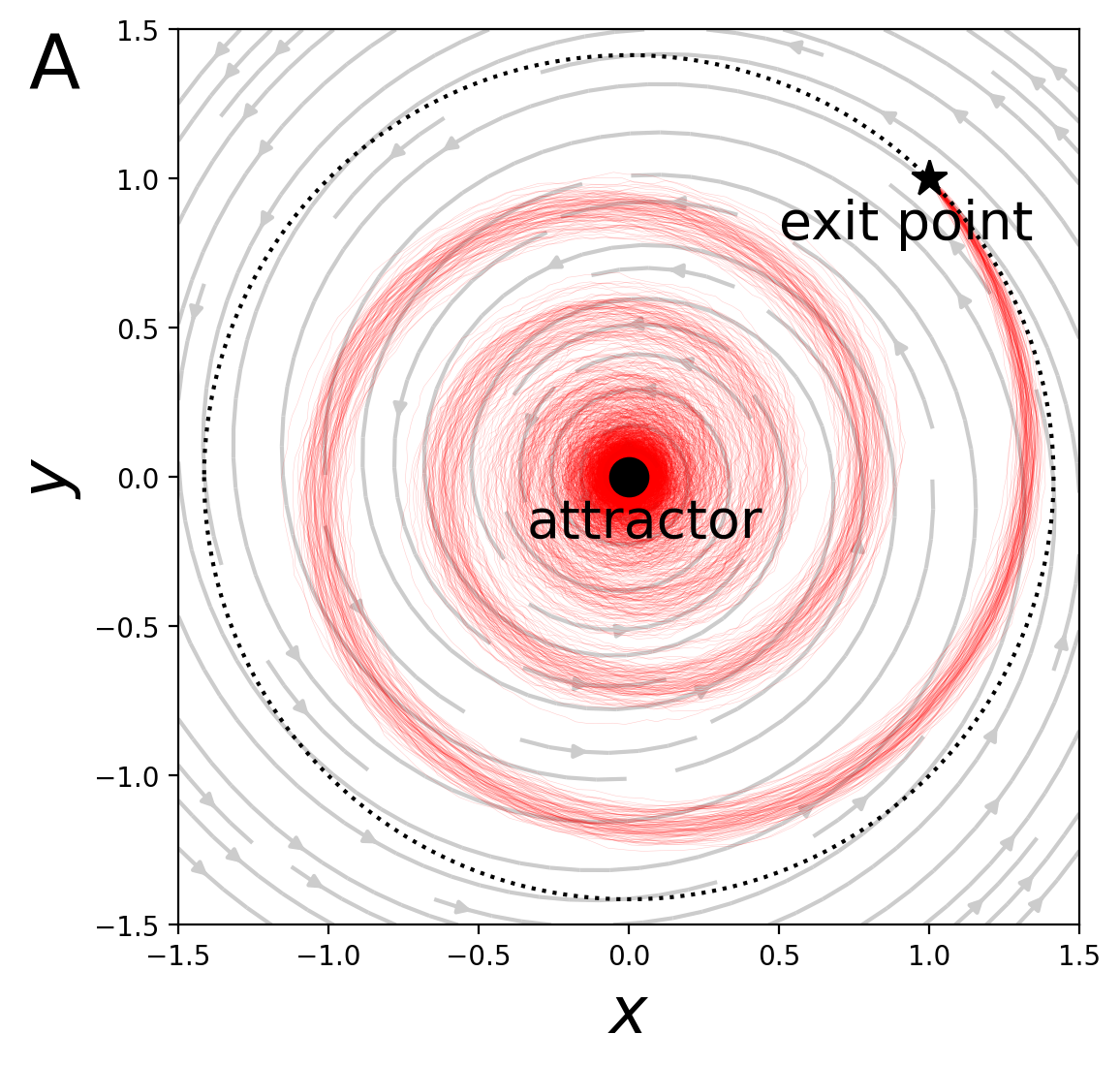}
    \includegraphics[width=13cm]{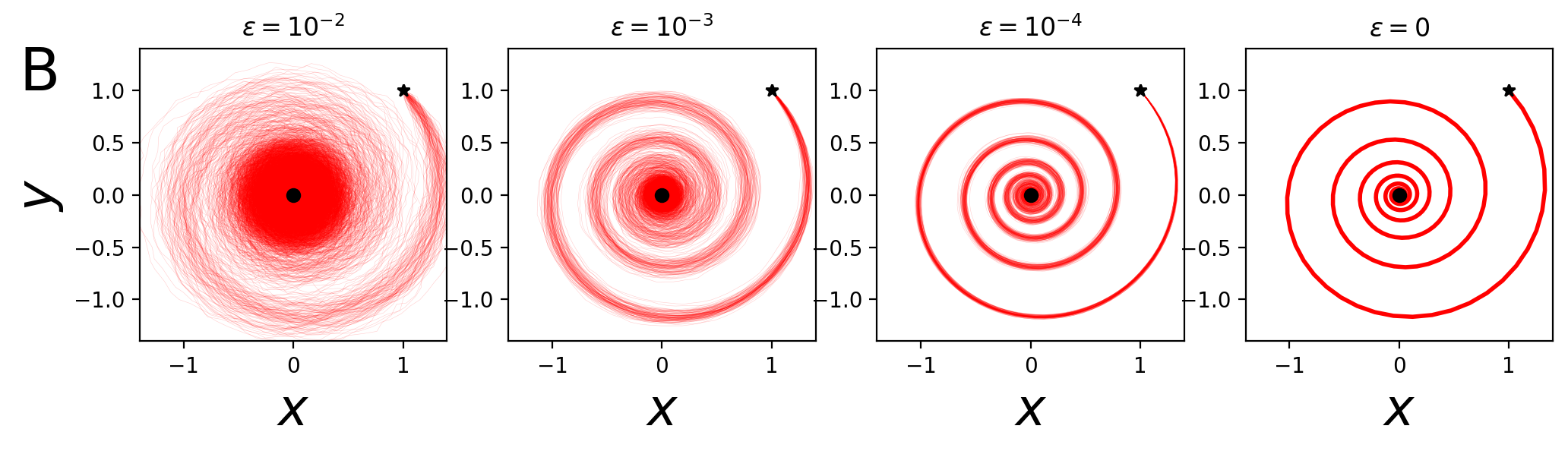}
    \includegraphics[width=8.5cm]{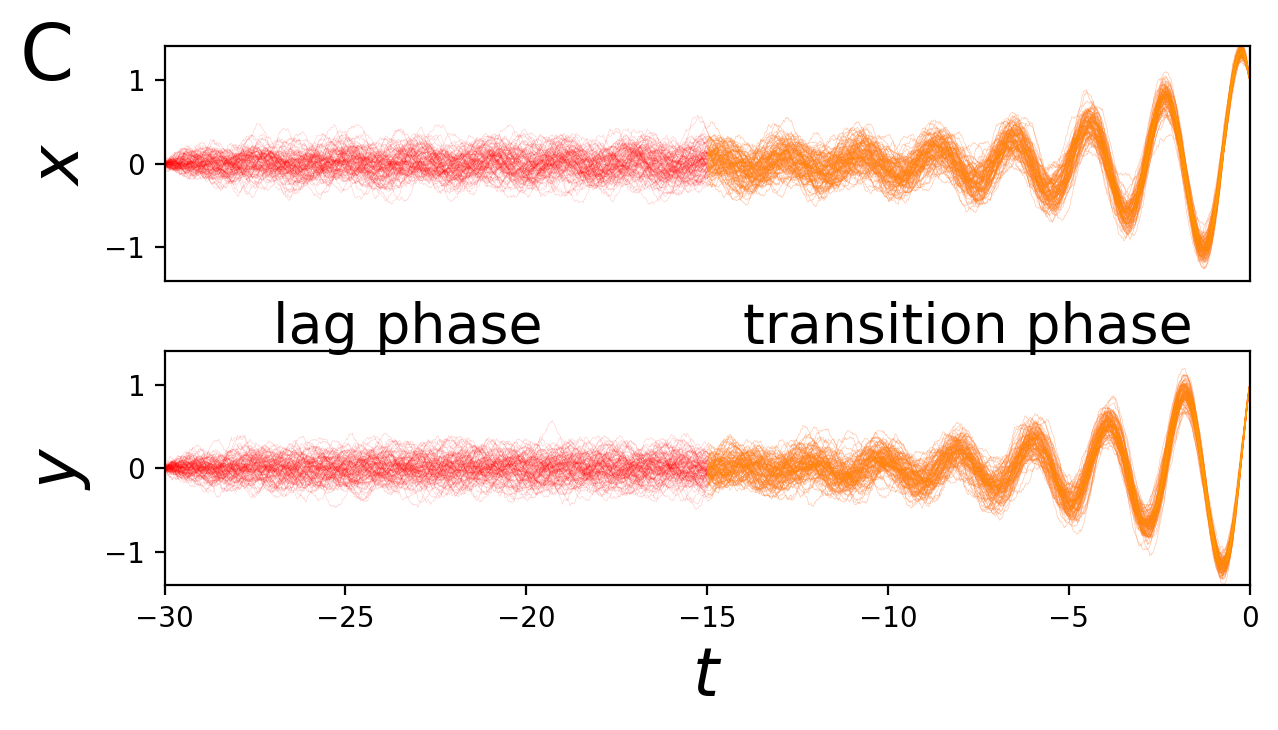}
    \caption{{\bf Single walker rare event.}
    (A) Samples of the optimal escape trajectory \eqref{eq:conditional samples paths} (red) for $\epsilon=10^{-3}$, with streamlines showing the deterministic flows ($\dot{x}=Ax$) for reference. (The optimal trajectory $T=\infty$ is approximated with $T=30$.) (B) The escape trajectories follow a "tube". As the noise magnitude $\epsilon$ goes to zero, the width of the "tube" goes to zero.
    (C) sampled conditional transition paths with $\epsilon=5\times 10^{-3}$ are shown as functions of time. The paths are separated into two segments or phases. The lag phase is shown as red and the transition phase is shown as orange.
    Other parameters are $\mu=0.25$, $\omega=3$.}
    \label{fig:1}
\end{figure}

To illustrate the exit trajectories, we focus on a single exit point $L = (1, 1)$, but we note that any of the points on the circle $\Vert L\Vert = \text{const}$ (dashed line in Fig.~\ref{fig:1}A) are equally likely.
Several samples of conditional transition trajectories \eqref{eq:conditional samples paths} are shown in Fig.~\ref{fig:1}.
As the noise amplitude is decreased, the sampled trajectories condense into a tube like path, the center of which is the MLT \eqref{eq:MLT_sol}.
Several sample conditional paths are shown in Fig.~\ref{fig:1}C illustrating the two phases of a given rare event path: the exponentially long lag phase (shown in red) and the transition phase (shown in orange).
Note that in general, the lag phase is likely to be much longer than shown.

Recall that we assume the distinguished limit $\epsilon \to 0$, $N\to \infty$ with $\Lambda = \epsilon \log(N)$ constant. 
In regime 1, there is a labor shortage.
If $\Lambda  < V(L)$ then there are not enough walkers to overcome the exponentially long (in $\epsilon^{-1}$) lag phase (see Fig.~\ref{fig:1}C) and $\mathbb{E}[\tau^N_{\rm first}] \to \infty$ as $\epsilon \to 0$.
The MLT is essentially described by the globally optimal $T\to \infty$ MLT \eqref{eq:opt_MLT} (see Fig.~\ref{fig:1}A,B).
Increasing the number of walkers in this regime has the strongest effect on reducing the extreme first hitting time; that is, $\mathbb{E}[\tau^N_{\rm first}] \propto N^{-1}$.
One can define a critical labor supply, 
\begin{equation}
  N_* = \exp(V(L)/\epsilon) = \exp\left(\frac{\mu}{2\epsilon}\Vert L\Vert^2\right),
\end{equation}
which serves as a boundary between the regime 1 and 2.

\begin{figure}[tbp]
    \centering
    \includegraphics[width=6.6cm]{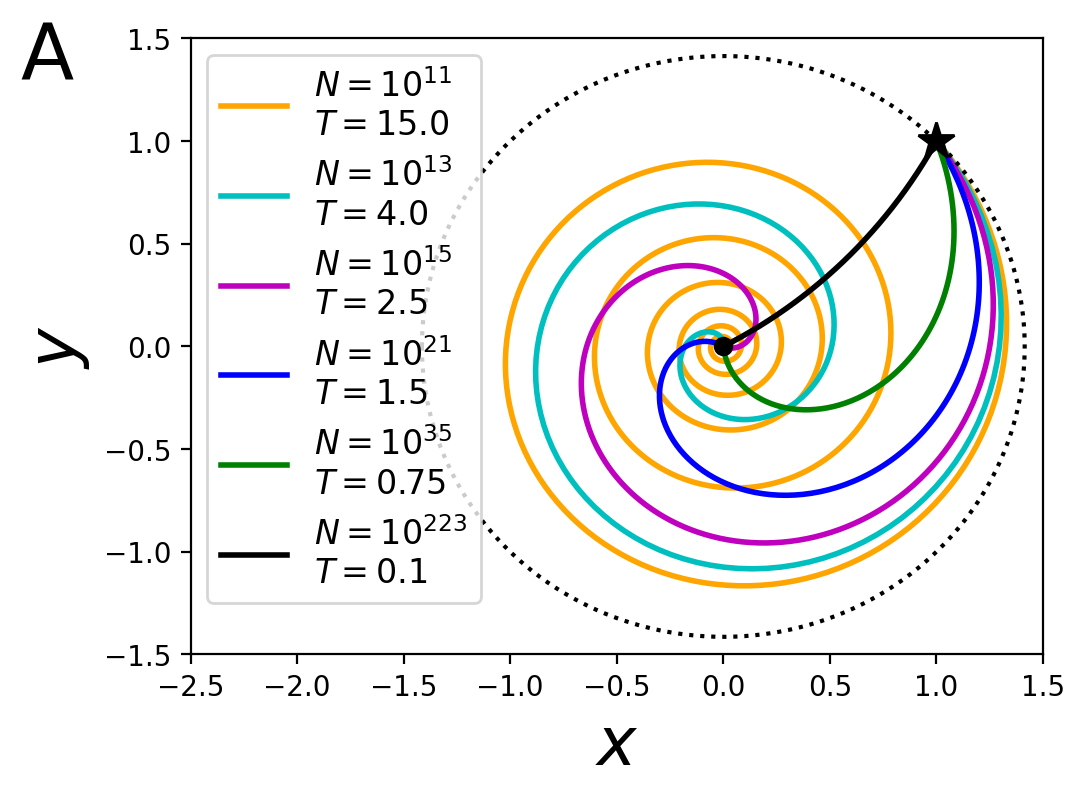}
    \includegraphics[width=8.5cm]{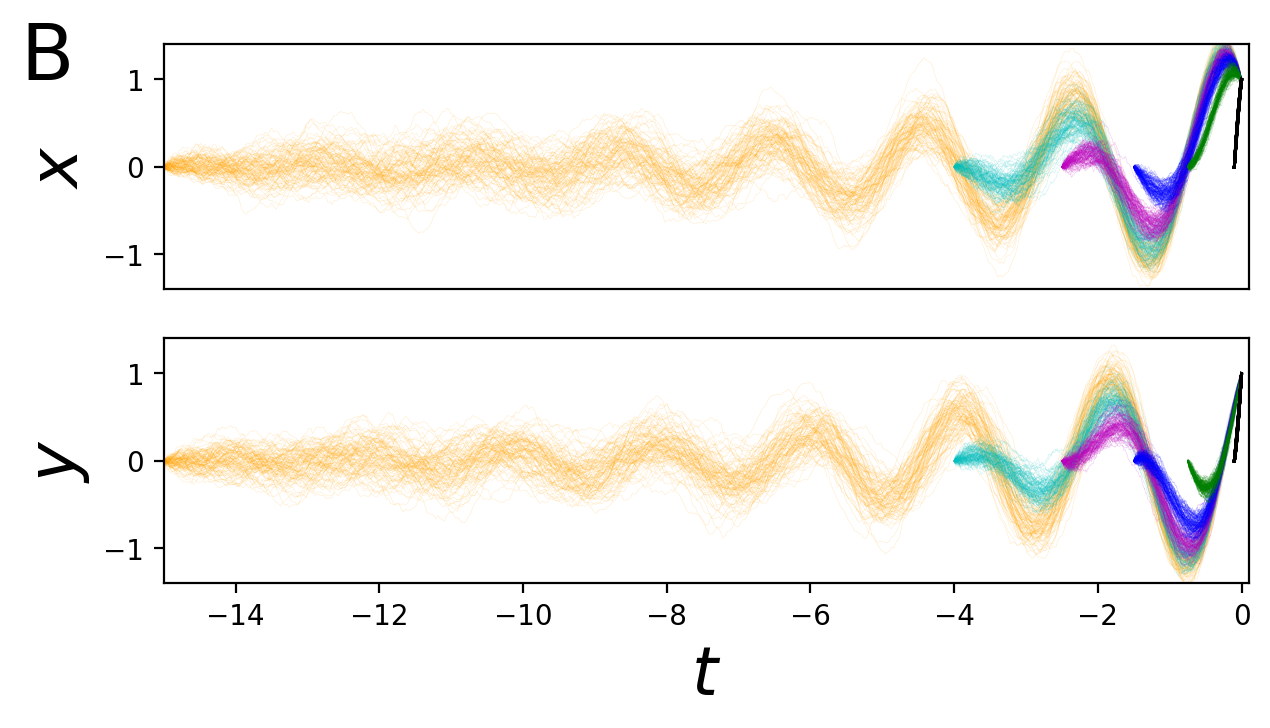}
    \caption{{\bf Trajectories in regime 2: balanced labor.} In Regime 2, the MLTs are deformed as $N$ is increased ($T$ is decreased), with $N = \exp(S_{T|T}/\epsilon)$, $\epsilon=10^{-2}$. (A) Multiple suboptimal MLTs conditioned on the final time $T$. (B) Sampled transition paths for $\epsilon=10^{-2}$. For comparison, the trajectories are aligned at the end and shown backward in time. Other parameters are $\mu=0.25$, $\omega=3$.}
    \label{fig:2}
\end{figure}

In regime 2, the labor supply is balanced against the single-walker rare event time $\tau$.
If $V(L) < \Lambda < S_{T|T} < \infty$ then $N > N_*$ and there is a sufficient number of walkers to accomplish the rare event in finite time as $\epsilon \to 0^+$.
The MLT duration $T$ is finite and does not follow the same path as the globally optimal MLT.~~
We have that $T = T_{\rm ML} = \Psi_L(\Lambda)$, where $\Psi_L(\Lambda)$ is given by \eqref{eq: Psi x c}.
The conditional mean $\varphi_{t|T}$ \eqref{eq:MLT_sol} is plotted in Fig.~\ref{fig:2}A for several values of $N$ (We also note the corresponding value of $T$).
Sample paths are shown in Fig.~\ref{fig:2}B as functions of time.
Increasing the number of walkers is equivalent to reducing $T$, which results in an MLT that moves more directly toward the exit point $L$.
For exceedingly large $N$ (e.g., $N = 10^{223}$ or equivalently $T=0.1$), the path is nearly a straight line (see the black curves in Fig.~\ref{fig:2}).

In regime 3, there is a labor surplus.
As $\Lambda \to \infty$ we reach a regime where adding more walkers has the smallest effect on the extreme first hitting time; namely $\mathbb{E}[\tau^N_{\rm first}] \propto 1/\log(N)$.
One way we could have the limit $\Lambda \to \infty$ is taking $N\to\infty$ with $\epsilon$ fixed.
We know that we have a labor surplus when the cost function begins to scale like $S_{T|T} \simeq M T^{-1}$, as $T\to 0^+$, for some constant $M>0$.
The small $T$ asymptotics of the cost function has been studied for a large class of nonlinear SDEs \cite[Ch.\ 11, p.\ 407]{freidlin2012random}, and it is known that the constant $M$ depends on the Riemannian length between the origin and $L$.
With this scaling, we have that $\mathbb{E}[\tau^N_{\rm first}] = T \simeq \frac{M}{\epsilon \log(N)}$, as $T\to 0^+$.
We therefore conclude that the crossover timescale can be determine by the precise small $T$ asymptotics of $S_{T|T}$.
In our simple example, expanding \eqref{eq:cost_S} around $T=0$ yields
\begin{equation}
  S_{T|T} \simeq \frac{\Vert L\Vert^2}{4}\left( T^{-1} + \mu 
  + O(T)\right), \quad T\to 0^+. 
\end{equation} 
Notice that the leading order term is independent of $\mu$ and $\omega$.
The above expansion suggests that the crossover timescale is $T \ll 1/\mu$
which in turn identifies,
\begin{equation}
 N > N_{**} = \exp\left(\frac{V(L, \mu^{-1})}{\epsilon}\right)
\end{equation}
where the scaling of the extreme hitting time is $1/\log(N)$.
The factor $M/\epsilon = \Vert L\Vert^2/(4\epsilon)$ is the mean first passage time for a single Brownian walker to reach the boundary of the circle of radius $L$, starting from the center.
It is this time that is reduced by a factor of $1/\log(N)$ as $N\to\infty$ so that
\begin{equation}
    \mathbb{E}[\tau^N_{\rm first}] \simeq \frac{\left(\frac{\Vert L\Vert^2}{4\epsilon}\right)}{\log(N)}, \quad N\to\infty.
\end{equation}

\section{Discussion}
It is counter-intuitive that an extreme first passage time is highly likely to be caused by a pathway that is highly unlikely to occur in an individual walker.
In other words, even though individual walkers have far more likely pathways they can take to complete the rare event, a much less likely trajectory causes the extreme first passage time.
This challenges the idea that since the individual walkers initiating the extreme event are identical, one can gain understanding of the extreme event by studying the individuals that initiate it.
For example, to understand the factors that contribute to cancer, we might study the sequence of events (e.g., mutations) that lead an individual cell to a physiological cancer-relevant state.
But, this may not yield an accurate picture of how the first cell initiates cancer in a living organism. 
Likewise, single molecule experiments might observe spontaneous opening events in a calcium channel, but in a cell with thousands of identical channels, the sequence of molecular reconfigurations that ultimately spark a calcium wave could be driven by a very different reaction pathway. 
Whether we seek to understand the behaviors that underlie transmission of a viral contagion during the initiation of an outbreak or how an antibiotic-resistant bacterial population re-establishes itself within a patient that stops taking antibiotics, it is far easier to study the properties of individuals in isolation that it is to study the full population.

The results established in this paper allow us to conclude that we can still employ the strategy of studying the behavior of an individual, but we must consider the effect of conditioning on a maximum event time.
In order to do this, we need to know how the maximum event time depends on the size of the population.
Our results provide this through a simple relationship between the timescale of the extreme first passage time, the strength of the intrinsic noise driving the individual rare events, and the size of the population.

Our results establish a connection between the number of walkers, $N$, and the duration of a MLT, $T$. 
This means that adding more walkers is equivalent to conditioning on a faster {MLT}.
It has long been known in the Large Deviation Theory literature that, given a fixed exit point, the globally optimal MLT is given by $T\to\infty$, and it is these paths that determine the asymptotics of the single walker mean exit time.
Hence, the $T\to \infty$ MLTs have received the most attention in the literature.
By establishing a link between $N$ and $T$, we have identified a new application for the sub-optimal, finite $T$, MLTs.

The example in Section \ref{sec:example} involved a linear SDE. 
For general nonlinear problems, the best course of action is to employ numerical methods \cite{heymann2008geometric,cameron2012finding,newby2014} and dynamical systems techniques \cite{maier1993,maier1996scaling,maier1997,newby2014} to study the problem.
The deterministic ($\epsilon = 0$) dynamics should be studied first.
The stochastic dynamics can then be understood through \eqref{eq:ham_dyn_gen}.
Notable features that can occur in general nonlinear problems include the possibility that optimal ($T\to\infty$) MLTs are not unique \cite{maier1996scaling,newby2014} and the "non classical" behavior of MLTs near unstable critical points of the deterministic dynamics \cite{maier1997,newby2014}--- particularly saddles, which often serve as boundaries between basins of attraction.

Finally, we note that our choice of using a continuous Markov process is purely for the sake of brevity. 
The theory of large deviations, including the characterization of rare events in terms of MLTs, has been extended to many other types of Markov process, including continuous time jump processes \cite{freidlin2012random},  piece-wise deterministic Markov processes \cite{freidlin2012random,newby2012isolating,bressloff2014path}, mixtures \cite{newby2014,newby2015bistable,walker2022numerical}, and spatially extended Markov processes \cite{dembo2009large,maclaurin2023phase}.
We emphasize that the results presented here should be generalized by applying LDT directly to the full process in question and not by employing an intermediate diffusion approximation.
This is particularly true for the case where fast variables are averaged out \cite[ch.\ 7]{freidlin2012random}, since this constrains the resulting MLT and leads to a characterization of the rare event using suboptimal trajectories \cite{newby2011asymptotic,newby2013breakdown,newby2013metastable}. 
A general strategy for applying a large deviation analysis to a given weak-noise Markov process is to start by calculating the Hamiltonian function analogous to Eqn.~\eqref{eq:ham_ldt}. 
The most straightforward way to accomplish this is with a WKB expansion \cite{newby2014}.
Once the Hamiltonian is known, one can utilize numerical methods to solve the Hamilton-Jacobi equation \eqref{eq: Hamiltonian PDE} via \eqref{eq:ham_dyn_gen}.


\appendix
\section{Appendix}
\label{sec:example appendix}
We provide some extra details on the Ornstein-Uhlenbeck process of Section \ref{sec:example}.
It is well-known that $X_t$ is Gaussian \cite[Section 5.6]{karatzas1998brownian}, and hence one can immediately infer that, writing $B_{\delta}(L) \subset \mathbb{R}^2$ to be a closed ball of radius $\delta$ about $L$,
\begin{align}
\epsilon \log \mathbb{P}\big(x_t \in  B_{\delta}(L) \big) = - \frac{1}{4} L^T U_{tt}^{-1} L + o(\epsilon,\delta),
\end{align}
where the corrections terms are negligible in taking $\epsilon \to 0^+$ (first), and then $\delta \to 0^+$ (second). We now claim that the above quadratic form is in fact the cost-function, and that the solution of \eqref{eq:ham_dyn}-\eqref{eq: boundary condition OU example} given by \eqref{eq:MLT_sol}, is the conditional mean (defined just below in \eqref{eq: conditional mean definition}).

We first establish the second statement as follows. 
For any $s\in [0,t]$, define $m_{s|t} \in \mathbb{R}^2$ to be the conditional mean, that is
\begin{align}
m_{s|t} &= \mathbb{E}\left[ x_s \mid x_t = L,\, x_0 = 0 \right] \label{eq: conditional mean definition} = U_{st} U_{tt}^{-1}L  .
\end{align}
One can easily verify that the above expression is consistent with \eqref{eq:MLT_sol}.

Next, we justify the claim that the cost-function assumes the form \eqref{eq: OU cost function}. That is, (for any $L \in \mathbb{R}^2$ and any $t > 0$)
\begin{align} \label{eq: to show cost function}
V(L,t) = \frac{1}{2} L^T  U_{tt}^{-1} L.
\end{align}
To see why this holds, for any finite subset $\pi \subset [0,t]$, define the event
\begin{align}
\mathcal{A}_{\pi, \tilde{\delta}} = \big\lbrace y\in \mathcal{C}([0,t],\mathbb{R}^2) , \;  \sup_{s\in \pi} \| y_s - m_{s|t} \| \geq \tilde{\delta} \big\rbrace .
\end{align}
We can thus conclude that for any $\tilde{\delta} > 0$,
\begin{multline}
\lsup{\delta} \lsup{\epsilon}\epsilon \log \mathbb{P}\big(x_t \in B_{\delta}(L) \big) - \lsup{\delta}\lsup{\epsilon}\epsilon \log \mathbb{P}\big(x_t \in B_{\delta}(L) , x \in \mathcal{A}_{\pi, \tilde{\delta}}  \big) \\
= \frac{1}{2}\sup_{y\in \mathcal{A}_{\pi, \tilde{\delta}} } \sum_{r,s\in \pi} \sum_{a,b \in \lbrace 1,2 \rbrace} y_r(a) y_s(b)  \tilde{V}_{rs}(a,b) > 0.
 \label{eq: conditional mean LDP}
\end{multline}
where $\tilde{V} = \big( \tilde{V}_{rs}(a,b) \big)_{r,s\in \pi, a,b \in \lbrace 1,2 \rbrace} \in \mathbb{R}^{2|\pi| \times 2 | \pi | }$ is the covariance matrix with elements $\big( U_{rs | t}(a,b) \big)_{r,s\in \pi , a,b \in \lbrace 1,2 \rbrace}$. Indeed the RHS of  \eqref{eq: conditional mean LDP} is strictly positive because $\tilde{V}$ must be strictly positive definite (the inverse of a positive definite symmetric matrix is codiagonal with the original matrix, with each eigenvalues the inverse of one of the original eigenvalues).

It follows from \eqref{eq: conditional mean LDP} that 
\begin{align}
\lim_{\delta \to 0^+}\lim_{\epsilon \to 0^+} \epsilon \log \mathbb{P}\big( x_t \in B_{\delta}(L) \big) = \lim_{\delta \to 0^+}\lim_{\epsilon \to 0^+} \epsilon \log \mathbb{P}\big( \sup_{s\in [0,t]}\| x_s - \phi_{s|t} \| \leq \delta \big).
\end{align}
Because the Large Deviations Rate Function $I_T$ is lower-semi-continuous, this implies that
\begin{align}
\inf_{z\in H_1\; : z_t = L}I_T(z) = I_T(\phi).
\end{align}
which immediately implies \eqref{eq: to show cost function}.

It is also known that \cite{karatzas1998brownian} the covariance matrix satisfies the ODE
\begin{align}
\label{eq:cov_tt}
\frac{d}{dt}U_{tt} = A U_{tt} + U_{tt}A^T + 2 \mathbb{I}, \quad U_{00} = 0 \mathbb{I}.
\end{align}
Because the eigenvalues of $A$ have negative real parts, there must exist a unique invariant distribution to the stochastic dynamics \eqref{eq:sde_ou}. This distribution is centered and Gaussian, with covariance matrix $U_* \in \mathbb{R}^{2\times 2}$ satisfying the identity
\begin{align}
\label{eq:cov steady state}
A U_* + U_* A^T + 2\mathbb{I} = 0 .
\end{align}
That is, there is a unique positive-definite symmetric matrix $U_*$ satisfying the above identity, and $U_{tt}$ converges to $U_*$ as $t\to\infty$. We thus find that the quasi-potential is
\begin{align}
V(L) = \frac{1}{2} L^T  U_{*}^{-1}L.
\end{align}

\end{document}